\definecolor{bleu_sombre}{rgb}{0,0,0.6}  \definecolor{rouge_sombre}{rgb}{0.8,0,0}\definecolor{vert_sombre}{rgb}{0,0.6,0}
\theoremstyle{plain}
\newtheorem{theorem}{Theorem}[section] %\sc{Théorème} pour avoir des petites capitales (mais ce n'est plus en gras)
\newtheorem*{theorem*}{Theorem}
\newtheorem{proposition}[theorem]{Proposition}
\newtheorem*{proposition*}{Proposition}
\newtheorem*{corollary*}{Corollary}
\newtheorem{lemma}[theorem]{Lemma}
\theoremstyle{definition}
\newtheorem*{definition*}{Definition}
\theoremstyle{remark}
\newtheorem{remark}[theorem]{Remark}
\newtheorem*{example*}{Example}
\newcommand {\limt}[2]{\xrightarrow[#1 \to #2]{}}
\newcommand{\fonc}[4] { \left\{ \begin{array}{ccc} #1 & \to & #2 \\ #3 & \mapsto & #4 \end{array} \right. }
\newcommand{\abs}[1]{\left\vert #1\right\vert}        % valeur absolue
\newcommand{\nr}[1]{\left\Vert #1\right\Vert}         % norme
\newcommand{\innp}[2]{\left< #1 , #2 \right>}         % produit scalaire (inner product)  
\newcommand{\set}[1]{\left\{ #1 \right\}}		
\renewcommand{\leq}{\leqslant}	\renewcommand{\geq}{\geqslant}
\newcommand{\inv}{^{-1}}
\newcommand{\st}{\,:\,}
\renewcommand{\Re}{\mathsf{Re}}        
\renewcommand{\Im}{\mathsf{Im}}
\newcommand{\seq}[2]{\left({#1}_{#2}\right)_{#2 \in\N}} % suite
\newcommand{\Dom}{\mathsf{Dom}}
\newcommand{\Sp}{\mathsf{Sp}}
\renewcommand{\ker}{\mathsf{ker}}
\newcommand{\R}{\mathbb{R}}	\newcommand{\Q}{\mathbb{Q}}	\newcommand{\C}{\mathbb{C}}
\newcommand{\N}{\mathbb{N}}	\newcommand{\Z}{\mathbb{Z}}	
    \newcommand{\T}{\mathbb{T}}
\renewcommand{\a}{\alpha}\renewcommand{\b}{\beta}\newcommand{\g}{\gamma}\newcommand{\G}{\Gamma}\renewcommand{\d}{\delta}\newcommand{\e}{\varepsilon}\newcommand{\z}{\zeta} \renewcommand{\th}{\theta}\renewcommand{\k}{\kappa}\renewcommand{\l}{\lambda}\renewcommand{\L}{\Lambda}\newcommand{\m}{\mu}\newcommand{\f}{\varphi}\newcommand{\h}{\chi}\renewcommand{\O}{\Omega}
\newcommand{\Hc}{{\mathcal H}}\newcommand{\Jc}{{\mathcal J}}\newcommand{\Nc}{{\mathcal N}}\newcommand{\Oc}{{\mathcal O}}\newcommand{\Tc}{{\mathcal T}}\newcommand{\Vc}{{\mathcal V}}\newcommand{\Zc}{{\mathcal Z}}
\newcommand{\stepp}{\noindent {\bf $\bullet$}\quad }
\DeclareMathOperator{\cotan}{cotan}
\newcommand{\nwc}{\newcommand}
\nwc{\vareps}{\varepsilon}
\nwc{\Oph}{\operatorname{Op}_\hbar}
\nwc{\la}{\langle}
\nwc{\ra}{\rangle}
\nwc{\mf}{\mathbf} %Latex (as in \bf not tilted math letters)
\nwc{\blds}{\boldsymbol} %Latex 
\nwc{\ml}{\mathcal} %Latex
\nwc{\IN}{\mathbb{N}}
\nwc{\IR}{\mathbb{R}}
\nwc{\IZ}{\mathbb{Z}}
\nwc{\IC}{\mathbb{C}}
\nwc{\IT}{\mathbb{T}}
\nwc{\IS}{\mathbb{S}}
\newcommand{\BG}{{\mathsf{BG}}}
\newcommand{\Card}[1]{\# #1}
\newcommand{\NN}{\llbracket 1, N\rrbracket}
\DeclareMathOperator{\sh}{sh}
\renewcommand{\sinh}{\sh}
\begin{document}

\title{Spectrum of a non-selfadjoint quantum star graph}

\author[Gabriel Rivi\`ere]{Gabriel Rivi\`ere}

\address[G. Rivi\`ere]{Laboratoire de math\'ematiques Jean Leray (U.M.R. CNRS 6629), Universit\'e de Nantes, 2 rue de la Houssini\`ere, BP92208, 44322 Nantes C\'edex 3, France.}
\email{gabriel.riviere@univ-nantes.fr}

\author[Julien Royer]{Julien Royer}

\address[J. Royer]{Institut de Math\'ematiques de Toulouse (U.M.R. CNRS 5219), Universit\'e Toulouse 3, 118 route de Narbonne, 31062 Toulouse C\'edex 9,  France.}
\email{julien.royer@math.univ-toulouse.fr }

\begin{abstract} 
We study the spectrum of a quantum star graph with a non-selfadjoint Robin condition at the central vertex. We first prove that, in the high frequency limit, the spectrum of the Robin Laplacian is close to the usual spectrum corresponding to the Kirchhoff condition. Then, we describe more precisely the asymptotics of the difference in terms of the Barra-Gaspard measure of the graph. This measure depends on the arithmetic properties of the lengths of the edges. As a by-product, this analysis provides a Weyl Law for non-selfadjoint quantum star graphs and it gives the asymptotic behaviour of the imaginary parts of the eigenvalues. 
\end{abstract}

\maketitle

\section{Introduction}

\subsection{Quantum star graphs}

We consider a compact metric star graph $\Gamma$. It is defined, for some integer $N \geq 2$, by a set of $N$ edges $e_1,\dots,e_N$ of respective lengths $\ell_1,\dots,\ell_N \in \R_+^*=(0,+\infty)$, and by $(N+1)$ vertices $v,v_1,\dots,v_N$. We set $\ell = (\ell_1,\dots,\ell_N)$. For each $j \in \NN$ (we denote by $\NN$ the set $\{1,2,\dots,N\}$), the edge $e_j$ is identified with the interval $[0,\ell_j]$, the end of $e_j$ parametrized by $0$ is identified with $v_j$ and the end parametrized by $\ell_j$ is identified with $v$. Then all the egdes have a common end (the vertex $v$). 

A quantum star graph is a metric star graph on which we consider a differential operator. One typically considers the Laplacian on each edge with some boundary conditions at the vertices. In this article, we aim at understanding the spectrum of these operators when we impose boundary conditions of \emph{non-selfadjoint} type. More precisely, let $\a \in \C$. We consider the set $\Dom(H_\a)$ of functions $u = (u_j)_{1 \leq j \leq N}$ belonging to the Sobolev space $\mathcal{H}^2(\Gamma)$ (each $u_j$ belongs to $\mathcal{H}^2(0,\ell_j)$, see \eqref{def-HmGamma} below) and satisfying the following conditions (see Figure \ref{fig-graphe}):

\begin{enumerate} [(i)]
\item $u$ vanishes on $v_1,\dots,v_N$: 
\begin{equation} \label{cond-dirichlet}
\forall j \in \NN, \quad  u_j(0) = 0.
\end{equation}
% \begin{subequations} \label{cond-vertex-R}
\item $u$ is continuous at the vertex $v$: 
\begin{equation} \label{cond-continuite}
\forall j,k \in \NN, \quad u_j(\ell_j) = u_k(\ell_k).
\end{equation}
\item At $v$ we have a Robin-type condition on the derivatives of $u$: 
\begin{equation} \label{cond-Kirchhoff}
\sum_{j=1}^N u'_j(\ell_j)  + \a u(v) = 0,
\end{equation}
where $u(v)$ is the common value of $u_j(\ell_j)$, $1 \leq j \leq N$.
% \end{subequations}
\end{enumerate}
Notice that for $\a = 0$ we recover the usual Kirchhoff (or Neumann) condition in \eqref{cond-Kirchhoff}. 
Then, for $\a \in \C$, we define on the domain $\Dom(H_\a)$ the operator $H_\a$ which maps $u = (u_j)_{1 \leq j \leq N}$ to 
\begin{equation} \label{def-Ha}
H_\a u = (-u''_j)_{1 \leq j \leq N}.
\end{equation}

A more complete definition of this operator will be given in Section \ref{sec-def-operator} below. The purpose of this paper is to investigate the spectral properties of $H_\a$. Note that $H_\a$ can also be viewed as a Schr\"odinger type operator with a (possibly complex valued) Dirac potential at the central vertex.

\begin{center}
\begin{figure}[h]
\includegraphics[width = 0.45 \linewidth]{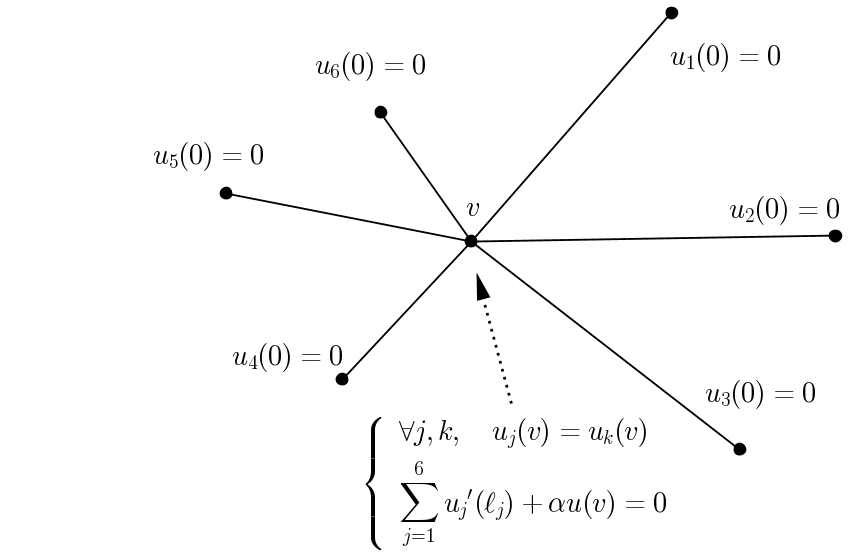}
\caption{The quantum star graph with $N = 6$ edges.}
\label{fig-graphe}
\end{figure}
\end{center}

\subsection{Main results}

It is known (see \cite{Berkolaiko-Kuchment} for a general overview about selfadjoint quantum graphs) that the spectrum of the selfadjoint operator $H_0$ is given by a sequence of positive eigenvalues. We denote them by 
$$
0 < \lambda_1(0)\leq\lambda_2(0)\leq\ldots\leq\lambda_n(0)\leq\ldots
$$
where each eigenvalue is repeated according to its multiplicity. Then, for $n \in \N^*$, we denote by $\tau_n$ the positive square root of $\l_n(0)$. 

We recall (see for instance \cite[Lemma 3.7.4]{Berkolaiko-Kuchment} or Remark \ref{rem-Weyl-H0} below) that we have the following Weyl Law. If for $R_1 < R_2$ we set
$$\mathcal{N}_0(R_1,R_2)= \Card{\{n:R_1<\tau_n\leq R_2\}},$$
then there exists $C_\Gamma>0$ independant of $R_1$ and $R_2$ such that
\begin{equation} \label{eq-Weyl-H0}
\left|\mathcal{N}_0(R_1,R_2)-\frac {\abs\G}\pi (R_2-R_1)\right|\leq C_{\Gamma},
\end{equation}
where we have set
\[
\abs{\G} = \sum_{j=1}^N \ell_j.
\]
This gives in particular 
\begin{equation} \label{eq-Weyl-H0-2}
\frac {\tau_n}{n} \limt n {+\infty} \frac \pi {\abs \G}.
\end{equation}

These properties are also known for any $\a\in\mathbb{R}$, but the corresponding proofs strongly rely on the selfadjointness of $H_\a$. Our first purpose here is to verify that the spectrum of the operator $H_\a$ has in fact similar general properties for any $\a \in \C$. We begin with the nature of the spectrum for a general $\a\in\C$:

\begin{proposition} \label{prop-spectrum-Ha}
For all $\a \in \C$, the spectrum of $H_\a$ consists of infinitely many isolated eigenvalues with finite multiplicities. Moreover, the geometric multiplicity of each eigenvalue coincides with its algebraic multiplicity.
\end{proposition}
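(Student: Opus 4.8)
The plan is to realise $H_\a$ through a sesquilinear form and to exploit that it is a relatively compact perturbation of the selfadjoint operator $H_0$. Consider the form domain
\[
V = \set{u \in \mathcal{H}^1(\Gamma) \st u_j(0) = 0 \text{ for all } j \in \NN,\ u_j(\ell_j) = u_k(\ell_k) \text{ for all } j,k \in \NN},
\]
a closed subspace of $\mathcal{H}^1(\Gamma)$, and the form
\[
q_\a(u,w) = \sum_{j=1}^N \int_0^{\ell_j} u_j' \overline{w_j'}\,dx + \a\, u(v)\overline{w(v)}, \qquad u,w \in V .
\]
First I would record the elementary bound $\abs{u(v)}^2 = 2\,\Re\int_0^{\ell_j}\overline{u_j}\,u_j'\,dx \leq \e\nr{u_j'}_{L^2}^2 + \e^{-1}\nr{u_j}_{L^2}^2$ (using \eqref{cond-dirichlet}), valid for every $\e>0$, so that the term $\a\abs{u(v)}^2$ is $q_0$-bounded with arbitrarily small relative bound. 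Hence $q_\a$ is a closed sectorial form, and by the first representation theorem it defines an m-sectorial operator; an integration by parts identifies this operator with $H_\a$, the Robin condition \eqref{cond-Kirchhoff} appearing as the natural boundary condition at $v$. In particular the resolvent set of $H_\a$ is nonempty.

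Next I would establish compactness of the resolvent. As $\Gamma$ is compact, the embedding $V \hookrightarrow L^2(\Gamma)$ is compact (Rellich's theorem in dimension one), so $(H_\a - z)^{-1}$ is compact for $z$ in the resolvent set. The standard theory of operators with compact resolvent then yields that the spectrum is a discrete set of isolated eigenvalues of finite algebraic, hence finite geometric, multiplicity. This settles the statement except for the infinitude of the spectrum and the equality of the two multiplicities.

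For the infinitude, I would reduce the eigenvalue problem to a scalar characteristic equation: solving $-u_j'' = \tau^2 u_j$ with $u_j(0)=0$ gives $u_j(x)=c_j\sin(\tau x)$, and imposing \eqref{cond-continuite}--\eqref{cond-Kirchhoff} shows, after clearing denominators, that nonzero solutions correspond to the zeros of the entire function
\[
\Phi(\tau) = \tau \sum_{j=1}^N \cos(\tau \ell_j) \prod_{k \neq j} \sin(\tau \ell_k) + \a \prod_{j=1}^N \sin(\tau \ell_j).
\]
This $\Phi$ is a finite exponential sum with frequencies in $\set{\pm\ell_1\pm\dots\pm\ell_N}$ and polynomial coefficients of degree at most one; a direct computation shows that the two extreme frequencies $\pm\abs{\G}$ occur with a coefficient whose leading term is a nonzero multiple of $\tau$. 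Thus $\Phi$ is not of the form $p(\tau)e^{c\tau}$, so as a nonconstant entire function of exponential type carrying more than one frequency it has infinitely many zeros, their counting function growing linearly (consistently with the density $\tfrac{\abs\G}{\pi}$ in \eqref{eq-Weyl-H0}). Checking that all but finitely many of these zeros produce genuine eigenvalues then gives infinitely many of them.

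Finally, for the coincidence of multiplicities, the key structural fact is that $H_\a^{\ast} = H_{\bar\a}$ and that complex conjugation maps $\ker(H_\a - \l)$ bijectively onto $\ker(H_{\bar\a} - \bar\l)$ (equivalently, $H_\a$ is $J$-selfadjoint for $J$ the conjugation). If $u$ were a generalised eigenvector, $(H_\a - \l)u = w$ with $0\neq w \in \ker(H_\a - \l)$, then since $H_\a - \l$ is Fredholm of index zero, solvability forces $w$ to be orthogonal to $\ker(H_{\bar\a} - \bar\l)$; after the conjugation identification this means $w$ lies in the radical of the symmetric bilinear form $B(u,u') = \int_\Gamma u\,u'$ restricted to $\ker(H_\a - \l)$. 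Hence the absence of Jordan blocks is equivalent to the non-degeneracy of $B$ on each eigenspace. I would compute $B$ on the eigenfunctions and relate it to $\Phi$: on a geometrically simple eigenvalue $\tau_0^2$ one finds that $B(w,w)$ is a nonzero multiple of $\Phi'(\tau_0)$, so that non-degeneracy is exactly the simplicity of the corresponding zero of the characteristic function. Establishing this — controlling the orders of the zeros of $\Phi$, and matching them with the (possibly larger) geometric multiplicities arising in the degenerate Dirichlet configurations — is the step I expect to be the main obstacle, and it is where the arithmetic of the lengths $\ell_1,\dots,\ell_N$ enters.
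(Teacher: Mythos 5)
Your form-theoretic realization of $H_\a$ and the compact-resolvent argument coincide with the paper's Propositions \ref{prop-sectorial} and \ref{prop-spectre-discret}. Your route to the infinitude of the spectrum is different: you count zeros of the exponential polynomial $\tau F_N(\tau\ell)+\a F_D(\tau\ell)$ directly, whereas the paper runs a Rouch\'e argument on $\psi_\a(z)=-\sum_j\cotan(z\ell_j)-\a/z$ in disks $D(\tau_n,\g_0\rho_n)$ around the square roots of the eigenvalues of $H_0$ (Proposition \ref{prop-sep-vp}). Your argument is shorter and self-contained for the infinitude claim; the paper's costs more but simultaneously delivers the localization $\abs{z_n(\a)-\tau_n}\leq 8r\rho_n/\tau_n$ on which Proposition \ref{th-Weyl} and Theorem \ref{t:maintheo2} rest.

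The genuine gap is the one you flag yourself: the equality of the multiplicities. Your reduction is correct ($H_\a^*=H_{\bar\a}$, conjugation identifies the two kernels, $H_\a-\l$ is Fredholm of index zero, so there is no Jordan block at $\l$ if and only if the bilinear form $B(u,w)=\int_\G uw$ is non-degenerate on $\ker(H_\a-\l)$), and on a geometrically simple eigenvalue $z^2\notin\Sp(H_\infty)$ one indeed finds $B(u,u)=\tfrac12 u(v)^2\,\psi_\a'(z)$, so non-degeneracy is exactly simplicity of the corresponding zero of the secular function. But that simplicity is not an identity one can extract from a computation of $B$: it is precisely what Section \ref{sec-separation} of the paper proves, and only for $n\geq n_r$, by showing that $\psi_\a$ is a small perturbation of the affine function $(z-\tau_n)/\rho_n$ on $D(\tau_n,\g_0\rho_n)$ (Lemma \ref{lem-g}), which rests on the uniform lower bound $\psi_0'\geq\abs\G>0$. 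Moreover the step can genuinely fail at low frequency: already for $N=2$ and $\ell_1=\ell_2=1$, any nonzero solution of $\sin(2z)=2z$ (there are infinitely many, all nonreal) gives, with $\a=-2z\cotan(z)$, a double zero of $\psi_\a$ and an explicitly constructible generalized eigenvector, hence a Jordan block. So your ``main obstacle'' cannot be resolved by inspecting the arithmetic of the lengths; it requires the quantitative high-frequency separation of the zeros of $\psi_\a$ from those of $F_D$, and for the finitely many low-lying eigenvalues it is in fact the delicate point of the statement. By contrast, the degenerate Dirichlet configurations you single out as problematic are harmless: there the eigenspace is $\set{(\b_j)_{j\in J}\st\sum_{j\in J}\b_j\cos(\tau\ell_j)=0}$ and $B$ restricts to $\sum_{j\in J}\ell_j\b_j^2/2$, whose radical on that hyperplane is trivial because $\cos(\tau\ell_j)=\pm1$ for $j\in J$ and $\sum_{j\in J}\ell_j^{-1}\neq 0$.
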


We refer to \cite{Hussein14,HusseinKreSie15} for the first spectral properties of general non-selfadjoint Laplace operators on graphs, in particular with maximal sectorial boundary conditions.

Then we turn to the Weyl Law for $H_\a$. The eigenvalues of $H_\a$ are denoted by $\lambda_n(\alpha)$, $n \in \N^*$. They are repeated according to their algebraic multiplicities and ordered with the convention that $$\forall n\in \N^*,\quad\text{Re}(\lambda_n(\alpha))\leq\text{Re}(\lambda_{n+1}(\alpha)).$$
Then, for $n \in \N^*$, we denote by $z_n(\a)$ a square root of $\l_n(\a)$ with non-negative real part. In particular, $z_n(0) = \tau_n$.\\

For $R_1,R_2 \in \R$ with $R_1 < R_2$ we set 
$$
\mathcal{N}_\a(R_1,R_2)= \Card{\{n:R_1< \Re(z_n(\a)) \leq R_2\}}.
$$
Note that the $z_n(\a)$, $n\in\N^*$, are not necessarily ordered by non-decreasing real parts, but we will see that this is in fact the case for $n$ large enough. The next result shows that we recover for any $\a \in \C$ the same Wayl Law as in the selfadjoint case. For $r > 0$, we denote by $D(r)$ the closed disk centered at 0 and with radius $r$ in $\C$.

\begin{proposition}[Weyl Law] \label{th-Weyl}
Let $r > 0$. Then there exists $C > 0$ such that for $\a \in D(r)$ and $R_1,R_2 \in \R$ with $R_1 < R_2$, we have 
\[
\abs{\mathcal{N}_\a (R_1,R_2)-\frac {\abs\G}\pi (R_2-R_1)}\leq C.
\]
\end{proposition}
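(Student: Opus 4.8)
The plan is to reduce the Weyl Law to counting the zeros of an explicit entire \emph{secular function} and then to estimate that count by the argument principle. Since $\l = z^2$ with $\Re z \ge 0$ is a (nonzero) eigenvalue of $H_\a$ exactly when $-u_j'' = z^2 u_j$ admits a nontrivial solution on each edge satisfying \eqref{cond-dirichlet}--\eqref{cond-Kirchhoff}, I first solve the edge equations: the Dirichlet condition $u_j(0)=0$ forces $u_j(x) = c_j \sin(zx)$, and \eqref{cond-continuite}--\eqref{cond-Kirchhoff} become a linear system in $(c_j)_{1\le j\le N}$ whose solvability is governed by
\begin{equation} \label{eq-secular-plan}
f_\a(z) := z\sum_{j=1}^N \cos(z\ell_j)\prod_{k \neq j}\sin(z\ell_k) + \a \prod_{j=1}^N \sin(z\ell_j) = 0 .
\end{equation}
Here $f_\a$ is entire, and, using Proposition \ref{prop-spectrum-Ha} to identify geometric and algebraic multiplicities, the algebraic multiplicity of the eigenvalue $z^2$ equals the order of $z$ as a zero of $f_\a$. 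From $\cos(-w)=\cos w$ and $\sin(-w)=-\sin w$ one checks $f_\a(-z)=(-1)^N f_\a(z)$, so the nonzero zeros come in pairs $\pm z$ and the normalization $\Re z_n(\a)\ge 0$ makes $\mathcal N_\a(R_1,R_2)$ well defined; it equals the number of zeros of $f_\a$, counted with multiplicity, in the strip $R_1 < \Re z \le R_2$. Since $\Re z_n(\a)\ge 0$, I may also assume $R_1 > 0$, the remaining cases being absorbed into the constant.

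Next I localize the zeros to a fixed horizontal strip. Writing $\sin$ and $\cos$ in exponentials, as $\Im z \to +\infty$ the factor $e^{-iz\abs\G}$ dominates every summand and $f_\a(z) \sim (\tfrac{-1}{2i})^{N-1}\big(\tfrac N2 z - \tfrac{\a}{2i}\big) e^{-iz\abs\G}$, the subleading exponentials being smaller by a factor $e^{-2\ell_{\min}\Im z}$ (with $\ell_{\min}=\min_j \ell_j$) and the $\a$-contribution smaller by $O(1/\abs z)$. Hence there is $M = M(r) > 0$, uniform for $\a \in D(r)$, such that $f_\a$ has no zero with $\abs{\Im z} \ge M$; in particular every $z_n(\a)$ lies in $\abs{\Im z} \le M$. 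Consequently $\mathcal N_\a(R_1,R_2)$ equals the number of zeros of $f_\a$ in the rectangle $\Rc = \{R_1 < \Re z \le R_2,\ \abs{\Im z}\le M\}$, and a normal-families argument shows that the number of zeros with $\Re z \le R_0$ (for a fixed $R_0$) is bounded uniformly in $\a\in D(r)$, which only affects the constant.

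I then evaluate $\mathcal N_\a(R_1,R_2) = \tfrac{1}{2\pi}\,\Delta_{\partial\Rc}\arg f_\a$ by the argument principle. On the horizontal sides $\Im z = \pm M$ I factor out the dominant exponential: there $f_\a(z) = c_\pm\, z\, e^{\mp i z \abs\G}\big(1+o(1)\big)$ as $\Re z \to +\infty$, uniformly in $\a\in D(r)$. Along $\Im z = \pm M$ the modulus of $e^{\mp i z\abs\G}$ is constant while its argument equals $\mp(\Re z)\abs\G$; traversing the two horizontal sides with the correct orientation, this phase produces a total argument increment $2(R_2-R_1)\abs\G$, i.e. the main term $\tfrac{\abs\G}{\pi}(R_2-R_1)$, while the slowly varying prefactor $c_\pm z\,(1+o(1))$ contributes only $O(1)$.

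The main obstacle is the uniform $O(1)$ control of $\Delta\arg f_\a$ along the two vertical sides $\Re z = R_i$. The useful point is that on a vertical segment the dominant exponential $e^{\mp i z\abs\G}$ has \emph{constant} argument (it depends only on $\Re z$), so the winding of $f_\a$ comes entirely from the bounded-height region near the real axis, where the two exponentials $e^{\pm i z\abs\G}$ compete. I would bound this winding by the number of zeros of $f_\a$ in a disk of the fixed radius $\sim M$ centered on the real axis, and prove that this number is bounded uniformly in the center and in $\a\in D(r)$ by a Jensen/Rouché estimate based on the exponential type $\abs\G$ of $f_\a$ (the count is $O(M\abs\G)=O(1)$ since $M$ is fixed). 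Finally, to keep $\arg f_\a$ well defined I perturb $R_1,R_2$ by $O(1)$ to avoid the discrete zeros on the vertical lines; such a perturbation changes both $\mathcal N_\a(R_1,R_2)$ and $\tfrac{\abs\G}{\pi}(R_2-R_1)$ by $O(1)$, so it is harmless. Collecting the horizontal main term together with the $O(1)$ vertical and small-$\Re z$ contributions yields the stated uniform bound.
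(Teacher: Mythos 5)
Your proposal is correct in outline but follows a genuinely different route from the paper. The paper never applies the argument principle to the secular determinant: it first proves (Proposition \ref{prop-sep-vp}, completed by Proposition \ref{prop-zna}) that for $n\geq n_r$ the spectrum of $H_\a$ is in bijection with that of $H_0$, each $z_n(\a)$ being trapped by a Rouch\'e argument for $\psi_\a=\Psi(z\ell)-\a/z$ in a disk around $\tau_n$ containing no other $\tau_m$, with Lemma \ref{lem-min-rho} ruling out eigenvalues escaping to or arriving from infinity; the Weyl law for $H_\a$ is then inherited verbatim from the selfadjoint one \eqref{eq-Weyl-H0}, itself obtained by interlacing with the Dirichlet spectrum (Remark \ref{rem-Weyl-H0}). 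Your contour-integration argument (Langer--Backlund style: the extreme exponentials $e^{\mp iz\abs\G}$ on the horizontal sides produce the main term $\frac{\abs\G}{\pi}(R_2-R_1)$, while Jensen-type estimates control the vertical sides) is classical and self-contained, and it yields the Weyl law without constructing the pairing with $\Sp(H_0)$; on the other hand it gives none of the finer information (the estimate $\abs{z_n(\a)-\tau_n}\lesssim\rho_n/\tau_n$, simplicity, analyticity in $\a$) that the paper extracts from essentially the same amount of work and needs for Theorem \ref{t:maintheo2}.

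Two steps need more care than you give them. First, you count zeros of $f_\a$ with their order as zeros of an analytic function, whereas $\mathcal N_\a$ is defined through algebraic multiplicities of eigenvalues. Proposition \ref{prop-spectrum-Ha} identifies geometric and algebraic multiplicities but says nothing about the order of vanishing of the determinant, and in general $\dim\ker A$ and the order of the zero of $\det A$ are different integers. You must actually prove that the order of $z_0\neq 0$ as a zero of $f_\a$ equals the algebraic multiplicity of $z_0^2$ --- for instance by computing $\operatorname{tr}\frac1{2\pi i}\oint(H_\a-\l)^{-1}\,d\l$ from the explicit Green's function whose denominator is $f_\a$, or by checking directly that at $\tau\in\Tc_\infty$ with $\dim\ker(H_\infty-\tau^2)=m$ the order is exactly $m-1$ (matching Lemma \ref{l:multiplicity}) and that the remaining zeros are simple for $\Re z$ large, which is essentially the content of Proposition \ref{prop-sep-vp}. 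Second, the $O(1)$ bound on the vertical sides is only sketched; the standard route is to bound the variation of $\arg f_\a$ on the segment by $\pi$ times one plus the number of sign changes of $\Re(e^{i\phi}f_\a)$, and to control that number by Jensen's formula in a disk anchored at a point $R_i+iM'$ where the lower bound $\abs{f_\a}\gtrsim \abs z\,e^{M'\abs\G}$ holds. This is routine, but it is precisely where the uniformity in $\a\in D(r)$ and in $R_1,R_2$ must be checked.
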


The proof of Proposition~\ref{th-Weyl} will rely on the fact that for large $n$ the eigenvalue $\l_n(\a)$ is in fact, in a suitable sense, close to $\l_n(0)$. In the following, we aim at describing more precisely the asymptotic behaviour of the difference between $\l_n(0)$ and $\l_n(\a)$. For $n \in \N^*$, we set 
\begin{equation} \label{def-delta}
\d_n(\a) = \l_n(\a) - \l_n(0).
\end{equation}

% 
% 
% \begin{theorem}\label{t:maintheo1} Let $r>0$. There exists $n_r \in \N^*$ such that the following holds for every $\a \in D(r)$ and $n\geq n_r$:
% \begin{enumerate} [\rm (i)]
% \item if $\lambda_n(0) = \pi^2k^2/\ell_j^2$ for some $k \in \N^*$ and $1\leq j\leq N$, then $\lambda_n(\alpha)=\lambda_n(0)$;
% \item otherwise, $\l_n(0)$ is a simple eigenvalue of $H_0$ and
% \[
% \abs{\d_n(\a)} < \min \big(\l_{n+1}(0)-\l_n(0), \l_n(0) - \l_{n-1}(0)\big).
% \]
% \end{enumerate}
% \end{theorem}
% 
% 
% We observe that by the Weyl Law \eqref{eq-Weyl-H0}, the sequence $(\d_n(\a))_{n \in \N^*}$ is bounded. Moreover, we will see that the first case never occurs if $\k \cdot \ell \neq 0$ for all $\k \in \Z^N \setminus \set 0$.
% 

Our main result is the following description of the asymptotics of $\d_n(\a)$:

\begin{theorem}\label{t:maintheo2} 
There exists a probability measure $\mu_\ell$ supported in $[0,2|\Gamma|^{-1}]$ such that, for every $\alpha\in\IC$ and for every continuous function $f$ on $\IC$,
$$
\frac 1 n \sum_{k=1}^n f(\d_k(\alpha)) \limt n {+\infty} \int_{0}^{\frac{2}{|\Gamma|}}f(s\alpha) \, d\mu_\ell(s).
$$
Moreover, 
\begin{enumerate} [\rm (i)]
\item if $\ell_j / \ell_1 \in \Q$ for all $j \in \NN$, then $\mu_\ell$ is a linear combination of Dirac masses, one of them being carried by $s=0$;
\item if $\k \cdot \ell \neq 0$ for all $\k \in \Z^N \setminus \set 0$, then the measure $\mu_\ell$ is absolutely continuous with 
respect to the Lebesgue measure and its support is exactly $[0,2|\Gamma|^{-1}]$.
\end{enumerate}
\end{theorem}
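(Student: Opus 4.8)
The plan is to reduce the statement to an equidistribution problem on a torus, the scaling variable $s$ being read off from a first–order perturbation of the secular equation. Seeking $-u_j''=z^2u_j$ with $u_j(0)=0$ forces $u_j(x)=a_j\sin(zx)$, and after imposing \eqref{cond-continuite}--\eqref{cond-Kirchhoff} one checks that the eigenvalues $\lambda=z^2$ for which $\sin(z\ell_j)\neq0$ on every edge are exactly the solutions of
\[
\sum_{j=1}^N\cot(z\ell_j)=-\frac{\alpha}{z},
\]
while the remaining (``Dirichlet'') eigenvalues, for which $\sin(z\ell_j)=0$ on at least two edges, do not involve $\alpha$ at all and hence satisfy $\d_n(\alpha)=0$. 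Writing $G(z)=\sum_j\cot(z\ell_j)$, so that $G(\tau_n)=0$, a first–order expansion of $G(z_n)=-\alpha/z_n$ about $z=\tau_n$ gives $z_n-\tau_n=-\alpha\big(\tau_n G'(\tau_n)\big)^{-1}+\dots$ with $G'(\tau_n)=-\sum_j\ell_j/\sin^2(\tau_n\ell_j)$, whence
\[
\d_n(\alpha)=z_n^2-\tau_n^2=s_n\,\alpha+o(1),\qquad s_n:=\frac{2}{\sum_{j=1}^N \ell_j/\sin^2(\tau_n\ell_j)}\in\big(0,2\abs{\G}^{-1}\big].
\]
Setting $g(\theta)=2\big(\sum_j\ell_j/\sin^2\theta_j\big)^{-1}$ on the torus $\T^N=(\R/\pi\Z)^N$, with $g=0$ where some $\sin\theta_j=0$, one has $s_n=g(\theta_n)$ for every $n$, where $\theta_n:=(\tau_n\ell_1,\dots,\tau_n\ell_N)\bmod\pi$.

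Next I would establish the equidistribution of the phases $\theta_n$. The $\tau_n$ are precisely the values of $z$ at which the line $z\mapsto z\ell\bmod\pi$ crosses the secular hypersurface $\Sigma=\{\theta:\Phi(\theta)=0\}$, $\Phi(\theta)=\sum_j\cot\theta_j$ (completed by the Dirichlet coincidence set), and by \eqref{eq-Weyl-H0} these crossings occur at asymptotic rate $\abs{\G}/\pi$. Following the Barra--Gaspard analysis of the linear flow on the torus, the empirical measures $\frac1n\sum_{k\le n}\delta_{\theta_k}$ converge weakly to a probability measure $\nu_\ell$ on $\T^N$, the Barra--Gaspard measure; I would then define $\mu_\ell:=g_*\nu_\ell$, which is supported in $[0,2\abs{\G}^{-1}]$ because $0<g\le 2\abs{\G}^{-1}$. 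Since $f$ is continuous and all the $\d_k(\alpha)$ and $s_k\alpha$ remain in a fixed compact set, uniform continuity of $f$ together with $\d_k(\alpha)-s_k\alpha\to0$ lets me replace $\frac1n\sum_{k\le n}f(\d_k(\alpha))$ by $\frac1n\sum_{k\le n}f\big(g(\theta_k)\alpha\big)$, which converges to $\int f(g(\theta)\alpha)\,d\nu_\ell(\theta)=\int_0^{2\abs{\G}^{-1}}f(s\alpha)\,d\mu_\ell(s)$.

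For the dichotomy I would analyse $\nu_\ell$. In case (i) all ratios $\ell_j/\ell_1$ are rational, so the orbit $z\mapsto z\ell\bmod\pi$ is a closed curve; the phases $\theta_k$ take finitely many values, $\nu_\ell$ is a finite combination of Dirac masses, and so is $\mu_\ell=g_*\nu_\ell$. Moreover commensurability produces Dirichlet eigenvalues (two edges with $z\ell_j,z\ell_k\in\pi\Z$) of positive density, all with $\d_n(\alpha)=0$, which yields the atom at $s=0$. In case (ii), $\kappa\cdot\ell\neq0$ for all $\kappa\in\Z^N\setminus\{0\}$, so the flow is uniquely ergodic and $\nu_\ell$ is the flux measure $\tfrac1Z|\ell\cdot\nabla\Phi|\,d\mathcal{H}^{N-1}|_{\Sigma}$, absolutely continuous on the analytic hypersurface $\Sigma$ with positive density and carrying no mass on $\{\sin\theta_j=0\}$ (no Dirichlet eigenvalues occur). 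To deduce that $\mu_\ell=g_*\nu_\ell$ is absolutely continuous I would check, using real-analyticity of $g$ and of $\Sigma$, that the tangential gradient of $g|_\Sigma$ is nonzero $\nu_\ell$-almost everywhere, so that the coarea formula yields a finite density. Finally, $s=2\abs{\G}^{-1}$ is attained at $\theta=(\tfrac\pi2,\dots,\tfrac\pi2)\in\Sigma$, while $g\to0$ as $\Sigma$ approaches $\{\sin\theta_j=0\}$; by the intermediate value theorem along a path in $\Sigma$ joining these regimes $g(\Sigma)\supseteq(0,2\abs{\G}^{-1}]$, and analyticity of the density upgrades this to full support $[0,2\abs{\G}^{-1}]$.

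The main obstacle is the uniform perturbative reduction $\d_k(\alpha)-s_k\alpha\to0$ near the degenerate phases where some $\sin(\tau_k\ell_j)$ is small: there $G'(\tau_k)$ blows up, the Taylor remainder is not obviously negligible, and one must also justify that the $\Re(\lambda_n(\alpha))$-ordering pairs $z_n(\alpha)$ with $\tau_n$ for all large $n$ (a Rouch\'e/continuity argument deforming $\alpha$ to $0$, consistent with Proposition~\ref{th-Weyl}). I expect that controlling these remainders uniformly — showing that $\d_k$ and $s_k$ are simultaneously small near the Dirichlet set, so that their difference still tends to $0$ — is the technical heart of the argument, the equidistribution and the regularity of the pushforward $g_*\nu_\ell$ being the main structural inputs.
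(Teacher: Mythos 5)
Your proposal follows essentially the same route as the paper: the same secular function and first--order expansion giving $\delta_n(\alpha)\approx 2\alpha\rho_n$ with $\rho_n=1/\psi_0'(\tau_n)$ (your $s_n$), the same pushforward of the Barra--Gaspard measure by $y\mapsto 2(\nabla_y\Psi\cdot\ell)^{-1}$ (extended by $0$ on the Dirichlet strata, which produces the atom at $s=0$), and the same coarea and intermediate--value arguments for case (ii), including the observation that the tangential gradient degenerates only at $y_j=\pi/2 \bmod \pi$. The one step you explicitly leave open --- uniform control of the Taylor remainder and of the pairing $z_n(\alpha)\leftrightarrow\tau_n$ near the Dirichlet set --- is indeed the technical heart, and the paper closes it exactly along the lines you anticipate: the key estimate \eqref{estim-rau-n} of Lemma \ref{lem-g} gives $0<\rho_n\leq\min_j\sin(\tau_n\ell_j)^2/\ell_j$, so the disk $D(\tau_n,\gamma_0\rho_n)$ avoids $\Tc_\infty$ and $\abs{\psi_0''}\lesssim\rho_n^{-3/2}$ there; since the Rouch\'e disks shrink proportionally to $\rho_n$, the remainder stays controlled and Proposition \ref{prop-sep-vp} localizes $z_n(\alpha)$ in $D(\tau_n,8r\rho_n/\tau_n)$, whence Lemma \ref{lem-delta} gives $\delta_n(\alpha)=2\alpha\rho_n+\Oc(n^{-1})$ uniformly. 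The global pairing (no eigenvalue escaping to or arriving from infinity) is settled in Proposition \ref{prop-zna}, which in addition requires $\limsup_n\rho_n>0$ (Lemma \ref{lem-min-rho}) so that the relevant region has only bounded connected components. Finally, for a non-generic $\ell$ the equidistribution step is not just ``the Barra--Gaspard analysis'': one must work on the subtorus $\T_\ell$, stratify $\Zc$, and discard the $\e$-neighbourhood of the lower-dimensional strata via Lemma \ref{lem-Z-eps}; this goes through but is a genuine part of the argument that your sketch compresses.
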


Note that $\Im(\l_n(\a)) = \Im(\d_n(\a))$, so the asymptotic properties of $\d_n(\a)$ give in particular the asymptotic properties of the imaginary parts of the eigenvalues of $H_\a$.\\

The measure which appears in Theorem \ref{t:maintheo2} will be defined in terms of the measure introduced by Barra and Gaspard in \cite{barrag00} to describe the asymptotics of $\l_{n+1}(0) - \l_n(0)$. It is in fact more specifically connected to the measure appearing in the related result of Keating, Marklof and Winn on spectral determinants~\cite[Th.~3]{KeatingMarWin03}. As in this reference, it turns out that the size of $\d_n(\a)$ is in fact closely related to the distance between $\l_n(0)$ and the rest of the spectrum of $H_0$. In particular we have the following estimate:

\begin{proposition} \label{prop-eigenvalue-proche}
Let $r > 0$. Then there exists $n_r \in \N^*$ such that for all $\a \in D(r)$ and $n \geq n_r$ we have 
\[
\abs{\d_n(\a)} \leq  \mathsf{dist} \big( \l_n(0), \Sp(H_0) \setminus \set{\l_n(0)} \big).
\]
\end{proposition}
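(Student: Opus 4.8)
The plan is to reduce to the secular equation of the star graph and then to trap $\l_n(\a)$ in a disk around $\l_n(0)$ whose radius is $o(\mathsf{gap}_n)$. Solving $-u_j''=\l u_j$ with $u_j(0)=0$ and imposing \eqref{cond-continuite}--\eqref{cond-Kirchhoff}, one checks that an eigenvalue $\l=z^2$ of $H_\a$ whose eigenfunction does not vanish at the central vertex solves
\[
g(\l):=\sqrt\l\,\sum_{j=1}^N\cot\!\big(\sqrt\l\,\ell_j\big)=-\a,
\]
whereas the eigenvalues whose eigenfunctions vanish at $v$ sit at the poles of $g$, are common to all the $H_\a$, and are independent of $\a$. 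If $\l_n(0)$ is of the latter type then $\d_n(\a)=0$ and there is nothing to prove; so I assume $\l_n(0)$ is a simple zero of $g$, for which $g'(\l_n(0))=-\tfrac12 V_n$ with $V_n:=\sum_j\ell_j\sin^{-2}(\tau_n\ell_j)\ge|\Gamma|$. I set $c_n:=2/V_n=1/\abs{g'(\l_n(0))}\in(0,2|\Gamma|^{-1}]$; this is exactly $\abs{\phi_n(v)}^2$ for the normalised eigenfunction, i.e.\ the Barra--Gaspard weight of Theorem~\ref{t:maintheo2}, and it encodes the leading behaviour $\d_n(\a)\approx c_n\a$.

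Next I would localise $\l_n(\a)$ by Rouché's theorem. Fix $K>2$, put $\rho_n:=Krc_n$ and let $C_n=\partial B_n$ be the boundary of $B_n:=\set{\l\st\abs{\l-\l_n(0)}\le\rho_n}$. Writing $d_n$ for the distance in the $z$-variable from $\tau_n$ to the nearest vertex-vanishing point, the nearest pole of $g$ lies at distance $\gtrsim\tau_n d_n\gg\rho_n$ from $\l_n(0)$, so $g$ is holomorphic with a single simple zero on a disk vastly larger than $B_n$, and the expected bound is $\abs{g(\l)}\ge\tfrac14 V_n\abs{\l-\l_n(0)}$ for $\l\in B_n$. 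On $C_n$ this gives $\abs{g}\ge\tfrac14 V_n\rho_n=\tfrac{Kr}{2}>r\ge\abs\a$. As $g+\a$ and $g$ are meromorphic and differ by the constant $\a$, Rouché shows that $g+\a$ has exactly one zero and no pole in $B_n$; by Proposition~\ref{prop-spectrum-Ha} this corresponds to the unique eigenvalue of $H_\a$ in $B_n$, so $\abs{\d_n(\a)}\le\rho_n=Krc_n$. That it carries the index $n$ follows from the Weyl Law (Proposition~\ref{th-Weyl}) together with the pairwise disjointness of the $B_n$ for large $n$, which produces an order-preserving bijection between the high-lying eigenvalues of $H_0$ and of $H_\a$.

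It remains to compare $\rho_n$ with $\mathsf{gap}_n:=\mathsf{dist}(\l_n(0),\Sp(H_0)\setminus\set{\l_n(0)})$. Let $j_0$ realise $d_n$; then $\abs{\sin(\tau_n\ell_{j_0})}\le\ell_{j_0}d_n$, so $V_n\ge(\ell_{\max}d_n^2)^{-1}$ and
\[
c_n=\frac2{V_n}\le 2\ell_{\max}\,d_n^2.
\]
On the other hand every eigenvalue of $H_0$ distinct from $\l_n(0)$ corresponds, in the $z$-variable, to a point at distance $\ge d_n$ from $\tau_n$ (consecutive zeros of $g$ are separated by a pole, and multiple-resonance eigenvalues sit at poles of $g$); hence for $n$ large
\[
\mathsf{gap}_n\ge \tau_n d_n.
\]
Combining, $\rho_n\le 2Kr\ell_{\max}d_n^2\le(2Kr\ell_{\max}d_n/\tau_n)\,\mathsf{gap}_n$, and since $d_n$ stays bounded while $\tau_n\to+\infty$, the prefactor tends to $0$ uniformly for $\a\in D(r)$. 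Thus there is $n_r$ with $\rho_n\le\mathsf{gap}_n$ for all $n\ge n_r$, which yields $\abs{\d_n(\a)}\le\mathsf{gap}_n$.

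The hard part is the uniform lower bound $\abs{g(\l)}\ge\tfrac14 V_n\abs{\l-\l_n(0)}$ on $C_n$: one must show that the simple-zero behaviour $g(\l)\approx g'(\l_n(0))(\l-\l_n(0))$ persists, with constants independent of $n$ and of $\a\in D(r)$, throughout a disk whose radius $\rho_n\asymp c_n$ may be very small. This amounts to controlling $g$ (equivalently the resolvent diagonal $-1/g$ with its pole at $\l_n(0)$ removed) uniformly away from the real axis and away from the neighbouring vertex-vanishing points; what makes it work is precisely the scale separation $\rho_n/(\tau_n d_n)=O(1/\tau_n)$, which lets Cauchy estimates on a mesoscopic disk of radius $\sim\tau_n d_n$ dominate the quadratic Taylor remainder of $g$. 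Once this quantitative separation is secured, the residual bookkeeping---the global index matching via the Weyl Law and the treatment of the commensurable case, where genuine vertex-vanishing eigenvalues appear---is routine.
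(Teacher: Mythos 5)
Your local analysis is sound and essentially parallels the paper's: you work with the secular function in the $\l$-variable where the paper works with $\psi_\a(z)=-\sum_j\cotan(z\ell_j)-\a/z$ in the $z$-variable, but the mechanism is the same --- a Rouch\'e-type localization in a disk of radius comparable to $\rho_n=1/\psi_0'(\tau_n)$ (your $c_n=2\rho_n$), followed by the observation that $\rho_n\lesssim d_n^2$ while the gap to the rest of $\Sp(H_0)$ is $\gtrsim\tau_n d_n$, with $d_n=\mathsf{dist}(\tau_n,\Tc_\infty)$ bounded and $\tau_n\to+\infty$. Your Taylor-remainder bound for $g$ on $B_n$ can indeed be made uniform exactly as in the paper's Lemma \ref{lem-g} (where $\abs{\sin(z\ell_j)}\geq\tfrac12\sqrt{\ell_j\rho_n}$ controls $\psi_0''$ on the relevant disks), and your identity $c_n=\abs{\phi_n(v)}^2$ is correct.

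The genuine gap is the index matching, which you dismiss as routine but which is a substantive part of the argument. Your Rouch\'e step shows that each $B_n$ contains exactly one eigenvalue of $H_\a$; it does \emph{not} show that $H_\a$ has no eigenvalues outside $\bigcup_n B_n$ and the Dirichlet-type points, and without that you cannot conclude that the eigenvalue trapped in $B_n$ is $\l_n(\a)$, i.e.\ the $n$-th eigenvalue in the ordering by real parts. A single stray eigenvalue would shift all subsequent indices, and since $\mathsf{dist}(\l_n(0),\Sp(H_0)\setminus\{\l_n(0)\})$ can be arbitrarily small along subsequences, even a bounded index shift destroys the inequality. Invoking Proposition \ref{th-Weyl} here is circular --- in the paper that Weyl Law for $H_\a$ is itself a consequence of the full enumeration (Proposition \ref{prop-zna}) --- and even granted, an $O(1)$ bound on counting functions does not exclude index shifts. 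The paper closes this gap with two ingredients you do not supply: the a priori confinement of $\Sp(H_\a)$ to a half-strip (Proposition \ref{prop-strip}), and the fact that $\limsup_{n\in\Nc}\rho_n>0$ (Lemma \ref{lem-min-rho}, proved via the Barra--Gaspard ergodic averaging), which guarantees that the complement $\O_r$ of the exclusion annuli has only \emph{bounded} connected components, so that Kato's continuity of the spectrum along the homotopy from $\a=0$ gives an exact count in each component with no eigenvalue escaping to or arriving from infinity. You would need to add an argument of this type to make the proof complete.
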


With the Weyl Law \eqref{eq-Weyl-H0}, this implies in particular that the sequence $(\d_n(\a))_{n \in \N^*}$ is bounded (uniformly in $\a \in D(r)$).\\

In the proofs of Proposition \ref{th-Weyl} and Theorem \ref{t:maintheo2}, we will be more precise about the properties of $\d_n(\a)$ in different directions, which we describe in the next results. The following result is already known for $\a$ real:

\begin{proposition} \label{prop-eigenvalue-Dir-alpha}
Assume that there exist $j,k \in \NN$ such that $j \neq k$ and $\ell_j / \ell_k \in \Q$. Let 
\[
\tau \in \frac {\pi \Z}{\ell_j} \cap \frac {\pi \Z}{\ell_k}.
\]
Then $\tau^2$ is an eigenvalue of $H_\a$ for any $\a \in \C$. Moreover, its multiplicity does not depend on $\a$.
\end{proposition}

As we shall see in the proof of Theorem \ref{t:maintheo2}, $0$ is always in the support of the measure $\mu_\ell$. This means that one can 
always find a subsequence of eigenvalues $(\lambda_{n_k}(\a))_{k\geq 1}$ such that $n_k \to +\infty$ and $\d_{n_k}(\alpha)\rightarrow 0.$
We will in fact be slightly more precise on that issue:

\begin{proposition}\label{t:maintheo3} Let $\alpha\in\IC$. 
Then, there exists an increasing sequence $(n_k)_{k \in \N^*}$ in $\N^*$ such that
$$\d_{n_k}(\alpha)= \mathop{\mathcal{O}}_{k \to +\infty} \left( \frac 1 {\l_{n_k}(0)} \right).$$
\end{proposition}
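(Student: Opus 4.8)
The plan is to read $\d_n(\a)$ off the secular equation of $H_\a$ and to observe that $\d_n(\a)$ is small precisely when $\tau_n$ sits close to a Dirichlet threshold $\pi\ell_j^{-1}\Z$ of one of the edges. Such near-thresholds are produced along a subsequence by Diophantine approximation, which is exactly the mechanism behind the fact that $0\in\supp\mu_\ell$.

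First I recall that, away from the thresholds $\tau\in\bigcup_j \pi\ell_j^{-1}\Z$, the nontrivial eigenvalues of $H_\a$ are the squares of the solutions $z$ of
$$zF(z)=-\a,\qquad F(z)=\sum_{j=1}^N \cot(z\ell_j),$$
so that the $\tau_n$ are the zeros of $F$. Differentiating gives
$$F'(\tau)=-\sum_{j=1}^N \frac{\ell_j}{\sin^2(\tau\ell_j)},\qquad \abs{F'(\tau)}\geq\sum_{j=1}^N\ell_j=\abs{\G},$$
so $F$ is strictly monotone between consecutive thresholds (this is the usual one-zero-per-gap interlacing). The key observation is that $\abs{F'(\tau_n)}$ is \emph{large} exactly when some $\sin(\tau_n\ell_j)$ is small, i.e.\ when $\tau_n$ is close to a threshold.

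The second (and main) step is to convert ``$\abs{F'(\tau_n)}$ large'' into ``$\d_n(\a)$ small''. Setting $g(z)=zF(z)$, one has $g(\tau_n)=0$ and $g'(\tau_n)=\tau_nF'(\tau_n)$, and solving $g(z)=-\a$ near $\tau_n$ by a quantitative inverse function / Rouché argument on a disc around $\tau_n$ yields
$$\abs{z_n(\a)-\tau_n}\lesssim\frac{\abs\a}{\abs{\tau_nF'(\tau_n)}},\qquad \abs{\d_n(\a)}=\abs{z_n(\a)^2-\tau_n^2}\lesssim\frac{\abs\a}{\abs{F'(\tau_n)}}\lesssim \abs\a\min_{1\leq j\leq N}\sin^2(\tau_n\ell_j).$$
\textbf{The delicate point} is precisely here: near a threshold both $F'$ and $F''$ blow up like inverse powers of $\sin(\tau_n\ell_j)$, so one must check that the quadratic remainder stays subdominant. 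This holds because the radius $\abs{z_n(\a)-\tau_n}$ one actually needs is itself much smaller than the distance from $\tau_n$ to the nearest thresholds, so the disc on which we invert $g$ stays well away from the poles.

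It remains to produce a subsequence with $\min_j\abs{\sin(\tau_{n_k}\ell_j)}\lesssim\tau_{n_k}^{-1}$. If $\ell_j/\ell_k\in\Q$ for some $j\neq k$, Proposition~\ref{prop-eigenvalue-Dir-alpha} gives infinitely many common Dirichlet eigenvalues $\tau^2\in\Sp(H_\a)$ with $\a$-independent multiplicity; these contribute $\d_{n_k}(\a)=0$ and the conclusion is immediate. Otherwise all ratios are irrational, and I apply Dirichlet's approximation theorem to $\ell_1/\ell_2$ to obtain infinitely many pairs $(m,m')$ with
$$\eta_k:=\Big|\frac{m\pi}{\ell_1}-\frac{m'\pi}{\ell_2}\Big|\lesssim\frac{1}{m'}\asymp\frac{1}{\tau_{n_k}}.$$
Thus the thresholds $m\pi/\ell_1$ and $m'\pi/\ell_2$ are within $\eta_k$, the unique zero $\tau_{n_k}$ of $F$ trapped between them satisfies $\abs{\sin(\tau_{n_k}\ell_1)},\abs{\sin(\tau_{n_k}\ell_2)}\lesssim\eta_k\lesssim\tau_{n_k}^{-1}$, and the estimate of the previous step gives $\d_{n_k}(\a)=\mathcal O(\tau_{n_k}^{-2})=\mathcal O(\l_{n_k}(0)^{-1})$. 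The Weyl Law \eqref{eq-Weyl-H0} ensures $n_k\asymp\abs\G\tau_{n_k}/\pi\to+\infty$, so after extraction the sequence is increasing, and the order-preserving matching of $\Sp(H_0)$ with $\Sp(H_\a)$ from the proof of Proposition~\ref{th-Weyl} identifies the solution of $zF(z)=-\a$ near $\tau_{n_k}$ with $z_{n_k}(\a)$, i.e.\ with $\l_{n_k}(\a)$, as required.
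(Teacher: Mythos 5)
Your proposal is correct and follows essentially the same route as the paper: localize $z_n(\a)$ in a disk of radius $O(\rho_n/\tau_n)$ around $\tau_n$ (the paper's Lemma \ref{lem-g} and Proposition \ref{prop-sep-vp}, which you re-derive via Rouch\'e, with $\rho_n=1/\psi_0'(\tau_n)$ playing the role of your $1/\abs{F'(\tau_n)}$), then use Dirichlet approximation of $\ell_1/\ell_2$ to trap a $\tau_{n_k}$ between two Dirichlet thresholds at distance $O(\tau_{n_k}^{-1})$, forcing $\rho_{n_k}\lesssim\sin^2(\tau_{n_k}\ell_2)\lesssim\tau_{n_k}^{-2}$ and hence $\d_{n_k}(\a)=O(\tau_{n_k}^{-2})$. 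The commensurable case is dispatched via Proposition \ref{prop-eigenvalue-Dir-alpha} exactly as in the paper.
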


Observe that, under the assumption of Proposition~\ref{prop-eigenvalue-Dir-alpha}, one has in fact $\d_{n_k}(\alpha)=0$ along an increasing subsequence. Hence, Proposition~\ref{t:maintheo3} is meaningful in the case where $\kappa \cdot \ell \neq 0$ for every $\kappa\in\Z^N\setminus\{0\}$. In that case, Theorem~\ref{t:maintheo2} also implies that for every $s\in [0,2|\Gamma|^{-1}]$ there exists a subsequence of eigenvalues 
$(\lambda_{n_k}(\a))_{k\geq 1}$ such that $n_k \to +\infty$ and $\delta_{n_k}(\alpha)\rightarrow s\alpha$. Under appropriate Diophantine assumptions on $\ell$, our proof is slightly more quantitative on that particular question:
\begin{proposition}\label{t:maintheo4} 
There exists $\O \subset (\IR_+^*)^N$ such that $(\IR_+^*)^N \setminus \O$ has null Lebesgue measure and if $\ell \in \O$ then the following assertion holds. 
Let $\e > 0$ and $\a \in \C$. Let $s \in [0,2\abs \G\inv]$. Then there exists an increasing sequence $(n_k)_{k \in \N^*}$ in $\N^*$ such that
$$
\d_{n_k}(\alpha)= s\alpha + \mathop{\mathcal{O}}_{k \to +\infty} \left(\frac 1 {\lambda_{n_k}(0)^{\frac{1}{2N}-\varepsilon}} \right).
$$
\end{proposition}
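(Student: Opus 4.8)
The plan is to combine a first-order perturbation formula for $\delta_n(\alpha)$ with an \emph{effective} (quantitative) equidistribution estimate for the eigenvalues on the Barra--Gaspard torus; the exponent $\tfrac1{2N}$ will be dictated by the polynomial quality of this equidistribution for a Diophantine vector of lengths.

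\textbf{Step 1 (secular equation and first-order formula).} Writing $u_j(x)=A_j\sin(zx)$ with $z^2=\lambda$ (so that (i) holds), conditions (ii)--(iii) reduce, away from the poles of the cotangent, to the secular equation $F(z):=z\sum_{j=1}^N\cot(z\ell_j)=-\alpha$, the Kirchhoff square-roots $\tau_n$ being the zeros of $F$. For those $n$ such that $\theta^{(n)}:=(\tau_n\ell_1,\dots,\tau_n\ell_N)\bmod\pi$ stays in a fixed compact subset $\Kc$ of $(\IR/\pi\IZ)^N$ avoiding the coordinate hyperplanes, one has $F'(\tau_n)=-\tau_n\sum_j\ell_j\csc^2(\tau_n\ell_j)$, of size $\asymp\tau_n$, and $F''(\tau_n)=O(\tau_n)$, so the implicit function theorem gives $z_n(\alpha)-\tau_n=-\alpha/F'(\tau_n)+O(\tau_n^{-2})$ and hence
\[
\delta_n(\alpha)=w(\theta^{(n)})\,\alpha+O(\tau_n^{-1}),\qquad w(\theta):=\frac{2}{\sum_{j=1}^N\ell_j\csc^2\theta_j}.
\]
Since $\csc^2\geq1$, the function $w$ takes values in $(0,2\abs\G^{-1}]$, equal to $2\abs\G^{-1}$ only at $(\tfrac\pi2,\dots,\tfrac\pi2)$ and tending to $0$ near the hyperplanes. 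It therefore suffices to produce eigenvalues for which $w(\theta^{(n_k)})$ approaches the prescribed value $s$ at the required rate, with $\theta^{(n_k)}\in\Kc$.

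\textbf{Step 2 (dynamical set-up and definition of $\O$).} The points $\theta^{(n)}$ are exactly those where the linear flow $\phi_t=t\ell\bmod\pi$ meets the hypersurface $\Sigma=\{\theta:\sum_j\cot\theta_j=0\}$; this crossing is everywhere transverse, since $\frac{d}{dt}\sum_j\cot(t\ell_j)=-\sum_j\ell_j\csc^2(t\ell_j)\leq-\abs\G$, which is precisely the mechanism behind the equidistribution of $(\theta^{(n)})$ towards the Barra--Gaspard measure $\mu_{\BG}$ and of $(w(\theta^{(n)}))$ towards $\mu_\ell=w_*\mu_{\BG}$. I let $\O$ be the set of $\ell$ such that for every $\varepsilon'>0$ there is $c>0$ with $|\langle\kappa,\ell\rangle|\geq c|\kappa|^{-(N-1+\varepsilon')}$ for all $\kappa\in\IZ^N\setminus\{0\}$; a Borel--Cantelli argument shows $(\IR_+^*)^N\setminus\O$ is Lebesgue-null, and such $\ell$ fall under Theorem~\ref{t:maintheo2}(ii).

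\textbf{Step 3 (effective equidistribution and conclusion).} For $\ell\in\O$, the Erd\H{o}s--Tur\'an--Koksma inequality applied to $\phi_t$, with the Diophantine lower bound on the frequencies $\langle\kappa,\ell\rangle$ and the Fourier cut-off optimized at $K\asymp T^{1/N}$, yields a discrepancy bound $D_T\lesssim_\varepsilon T^{-\frac1N+\varepsilon}$ for the orbit $\{\phi_t:0\leq t\leq T\}$. Fix $s\in(0,2\abs\G^{-1})$. A Lagrange-multiplier computation shows that the only critical point of $w|_\Sigma$ is its maximum, so $\{w=s\}\cap\Sigma$ is a regular hypersurface of $\Sigma$ and the band $B_\eta=\{|w-s|\leq\eta\}$ (contained in $\Kc$ for $\eta$ small, since $0<s<2\abs\G^{-1}$) has $\mu_{\BG}$-measure $\asymp\eta$. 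Taking $\eta=T^{-\frac1N+\varepsilon}\gg D_T$ and converting, via the uniform transversality, the flow discrepancy into a count of crossings of $\Sigma$ inside $B_\eta$, one obtains $\gtrsim T\eta\to+\infty$ such crossings; in particular there is an eigenvalue $\tau_n\leq T$ with $\theta^{(n)}\in B_\eta$. Step 1 then gives
\[
|\delta_n(\alpha)-s\alpha|\leq|\alpha|\,|w(\theta^{(n)})-s|+O(\tau_n^{-1})\lesssim\tau_n^{-\frac1N+\varepsilon}=\lambda_n(0)^{-(\frac1{2N}-\frac\varepsilon2)},
\]
and letting $T\to+\infty$ along a sequence produces the increasing sequence $(n_k)$, after renaming $\varepsilon$. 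The endpoints are separate: $s=0$ is covered by the stronger Proposition~\ref{t:maintheo3}, while $s=2\abs\G^{-1}$ forces the single rational target $(\tfrac\pi2,\dots,\tfrac\pi2)$, reached at the same rate by a Dirichlet-type (homogeneous) approximation.

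\textbf{Main obstacle.} The heart of the argument is Step 3: upgrading the almost-everywhere Diophantine condition to an effective equidistribution rate with exponent $\tfrac1N$, and, more delicately, transferring the discrepancy of the \emph{continuous} flow into a guaranteed crossing of $\Sigma$ inside the thin band $B_\eta$, with constants uniform over $s$ in compact subsets of $(0,2\abs\G^{-1})$. This demands simultaneous control of the lower bound on $\mu_{\BG}(B_\eta)$ (non-degeneracy of $w|_\Sigma$ away from its maximum) and of the transverse speed from above and below; it is exactly here that the heuristically optimal exponent $\tfrac12$ is degraded to $\tfrac1{2N}$. Keeping $\theta^{(n)}$ away from the coordinate hyperplanes, which is needed for the uniform $O(\tau_n^{-1})$ error, is also the reason the degenerate endpoint $s=0$ is handled through Proposition~\ref{t:maintheo3} rather than the main scheme.
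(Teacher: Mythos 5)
Your overall strategy is the one the paper uses: a first-order perturbation formula $\d_n(\a)=\a\,\Phi(\f_\ell^{\tau_n}(0))+\Oc(\tau_n^{-1})$ (your Step~1 is Lemma~\ref{lem-delta}), a full-measure Diophantine set $\O$ (your Step~2 matches the paper's choice of $\gamma$-Diophantine vectors with $\gamma\in\,]N-1,N[$), and a quantitative equidistribution argument to land $\f_\ell^{\tau_n}(0)$ near the level set $\Phi^{-1}(\set s)$. The divergence, and the gap, is in Step~3. You propose to control the orbit's visits to the thin band $B_\eta=\set{\abs{w-s}\leq\eta}$ with $\eta\asymp D_T\asymp T^{-1/N+\e}$ using Erd\H{o}s--Tur\'an--Koksma box discrepancy. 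This transfer does not work at that scale: ETK controls discrepancy relative to axis-parallel boxes, while $B_\eta$ is an $\eta$-neighbourhood of a curved hypersurface (or, after thickening along the flow, a curved slab). Approximating it by boxes of side $\eta$ requires on the order of $\eta^{-(N-1)}$ boxes, so the accumulated error $\eta^{-(N-1)}D_T$ swamps the main term $T\eta$ unless $\eta$ is taken far larger than $D_T$; and smoothing the indicator of the band at scale $\eta$ and running the Fourier-side Birkhoff estimate (Lemma~\ref{l:quantitative-birkhoff}) only yields a remainder $\Oc(T^{-1}\eta^{-2N+1})$ against a main term $\eta$, forcing $\eta\gtrsim T^{-1/(2N)}$, not $T^{-1/N}$. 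You flag this as the ``main obstacle,'' but it is precisely the step that is missing, and your claimed localization exponent $1/N$ in $\tau_n$ is not obtained by the argument as written.

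What the paper does instead is simpler and avoids the band entirely: it fixes a single point $y_0\in\Zc_\emptyset$ with $\Phi(y_0)=s$, takes a smooth bump $\h_{r}$ of radius $r$ at $y_0$, bounds its Fourier coefficients by $Cr^{N-p}\nr\k^{-p}$ via integration by parts (with $p=2N$), and feeds this into Lemma~\ref{l:quantitative-birkhoff}; balancing the main term $Tr^N$ against the remainder $r^{-N}$ gives the admissible scale $r_T=T^{-\frac1{2N}+2\e}$, and the transversality of $\ell$ to $\Zc_\emptyset$ then converts ``the orbit passes within $r_T$ of $y_0$'' into ``some $\f_\ell^{\tau_{n}}(0)$ lies within $cr_T$ of $y_0$.'' Note that this yields $\d_{n_T}(\a)=s\a+\Oc\big(\tau_{n_T}^{-\frac1{2N}+2\e}\big)$, i.e.\ localization at scale $\tau_n^{-1/(2N)}$ rather than the $\tau_n^{-1/N}$ you claim; so even the paper's own mechanism does not deliver the stronger rate your Step~3 asserts. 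Two smaller points: the endpoint $s=2\abs\G^{-1}$ needs no separate treatment, since $y_0=(\frac\pi2,\dots,\frac\pi2)$ lies in $\Zc_\emptyset$ and the main argument applies verbatim; and your deferral of $s=0$ to Proposition~\ref{t:maintheo3} agrees with the paper.
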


\subsection{Related literature on non-selfadjoint problems}

The properties of selfadjoint operators on compact graphs are now well understood (see \cite{Berkolaiko-Kuchment}). Less is known on the spectrum of a non-selfadoint Laplacian. We refer to \cite{Hussein14,HusseinKreSie15} for an overview of general properties on Laplace operators with maximal quasi-accretive and general non-selfadjoint vertices conditions.\\

Some authors have also considered the closely related problem of the damped wave equation on graphs, which motivated our analysis.  More precisely, if we consider on $\G$ the problem 
\begin{equation}\label{e:dwe}
\begin{cases}
\partial_t^2 u_j(t,x_j) - \partial_{x_j}^2 u_j(t,x_j) = 0, & \forall t \geq 0, \forall j \in \NN, \forall x_j \in (0,\ell_j),\\
u_j(t,0) = 0 & \forall t \geq 0, \forall j \in \NN,\\
u_j(t,\ell_j) = u_k(t,\ell_k), & \forall t \geq 0, \forall j,k \in \NN,\\
\sum_{j=1}^N \partial_{x_j}u_j(t,\ell_j) + a \partial_t u(t,v) = 0, & \forall t \geq 0,
\end{cases}
\end{equation}
with some initial conditions $(u(0),\partial_t u(0)) = (u_0,u_1) \in \mathcal{H}^1(\G) \times L^2(\G)$. The first results on that direction concern the stabilization of coupled strings joint by a damper (this can be seen as a star-graph with $N=2$ edges). This is related to the accumulation of eigenvalues on the real axis. In \cite{ChenCoWe87} it is shown that, depending on the damper and boundary conditions, we may or may not have exponential decay to equilibrium. In \cite{Liu88,GuoZhu97}, the lengths are still equal to 1 but the wave speeds can be different on the two edges. The asymptotic properties of a wave (the energy does not go to 0, it goes strongly to 0, it decays uniformly exponentially to 0) are then described, depending on the arithmetic properties of the ratio of these wave speeds. Similar stabilization results for the wave equation were also obtained on star graphs~\cite{AmmariJel04, AmmariJelKhe05} (see also \cite{NicaiseVal07}) and on more general graphs~\cite{ValeinZua09, Zuazua13}.

The spectrum of the damped wave equation has been studied in \cite{AbdallahMerNic13} (star-graph) and \cite{FreitasLi17} (graph with general topology). In both cases, the lengths of the edges are rationally dependant. As in the present paper, the result in \cite{FreitasLi17} is the description of the distribution of high frequency eigenvalues. However, the damping is on each edge, and not at the vertices as in the present work or as in~\eqref{e:dwe}. In these references, it is proved that this asymptotics only depends on the mean value of the damping on each edge. Moreover, there are only a finite number of high frequency spectral abscissa. With the notation of our Theorem \ref{t:maintheo2}, this means that the measure $\mu_\ell$ is given by a finite sum of Dirac measures. This is exactly what we recover with a Robin vertex condition for rationnaly dependant lengths (case (i) of Theorem \ref{t:maintheo2}). We also refer to \cite{GuoXu09} for a discussion about the basis properties of the (generalized) eigenfunctions.

We expect our study to be a first a step towards the understanding of the spectrum of the damped wave equation~\eqref{e:dwe} on graphs. Here we consider a Laplacian with a fixed Robin coefficient. The operator corresponding to the wave equation with damping at the central vertex would be of the same form with a frequency dependent Robin coefficient. Indeed, if we look for stationary solutions to the damped wave equation~\eqref{e:dwe} of the form $u_j(t,x_j) = e^{-itz} w_j(x_j),$
then $w = (w_j)_{1 \leq j \leq N}$ has to be a solution of the problem
\[
H_{-iaz} w = z^2 w.
\]
This kind of problems are left for further works. Other classical and related non-selfadjoint problems on graphs are the ones concerned with the distribution of resonances on non-compact quantum graphs, see \cite{ExnerLi10,DaviesExLi10, DaviesPu11, LeeZw16, CdVerdiereTr18} and references therein. For the wave equation on some non-compact star-graph, we also refer to \cite{AsselJeKh16}.\\

Our results can also be compared to questions that appear in the context of the Laplace operator on a compact 
Riemannian manifold $(M,g)$. In the selfadjoint case, we can mention the study of the Schr\"odinger operator
$-\Delta_{\mathbb{S}^n}+a,$ on the sphere $\mathbb{S}^n$ endowed with its canonical metric where $a$ is a 
smooth real-valued function. In that case, eigenvalues are organized in a sequence of clusters and the 
deviation between the eigenvalues for $a\equiv 0$ and for a general $a$ was described 
by Weinstein in~\cite{Weinstein77} in terms of the Liouville measure. When $\alpha\in\IR$, 
our Theorem~\ref{t:maintheo2} can be thought as an analogue of this result where the potential would be a Dirac mass 
at the central vertex of the star graph. Yet, as we shall see in our proof, the measure $\mu_\ell$ that 
we obtain in the case of a graph is of completely different nature.

The spectral properties of the damped wave equation have also attracted a lot of attention on compact manifolds. See for instance the results of Sj\"ostrand~\cite{sjostrand00}, Hitrik~\cite{Hitrik02}, Asch-Lebeau~\cite{AschLe03} and Hitrik-Sj\"ostrand~\cite{HitrikSjo04}. Sj\"ostrand proved that the complex 
eigenvalues of the damped wave operator verify a Weyl Law in the high frequency limit and he considered the distribution of the imaginary parts of their square roots. He showed that any limit distribution is supported on an interval $[a_-,a_+]$ where $a_{\pm}$ corresponds to the infimum (resp. supremum) of the average of the (smooth) damping function $-a$ along a typical orbit of the geodesic flow. In particular, in the case of flat tori or of negatively curved manifolds, both quantities are equal to the average of $-a$ against the normalized volume on $M$. In other words, the limit distribution of the imaginary parts is the Dirac mass at $a_+$. For more general manifolds, identifying the limit measure remains an open problem -- see for instance \cite{Hitrik02, AschLe03, HitrikSjo04} for results in the case of the sphere. 
These results of Sj\"ostrand remain true for non-selfadjoint Schr\"odinger operators with a weaker non-selfadjoint 
perturbation, e.g. $\Delta_g+ia$. In the irrational case (case (ii) of Theorem~\ref{t:maintheo2}), our limit measure is supported on an interval and it is in some sense opposite to the case of flat tori and of negatively curved manifolds, where the limit measure is carried by a single point.% while in the purely rational case (case (i)), we have a combination of Dirac mass. 

\subsection{The case of more general graphs}

In principle, the results presented in this article could be extended to more general graphs that 
are not star-shapped. Yet, we emphasize that such an extension would a priori require to make some 
restriction on the choice of the metric (i.e. on the choice of $\ell$) we put on the graph. 
In fact, to every choice of graph and to every choice of boundary conditions for the Laplace
operator at the vertices of the graph, one can associate an analytic function $F_{\Gamma,H_0}$ 
defined on the $N$-dimensional torus $\IT^N$, where $N$ is the number of edges of the graph. This function 
does not depend on the choice of $\ell$ and the 
vanishing locus of $F_{\Gamma,H_0}$ plays a central role when studying the spectral properties 
of $H_0$~\cite{Berkolaiko-Kuchment}. Our argument is based on ideas that go back to the works of 
Barra and Gaspard studying the level spacings on quantum graphs~\cite{barrag00}. More precisely, it consists in analyzing the ergodic 
properties of the map induced by the flow
$$\varphi^t_{\ell}:y\in\IT^N\mapsto y+t\ell$$
on the vanishing locus of $F_{\Gamma,H_0}$. To exploit this idea of Barra and Gaspard, it is necessary that the orbit 
of $0$ under this flow intersects the vanishing locus of $F_{\Gamma,H_0}$ on a nice enough subset, 
typically a smooth submanifold. For instance, this can be achieved by considering a quantum graph $H_0$ with Neumann 
boundary conditions at each vertex (for which $F_{\Gamma,H_0}$ is nice enough thanks to~\cite{Friedlander05}) and by picking a \emph{generic} metric $\ell$ (typically such 
that $\varphi_{\ell}^t$ is uniquely ergodic). This was the approach followed in earlier works of Berkolaiko-Winn on level spacings~\cite{BerkolaikoWi10} or of Colin de Verdi\`ere on the failure of quantum ergodicity for quantum graphs~\cite{cdverdiere15}. See also~\cite{KeatingMarWin03, BerkolaikoKeaWin04} for earlier results on these problems by Keating et al. in the case of star graphs (still for a generic $\ell$). Here, we chose to work only on star-graphs for which the expression of $F_{\Gamma,H_0}$ is explicit enough to determine all the singularities of the vanishing locus and to handle \emph{any} choice of metric $\ell$. This will illustrate the difference of behaviour depending on our choice of metric $\ell$. 

%This kind of simplified cases already 
%appeared in the original work of Barra and Gaspard on level spacings for quantum graphs~\cite[$\S 5$]{barrag00}.

Finally, in order to extend the results to more general graphs, we would need to define an appropriate equivalent 
of $\d_n(\alpha)$ (for instance, the imaginary part of $\lambda_n(\alpha)$ when $\alpha\notin\mathbb{R}$) in order to state an analogue of Theorem~\ref{t:maintheo2}. Typically, the proof of Proposition \ref{prop-sep-vp} below, which gives Proposition \ref{prop-eigenvalue-proche} and then Theorem \ref{t:maintheo2}, relies on explicit computations.

\subsection{Organization of the paper} In Section \ref{sec-def-operator} we define the operator $H_\a$ and give some basic spectral properties. In particular, we compute the secular equation on which all the subsequent analysis relies. In Section \ref{sec-H0}, we focus on the distribution of the eigenvalues of $H_0$. In particular, we introduce the corresponding dynamical system on $\T^N$ and the Barra-Gaspard measure. In Section \ref{sec-separation} we prove that, for high frequencies, the perturbation of the Robin parameter does not move the spectrum so much. So each eigenvalue of $H_\a$ remains close, in a suitable sense, to the corresponding eigenvalue of $H_0$. This will finish the proof of Proposition \ref{prop-spectrum-Ha} and prove Propositions \ref{th-Weyl},~\ref{prop-eigenvalue-proche} and~\ref{prop-eigenvalue-Dir-alpha}. Finally, the last section is devoted to the proofs of Theorem \ref{t:maintheo2} and Propositions \ref{t:maintheo3} and \ref{t:maintheo4}.

\subsection{Notation} We denote by $\N$ the set of non-negative integers, and $\N^*$ stands for $\N \setminus \set 0$. As already mentioned, we also write $\NN$ for $\{1,\cdots,N\}$. For $z_0 \in \C$ and $r > 0$ we set $D(z_0,r) = \set{z \in \C : \abs{z-z_0} \leq r}$ and $C(z_0,r) = \set{z \in \C : \abs{z-z_0} = r}$.  As said above, we write $D(r)$ for $D(0,r)$.

\section{Spectral framework and first properties} \label{sec-spectral}

In this section, we define the operator $H_\a$ on a suitable Hilbert space on $\G$ (we recall that Laplace operators with more general non-selfadjoint vertices conditions are discussed in \cite{Hussein14,HusseinKreSie15}). Then we derive some basic properties for its spectrum. In particular, we give some rough localization for the eigenvalues, and we derive the secular equation, that will be used all along the article.

\subsection{The Schr\"odinger operator with Robin condition} \label{sec-def-operator}
The metric graph $\G$ is endowed with the Hilbert structure of
\[
L^2(\Gamma) = \bigoplus _{j=1}^N L^2(0, \ell_j).
\]
In particular, for $u = (u_j)_{1 \leq j \leq N} \in L^2(\Gamma)$, we have 
\[
\nr{u}_{L^2(\G)}^2 = \sum_{j=1}^N \nr{u_j}_{L^2(0,\ell_j)}^2. 
\]
For $m \in \N$, we similarly define the Sobolev space
\begin{equation} \label{def-HmGamma}
\mathcal{H}^m(\Gamma) = \bigoplus _{j=1}^N \mathcal{H}^m \left(0, \ell_j \right).
\end{equation}
In the introduction, we have denoted by $\Dom(H_\a)$ the set of $u \in \mathcal{H}^2(\G)$ such that \eqref{cond-dirichlet}, \eqref{cond-continuite} and \eqref{cond-Kirchhoff} hold, and for $u \in \Dom(H_\a)$ we have defined $H_\a u \in L^2(\G)$ by \eqref{def-Ha}. For later purpose, we also define $\Dom(H_\infty)$ as the set of $u = (u_j)_{1 \leq j \leq N} \in \mathcal{H}^2(\G)$ such that 
\begin{equation}\label{cond-full-Dirichlet}
\forall j \in \NN, \quad u_j(0) = u_j(\ell_j) = 0.
\end{equation}
Then, for $u \in \Dom(H_\infty)$, we define $H_\infty u \in L^2(\G)$ as in \eqref{def-Ha}.\\

It is known that, for $\a \in \R \cup \set \infty$, the operator $H_\a$ is selfadjoint and bounded from below (it is even non-negative for $\a \geq 0$ or $\a = \infty$). In particular, since $\mathcal{H}^2(\G)$ is compactly embedded in $L^2(\G)$, the spectrum of $H_\a$ consists of a sequence of isolated real eigenvalues going to $+\infty$. When $\a$ is not real, the operator $H_\a$ is no longer symmetric. However, we prove in Proposition \ref{prop-sectorial} that $H_\a$ is at least maximal sectorial. We recall that an operator $T$ on some Hilbert space $\Hc$ is said to be sectorial (with vertex $\g \in \R$ and semi-angle $\th \in [0,\frac \pi 2[$) if the numerical range of $T$,
\[
\set{\innp{T\f}{\f}_\Hc, \f \in \Dom(T), \nr{\f}_\Hc = 1},
\]
is a subset of the sector 
\[
\Sigma_{\g,\th} = \set{\z \in \C \st \abs{\arg(\z-\g)} \leq \th}.
\]
Then $T$ is said to be maximal sectorial if $(T-\z)$ has a bounded inverse on $\Hc$ for some (and hence any) $\z$ in $\C \setminus \Sigma_{\g,\th}$. Similarly, we say that a quadratic form $q$ on $\Hc$ with domain $\Dom(q)$ is sectorial (with vertex $\g \in \R$ and semi-angle $\th \in [0,\frac \pi 2[$) if for all $\f \in \Dom(q)$ with $\nr{\f}_\Hc = 1$ we have 
\[
q(\f) \in \Sigma_{\g,\th}.
\]

In this paper, we study the properties of $H_\a$ for some fixed $\a \in \C$. However, at some point (see the proof of Proposition \ref{prop-zna} below), we will deduce some properties of $H_\a$ from the corresponding properties of $H_0$ by a regularity argument with respect to the \emph{parameter} $\a$. This is why we need uniform estimates on $H_{\tilde \a}$ for $\tilde \a$ lying on a disk containing $0$ and $\a$. Thus, we fix $r > 0$ and give the following results with constants uniform with respect to $\a$ in the disk $D(r)$.

\begin{proposition} \label{prop-sectorial}
Let $r > 0$. There exist $\g \in \R$ and $\th \in \big[0,\frac \pi 2 \big[$ such that for any $\a \in D(r)$ the operator $H_\a$ is maximal sectorial with vertex $\g$ and semi-angle $\th$.
\end{proposition}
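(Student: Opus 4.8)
The plan is to construct $H_\a$ as the maximal sectorial operator associated, through Kato's first representation theorem for sectorial forms, with a closed sectorial sesquilinear form on $L^2(\G)$. First I would compute the relevant form. For $u \in \Dom(H_\a)$, integrating by parts on each edge and using the Dirichlet condition \eqref{cond-dirichlet}, the continuity \eqref{cond-continuite} and the Robin condition \eqref{cond-Kirchhoff}, all the boundary contributions collapse and one obtains
\begin{equation*}
\innp{H_\a u}{u}_{L^2(\G)} = \sum_{j=1}^N \int_0^{\ell_j} \abs{u_j'(x)}^2 \, dx + \a \abs{u(v)}^2 .
\end{equation*}
This suggests introducing the sesquilinear form
\begin{equation*}
q_\a(u,w) = \sum_{j=1}^N \int_0^{\ell_j} u_j'(x)\overline{w_j'(x)} \, dx + \a\, u(v) \overline{w(v)}
\end{equation*}
with form domain
\begin{equation*}
V = \set{u \in \mathcal{H}^1(\G) \st u_j(0) = 0 \ \forall j \in \NN, \ \text{and} \ u_j(\ell_j) = u_k(\ell_k) \ \forall j,k \in \NN},
\end{equation*}
which does not depend on $\a$. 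Notice that $\Im q_\a(u) = \Im(\a) \abs{u(v)}^2$, consistent with the known self-adjointness when $\a$ is real.

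The analytic heart of the argument is an infinitesimal form bound for the vertex term: localising near the endpoint parametrised by $\ell_j$ on one edge, an elementary one-dimensional estimate gives, for every $\delta \in (0, \min_j \ell_j)$,
\begin{equation*}
\abs{u(v)}^2 \leq 2\delta \sum_{j=1}^N \int_0^{\ell_j} \abs{u_j'}^2 \, dx + \frac{2}{\delta} \nr{u}_{L^2(\G)}^2, \qquad u \in V.
\end{equation*}
Writing $K(u) = \sum_{j=1}^N \int_0^{\ell_j} \abs{u_j'}^2 \, dx$, so that $\Re q_\a(u) = K(u) + \Re(\a)\abs{u(v)}^2$ and $\Im q_\a(u) = \Im(\a)\abs{u(v)}^2$, I would feed this bound into both the real and imaginary parts. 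For $\a \in D(r)$ one has $\abs{\Re\a}, \abs{\Im\a} \leq r$; choosing $\delta$ so that $2r\delta \leq \tfrac12$, the real part is bounded below by $\tfrac12 K(u) - \g_0 \nr{u}^2$ with $\g_0 = 2r/\delta$, while $\abs{\Im q_\a(u)} \leq \tfrac12 K(u) + \g_0 \nr{u}^2$. Taking $\g = -2\g_0$ and comparing these two inequalities on the unit sphere of $V$ yields $\abs{\Im q_\a(u)} \leq \Re q_\a(u) - \g$, i.e. $q_\a(u) \in \Sigma_{\g,\th}$ with semi-angle $\th = \tfrac\pi4$; crucially, both $\g$ and $\th$ depend only on $r$ and not on $\a \in D(r)$, which gives the desired uniformity.

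It remains to check closedness and to identify the operator. The same lower bound shows that $u \mapsto (\Re q_\a(u) + (\g_0+1)\nr{u}^2)^{1/2}$ is a norm on $V$ equivalent to the $\mathcal{H}^1(\G)$-norm; since the conditions defining $V$ are continuous for that norm (point evaluations are bounded on $\mathcal{H}^1$), the space $V$ is complete and $q_\a$ is closed. Kato's first representation theorem then furnishes a unique maximal sectorial operator with vertex $\g$ and semi-angle $\th$ associated with $q_\a$. Finally, for $u \in V \cap \mathcal{H}^2(\G)$ a further integration by parts shows that $q_\a(u,w) = \innp{(-u_j'')_{j}}{w}_{L^2(\G)}$ for all $w \in V$ precisely when $u$ satisfies \eqref{cond-Kirchhoff}, so that the domain and action of the represented operator coincide with those of $H_\a$. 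The main obstacle is exactly the infinitesimal form bound above: one must absorb the Dirac-type perturbation at $v$ into the kinetic energy with relative bound zero, which is what keeps the semi-angle strictly below $\tfrac\pi2$ and, being $\a$-independent, also delivers the uniformity over $D(r)$; once it is established, the sector bookkeeping, the closedness, and the identification of the operator are all routine.
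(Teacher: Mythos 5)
Your proposal follows essentially the same route as the paper's proof: the same sesquilinear form on the same $\a$-independent form domain, an infinitesimal form bound for the vertex term $\abs{u(v)}^2$ (the paper gets it directly from $\abs{u_j(\ell_j)}^2=\int_0^{\ell_j}(\abs{u_j}^2)'$ using the Dirichlet condition at $v_j$, you via a cutoff near $\ell_j$; both yield relative bound zero), closedness of the form, and Kato's representation theorem. The sector bookkeeping and the uniformity of $\g$ and $\th$ over $\a\in D(r)$ are handled correctly, and your direct verification of closedness via equivalence with the $\mathcal{H}^1$-norm is an acceptable substitute for the paper's appeal to the stability theorem for relatively bounded perturbations of closed sectorial forms.

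The one step you pass over too quickly is the identification of the represented operator $\tilde H_\a$ with $H_\a$. Your argument --- that for $u\in V\cap\mathcal{H}^2(\G)$ the identity $q_\a(u,w)=\langle (-u_j'')_j,w\rangle_{L^2(\G)}$ for all $w\in V$ holds precisely when $u$ satisfies \eqref{cond-Kirchhoff} --- only yields the inclusion $\Dom(H_\a)\subset\Dom(\tilde H_\a)$ together with agreement of the two operators there. To conclude that the represented operator \emph{is} $H_\a$ (with the domain prescribed in the introduction), you also need the reverse inclusion, and for that you must first prove that every $u\in\Dom(\tilde H_\a)$ belongs to $\mathcal{H}^2(\G)$: test the identity $q_\a(u,\cdot)=\langle f,\cdot\rangle_{L^2(\G)}$ against functions supported in the interior of a single edge to obtain $-u_j''=f_j$ in the sense of distributions, deduce $u_j\in\mathcal{H}^2(0,\ell_j)$, and only then integrate by parts to extract \eqref{cond-Kirchhoff}. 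This regularity step is exactly what the paper carries out; without it, $\tilde H_\a$ could a priori be a proper extension of the operator defined on $\Dom(H_\a)$, and the proposition as stated would not follow. Once you add this (entirely routine) argument, your proof is complete and coincides in substance with the paper's.
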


\begin{proof} 
\stepp 
Let $\a \in D(r)$. We denote by $\Dom(Q_\a)$ the set of functions $u \in \mathcal{H}^1(\G)$ which satisfy \eqref{cond-dirichlet} and \eqref{cond-continuite} (notice that this set does not depend on $\a$). Then, for $u,w \in \Dom(Q_\a)$, we set 
\[
Q_\a(u,w) = \sum_{j=1}^N \int_0^{\ell_j} u'_j(x_j) \overline{w'_j(x_j)} \, dx_j + \a u(v) \overline{w(v)}.
\]
In particular, $\Dom(H_\a) \subset \Dom(Q_\a)$ and, for $u \in \Dom(H_\a)$ and $w \in \Dom(Q_\a)$, we have
\begin{equation} \label{eq-Ha-Qa}
\innp{H_\a u}{w}_{L^2(\G)} = Q_\a (u,w).
\end{equation}
\stepp For $u \in \Dom(Q_\a) = \Dom(Q_0)$ and $j \in \NN$ we have 
\begin{align*}
\abs{(Q_\a -Q_0)(u)} 
 = \abs{\a} \abs{u(v)}^2
& \leq 2\abs \a \int_0^{\ell_j} \abs{u_j(x_j)} \abs {u_j'(x_j)} \, dx_j\\
& \leq  \frac{1}{4}\nr{u'_j}_{L^2(0,\ell_j)}^2 + 4\abs \a^2 \nr{u_j}_{L^2(0,\ell_j)}^2\\
& \leq \frac{1}{4}Q_0(u) +  4 r^2 \nr u_{L^2(\G)}^2.
\end{align*}
Since the form $Q_0$ is symmetric, bounded from below and closed, we deduce that the form $Q_\a$ is sectorial (uniformly in $\a \in D(r)$) and closed~\cite[Th.~VI.1.33]{kato}.

\stepp By the Representation Theorem~\cite[Th.~VI.2.1]{kato}, there exists a unique maximal sectorial operator $\tilde H_\a$ on $L^2(\G)$ such that $\Dom(\tilde H_\a) \subset \Dom(Q_\a)$ and 
\[
\forall u \in \Dom(\tilde H_\a), \forall w \in \Dom(Q_\a), \quad \big< \tilde H_\a u , w \big>_{L^2(\G)} = Q_\a (u,w).
\]
Moreover, $\Dom(\tilde H_\a)$ is the set of $u \in \Dom(Q_\a)$ for which there exists $f \in L^2(\G)$ such that 
\[
\forall w \in \Dom(Q_\a), \quad Q_\a(u,w) = \innp{f} {w}_{L^2(\G)},
\]
and, in this case, $\tilde H_\a u = f$.

\stepp By \eqref{eq-Ha-Qa}, we have $\Dom(H_\a) \subset \Dom(\tilde H_\a)$ and $\tilde H_\a$ coincides with $H_\a$ on $\Dom(H_\a)$. Now let $u = (u_j)_{1 \leq j \leq N} \in \Dom(\tilde H_\a)$. Let us prove that $u \in \Dom(H_\a)$. There exists $f = (f_j)_{1 \leq j \leq N}\in L^2(\G)$ such that, for all $w \in \Dom(Q_\a)$, we have 
\[
Q_\a (u,w) = \innp f w_{L^2(\G)}.
\]
Let $k \in \NN$. Considering all the test functions $w = (w_j)_{1\leq j \leq N}$ such that $w_k \in C_0^\infty((0,\ell_j))$ and $w_j = 0$ for $k \neq j$, we obtain that $u_k$ belongs to $\mathcal{H}^2(0,\ell_k)$ and $-u''_k = f_k$. Then $u \in \mathcal{H}^2(\G)$. Now, for all $w \in \Dom(Q_\a)$, we have 
\begin{align*}
Q_\a (u,w)
& = \sum_{j = 1}^N \innp{u'_j}{w'_j}_{L^2(0,\ell_j)} + \a u(v) \overline {w(v)}\\
& = \sum_{j=1}^N \left( u_j'(\ell_j) \overline{w_j(\ell_j)} - \innp{u''_j}{w_j}_{L^2(0,\ell_j)} \right) + \a u(v) \overline {w(v)}\\
& = \innp{f}{w}_{L^2(\G)} + \left( \sum_{j=1}^N u_j'(\ell_j) + \a u(v) \right) \overline{w(v)}.
\end{align*}
Choosing $w \in \Dom(Q_\a)$ such that $w(v) \neq 0$ yields \eqref{cond-Kirchhoff}. This proves that $u \in \Dom(H_\a)$, and hence $H_\a = \tilde H_\a$ is maximal sectorial.
\end{proof}

\begin{remark}
With the same proof we recover that $H_\a$ is selfadjoint if $\a$ is real, and non-negative if $\a \geq 0$ (see~\cite[Th.~1.4.11]{Berkolaiko-Kuchment} for a similar result for more general selfadjoint vertices conditions and more general graphs).
\end{remark}

\subsection{First properties of the spectrum} \label{sec-first-properties}

In this paragraph we begin our description of the localization of the spectrum with some rough properties.

\begin{proposition}\label{prop-spectre-discret}
Let $r > 0$. Let $\g$ and $\th$ be given by Proposition \ref{prop-sectorial}. Let $\a \in D(r)$. Then the spectrum of the operator $H_\a$ is included in $\Sigma_{\g,\th}$ and consists of isolated eigenvalues with finite multiplicities.
\end{proposition}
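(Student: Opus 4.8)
The plan is to deduce everything from Proposition~\ref{prop-sectorial}. First, the inclusion $\Sp(H_\a) \subset \Sigma_{\g,\th}$ is immediate: by the very definition of a maximal sectorial operator recalled before that proposition, $(H_\a - \z)$ admits a bounded inverse on $L^2(\G)$ for every $\z \in \C \setminus \Sigma_{\g,\th}$, so every such $\z$ lies in the resolvent set and the spectrum is confined to the sector $\Sigma_{\g,\th}$. This part requires no computation.

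For the discreteness, I would fix once and for all a real number $\z_0 < \g$, so that $\z_0 \notin \Sigma_{\g,\th}$, and set $R_\a = (H_\a - \z_0)\inv$, a bounded operator on $L^2(\G)$ whose range is exactly $\Dom(H_\a) \subset \mathcal{H}^2(\G)$. The key step is to show that $R_\a$ is \emph{compact}. To this end, given $f = (f_j)_{1 \leq j \leq N} \in L^2(\G)$, write $u = (u_j)_{1 \leq j \leq N} = R_\a f \in \Dom(H_\a)$. By \eqref{def-Ha} the function $u$ solves $-u_j'' - \z_0 u_j = f_j$ on each edge, whence $\nr{u_j''}_{L^2(0,\ell_j)} \leq \nr{f_j}_{L^2(0,\ell_j)} + \abs{\z_0}\nr{u_j}_{L^2(0,\ell_j)}$. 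Combined with the resolvent bound $\nr{u}_{L^2(\G)} \leq \nr{R_\a}\nr{f}_{L^2(\G)}$ and the standard interpolation estimate on a bounded interval, namely $\nr{u_j'}_{L^2(0,\ell_j)} \leq \nr{u_j''}_{L^2(0,\ell_j)} + C\nr{u_j}_{L^2(0,\ell_j)}$, this yields $\nr{u}_{\mathcal{H}^2(\G)} \leq C \nr{f}_{L^2(\G)}$. Hence $R_\a$ is bounded from $L^2(\G)$ into $\mathcal{H}^2(\G)$.

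Since $\G$ is a finite union of bounded intervals, $\mathcal{H}^2(\G)$ is compactly embedded in $L^2(\G)$ (Rellich), so $R_\a$, viewed as an operator on $L^2(\G)$, is the composition of a bounded map with a compact embedding and is therefore compact. I would then conclude by the spectral theory of compact operators: $\Sp(R_\a)$ is a discrete set whose only possible accumulation point is $0$, and every nonzero element of it is an eigenvalue of finite algebraic multiplicity. Through the spectral mapping $\l \in \Sp(H_\a) \eqv (\l - \z_0)\inv \in \Sp(R_\a) \setminus \set 0$, which preserves the generalized eigenspaces and hence the multiplicities, this transfers verbatim to $H_\a$ and shows that its spectrum consists of isolated eigenvalues of finite multiplicity.

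The conceptual work is done entirely by soft inputs (maximal sectoriality, which supplies both the sector inclusion and a nonempty resolvent set; the spectral mapping for the resolvent; and the Rellich embedding). The only genuinely computational point is the elliptic estimate $\nr{u}_{\mathcal{H}^2(\G)} \leq C\nr{f}_{L^2(\G)}$, and even that is routine on a finite collection of intervals since the vertex conditions never enter it. I therefore expect no serious obstacle; the one thing to be careful about is verifying that $R_\a$ genuinely lands in $\mathcal{H}^2(\G)$, rather than merely in the graph-norm domain, which is exactly what the elliptic estimate guarantees.
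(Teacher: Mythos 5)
Your proposal is correct and follows essentially the same route as the paper: maximal sectoriality gives both the inclusion $\Sp(H_\a)\subset\Sigma_{\g,\th}$ and a nonempty resolvent set, and the compactness of the resolvent (which the paper states directly via the compact embedding of $\Dom(H_\a)$ into $L^2(\G)$, and which you justify by the elliptic estimate plus Rellich) yields the discreteness through standard spectral theory of compact operators.
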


\begin{proof}
By Proposition \ref{prop-sectorial}, the operator $H_\a$ is maximal sectorial with vertex $\g \in \R$ and semi-angle $\th \in \big(0,\frac \pi 2\big)$, and hence its spectrum is a subset of $\Sigma_{\g,\th}$~\cite[\S V.3.10]{kato}. In particular, the resolvent set of $H_\a$ is not empty. Moreover, $\Dom(H_\a)$ is compactly embedded in $L^2(\G)$, so the operator $H_\a$ has a compact resolvent. Thus, its spectrum consists of isolated eigenvalues with finite multiplicities.
\end{proof}

Notice that we have not proved yet that $H_\a$ has an infinite number of eigenvalues. We will get this as a byproduct of the more refined analysis of Section \ref{sec-separation} (see also \cite{HusseinKreSie15}).

\begin{remark} Even if we are mostly interested in the large eigenvalue limit of the spectrum, we note that $0$ is an eigenvalue of $H_\a$ if and only if 
\begin{equation*} %\label{def-a0}
\a = \sum_{j=1}^N \frac 1 {\ell_j},
\end{equation*}
and that in this case $\ker(H_\a)$ is generated by $u = (u_j)_{1 \leq j \leq N}$ such that $u_j(x_j) = x_j / \ell_j$ for all $j \in \NN$ and $x_j \in [0,\ell_j]$.\\
\end{remark}

Now we consider non-zero eigenvalues. Let $z \in \C^*$. For $u = (u_j)_{1 \leq j \leq N} \in \mathcal{H}^2(\G)$, we have $-u_j'' = z^2 u_j$ 
for all $j \in \NN$ if and only if there exist constants $(\b_j,\tilde \b_j)_{1 \leq j \leq N}$ such that, for $j \in \NN$ and for $x \in [0,\ell_j]$, 
\begin{equation*}
u_j(x) = \b_j \sin(zx) + \tilde \b_j \cos(zx).
\end{equation*}
Such a function $u$ belongs to $\Dom(H_\a)$ (and then $H_\a u = z^2 u$) if and only if the vertices conditions \eqref{cond-dirichlet}, \eqref{cond-continuite} and \eqref{cond-Kirchhoff} are satisfied. The Dirichlet condition \eqref{cond-dirichlet} at the external vertices gives 
\begin{equation} \label{eq-tilde-beta-j}
\forall j \in \NN, \quad \tilde \b_j = 0,
\end{equation}
so that 
\begin{equation} \label{expr-u-beta}
u_j(x) = \b_j \sin(zx).
\end{equation}

With this simple calculation, we can already prove the following rough result.

\begin{proposition} \label{prop-strip}
\begin{enumerate}[\rm (i)]
\item Let $\a \in \C$. If $\l \in \C$ is an eigenvalue of $H_\a$ then $\Im(\l)$ and $\Im(\a)$ have the same sign (recall that the spectrum of $H_\a$ is real if $\a$ is real, on the other hand there can be real eigenvalues even if $\a$ is not real).
\item Let $r > 0$. There exists $c_r > 0$ such that, if $\l \in \C$ is an eigenvalue of $H_\a$ for some $\a \in D(r)$, then 
\[
\Re(\l) \geq -c_r \quad \text{and} \quad \abs{\Im(\l)} \leq c_r.
\]
Moreover, we have 
$$
\limsup_{R \to +\infty} \sup_{\substack{\l \in \Sp(H_\a)\\ \Re(\l) \geq R}} |\Im(\l)| \leq \frac {2\abs{\Im(\a)}} {|\Gamma|}.
$$
\end{enumerate}
\end{proposition}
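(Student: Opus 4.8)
The plan is to read both parts off the quadratic form $Q_\a$ of Proposition~\ref{prop-sectorial} together with the explicit shape \eqref{expr-u-beta} of the eigenfunctions. For part~(i), let $\l$ be an eigenvalue of $H_\a$ with eigenfunction $u\in\Dom(H_\a)\subset\Dom(Q_\a)$. Taking $w=u$ in \eqref{eq-Ha-Qa} and using $H_\a u=\l u$ gives
\[
\l\,\nr{u}_{L^2(\G)}^2=Q_\a(u,u)=\sum_{j=1}^N\nr{u_j'}_{L^2(0,\ell_j)}^2+\a\,\abs{u(v)}^2 .
\]
The sum is real and nonnegative, so taking imaginary parts yields the identity
\[
\Im(\l)\,\nr{u}_{L^2(\G)}^2=\Im(\a)\,\abs{u(v)}^2 ,
\]
which is the backbone of the whole statement. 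Since $\abs{u(v)}^2\ge 0$, the numbers $\Im(\l)$ and $\Im(\a)$ cannot have opposite signs, and $\Im(\l)=0$ exactly when $u(v)=0$, explaining the possibility of real eigenvalues for non-real $\a$. This proves~(i).

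For part~(ii), the lower bound on the real part is immediate from Proposition~\ref{prop-sectorial}: the spectrum lies in $\Sigma_{\g,\th}$ and every $\z\in\Sigma_{\g,\th}$ satisfies $\Re(\z)\ge\g$, so $\Re(\l)\ge\g\ge-c_r$ for $c_r$ large. For the imaginary part I introduce the ratio $\rho(\l)=\abs{u(v)}^2/\nr{u}_{L^2(\G)}^2\in[0,+\infty)$, so that the identity above reads $\abs{\Im(\l)}=\abs{\Im(\a)}\rho(\l)\le r\,\rho(\l)$ and everything reduces to controlling $\rho(\l)$. Assuming $\l\ne0$ (the case $\l=0$ being trivial), I write $\l=z^2$ with $z=a+ib$, $a=\Re(z)\ge0$; from \eqref{expr-u-beta} and the identity $\abs{\sin(zx)}^2=\sin^2(ax)+\sinh^2(bx)$ a direct computation gives $\abs{u(v)}^2=\abs{\b_j}^2(\sin^2(a\ell_j)+\sinh^2(b\ell_j))$ (independent of $j$) and an analogous formula for $\nr{u_j}_{L^2(0,\ell_j)}^2$, whence
\[
\rho(\l)^{-1}=\sum_{j=1}^N\frac{1}{\sin^2(a\ell_j)+\sinh^2(b\ell_j)}\left(\frac{\sinh(2b\ell_j)}{4b}-\frac{\sin(2a\ell_j)}{4a}\right).
\]
Using $\frac{\sinh(2b\ell_j)}{4b}\ge\frac{\ell_j}{2}$, $\bigl|\frac{\sin(2a\ell_j)}{4a}\bigr|\le\frac{1}{4a}$ and $\sin^2(a\ell_j)+\sinh^2(b\ell_j)\le\cosh^2(b\ell_j)$, I obtain, for $a$ large enough, the single workhorse estimate
\[
\rho(\l)^{-1}\ge\sum_{j=1}^N\left(\frac{\tanh(b\ell_j)}{2b}-\frac{1}{4a}\right).
\]

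This estimate drives both remaining claims. For the sharp $\limsup$, when $\Re(\l)=a^2-b^2\to+\infty$ while $\abs{\Im(\l)}=2ab$ stays bounded, one has $a\to+\infty$ and $b\to0$; then $\frac{\tanh(b\ell_j)}{2b}\to\frac{\ell_j}{2}$ and $\frac{1}{4a}\to0$, so $\rho(\l)^{-1}\ge\frac{\abs\G}{2}(1+o(1))$ and $\rho(\l)\le\frac{2}{\abs\G}(1+o(1))$; the exact identity then gives $\limsup\abs{\Im(\l)}\le 2\abs{\Im(\a)}/\abs\G$. For the uniform bound $\abs{\Im(\l)}\le c_r$, I argue by contradiction: if $2a_nb_n=\abs{\Im(\l_n)}\to+\infty$ along eigenvalues with $\a_n\in D(r)$ (say $b_n>0$), then $\rho(\l_n)\ge\abs{\Im(\l_n)}/r\to+\infty$, which through the estimate forces $b_n\to+\infty$ (otherwise $\rho(\l_n)$ would remain bounded). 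If $a_n$ stays bounded this makes $\Re(\l_n)=a_n^2-b_n^2\to-\infty$, contradicting $\Re(\l_n)\ge\g$; and if $a_n\to+\infty$, then using $b_n\le a_n+O(1)$ (again from $\Re(\l_n)\ge\g$) the estimate yields $\rho(\l_n)^{-1}\gtrsim N/b_n$, so $2a_nb_n=\abs{\Im(\a_n)}\rho(\l_n)\lesssim r\,b_n$ forces $a_n$ bounded, a final contradiction.

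I expect the genuine difficulty to be exactly this uniform bound $\abs{\Im(\l)}\le c_r$. The ratio $\rho(\l)$ is \emph{not} bounded a priori, since it degenerates as $\abs b\to\infty$, so the sectoriality alone is insufficient. The resolution is the self-consistency between the two expressions $\Im(\l)=2ab$ (coming from $\l=z^2$) and $\Im(\l)=\Im(\a)\rho(\l)$ (coming from part~(i)), which together with $\Re(\l)\ge\g$ rules out large $\abs b$ and hence runaway imaginary parts.
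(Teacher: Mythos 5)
Your proof is correct and follows essentially the same route as the paper: both rest on the identity $\Im(\l)\,\nr{u}^2_{L^2(\G)}=\Im(\a)\abs{u(v)}^2$ from the quadratic form, the explicit computation of $\nr{u_j}^2_{L^2(0,\ell_j)}$ for $u_j=\b_j\sin(zx)$, and an asymptotic analysis showing $\Im(z)\to 0$ as $\Re(\l)\to+\infty$, which forces $\abs{u(v)}^2/\nr{u}^2\to 2/\abs{\G}$ at most. The only difference is bookkeeping: the paper runs a trichotomy on subsequential limits of $\eta_m=\Im(z_m)$, whereas you package the same computation into the single lower bound $\rho(\l)^{-1}\ge\sum_j\bigl(\tfrac{\tanh(b\ell_j)}{2b}-\tfrac1{4a}\bigr)$ (valid since each bracket is positive for $a$ large) and extract both conclusions from it.
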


\begin{remark}
The last statement should be compared with the support of $\mu_\ell$ in Theorem \ref{t:maintheo2}. In particular, this is an equality when $\kappa \cdot \ell\neq 0$ for every $\kappa\in\Z^N\setminus\{0\}$.
\end{remark}

\begin{proof}
Let $z = \tau + i\eta \in \C^*$ (with $\tau,\eta \in \R$) and assume that $\l = z^2$ is an eigenvalue of $H_\a$. Let $u \in \ker(H_\a - z^2) \setminus \set 0$, normalized by $\nr{u}_{L^2(\G)} = 1$. Let $\b_1,\dots,\b_N \in \C$ be such that \eqref{expr-u-beta} holds. Taking the imaginary parts in the equality 
\[
\l = \innp{H_\a u}{u} = Q_\a(u,u)
\]
gives
\begin{equation}\label{e:sign}
\Im(\l) = \Im(\a) \abs{u(v)}^2.
\end{equation}
This implies in particular that $\Im(\l)$ and $\Im(\a)$ have the same sign.

Then we compute
\begin{equation}\label{eq-nr-betaj}
\begin{aligned}
1 = \nr{u}_{L^2(\G)}^2 = \sum_{j=1}^N \int_0^{\ell_j} \abs{\b_j \sin(z x_j)}^2 \, dx_j
% & = \frac{\abs{\b_j}^2}{2} \int_0^{\ell_j} \big( \cosh(2 \eta x_j) - \cos(2\tau x_j) \big) \, dx_j\\ 
% & 
= \sum_{j=1}^N \abs{\b_j}^2 \left( \frac {\sinh(2\eta \ell_j)}{4\eta} - \frac {\sin(2 \tau \ell_j)}{4\tau} \right).
\end{aligned}
\end{equation}
% and, on the other hand,
% \[
% \forall j \in \Ii 1 N, \quad \abs{u(v)}^2 = \abs{\b_j}^2 \abs{\sin(z \ell_j)}^2.
% \]

Let $\seq \l m$ be a sequence of complex numbers such that $\Re(\l_m)$ goes to $+\infty$ and, for all $m \in \N$, $\l_m$ is an eigenvalue of $H_{\a_m}$ for some $\a_m \in D(r)$. For $m \in \N$ we consider $\tau_m \geq 0$ and $\eta_m \in \R$ such that $\l_m = (\tau_m + i\eta_m)^2$. Then $\tau_m$ goes to $+\infty$ as $m$ goes to $+\infty$. Assume for instance that $\Im(\a) \geq 0$. After extracting a subsequence if necessary, we can assume that the sequence $\seq \eta m$ has a limit $\eta_\infty$ in $[0,+\infty]$. For $m \in \N$ we consider $u_m \in \ker(H_{\a_m}-\l_m)$ such that $\nr{u_m}_{L^2(\G)} = 1$ and we denote by $\b_{1,m},\dots,\b_{N,m} \in \C$ the corresponding coefficients in \eqref{eq-tilde-beta-j}.

Assume that $\eta_\infty = + \infty$. Then, by \eqref{eq-nr-betaj}, we have 
\[
\sum_{j=1}^N  \frac {\abs{\b_{j,m}}^2 e^{2\eta_m \ell_j}}{4\eta_m} \limt m {+\infty} 1.
\]
Then
\[
\abs{u(v)}^2 = \abs{\b_{1,m}}^2 \abs{\sin(z_m \ell_1)}^2 \leq \abs{\b_{1,m}}^2 e^{2\eta_m \ell_1} = \Oc (\eta_m),
\]
and hence, by \eqref{e:sign},
\[
2 \tau_m \eta_m = \Im(\l_n) \leq r \abs{u(v)}^2 =  \Oc(\eta_m).
\]
This gives a contradiction.

We similarly get a contradiction if we assume that $\eta_\infty \in (0,+\infty)$ so, finally, $\eta_m$ goes to 0 as $m$ goes to $+\infty$. \eqref{eq-nr-betaj} now gives 
\[
\sum_{j=1}^N  \frac {\abs{\b_{j,m}}^2 \ell_j}{2} \limt m {+\infty} 1
\]
so 
\[
\limsup_{m \to +\infty} \frac {\abs \G \abs{u_m(v)}^2}2 = \limsup_{m \to +\infty} \sum_{j=1}^N \frac {\abs {\b_{j,m}}^2 \ell_j} 2 \abs{\sin(z_m \ell_j)}^2 \leq 1.
\]
Then \eqref{e:sign} gives
\[
\limsup_{m \to +\infty} \abs{\Im(\l_n)} \leq \abs{\Im(\a)} \abs {u_m(v)}^2 \leq \frac {2 \abs {\Im(\a)}}{\abs \G}. 
\]
We conclude with Proposition \ref{prop-spectre-discret}.
\end{proof}

% Since it is easier to work with the square roots of the eigenvalues, we record for further use the following consequence of Proposition~\ref{prop-strip}. 
% \begin{corollary} \label{cor-strip}
% Let $r > 0$. There exists $C_r \geq 0$ such that if $z^2$ is an eigenvalue of $H_{\a}$ for some $z \in \C$ and $\a \in D(r)$ then 
% %$z \in \O_r$ where 
% %\[
% %\O_r =\set {\z \in \C \st \abs \z \leq C_r \text{ or } \abs{\Im(\z)} \leq \frac {C_r}{1+\abs{\Re(\z)}}}.
% %\]
% %There exists $C >0$ such that if $\lambda=z^2$ is an eigenvalue of $H_{\a}$, then 
% $$
% \abs{\Im(z)} \leq \frac {C_r}{1+\abs{\Re(z)}}.
% $$
% \end{corollary}

\subsection{Comparison with the eigenvalues of the Dirichlet problem} \label{sec-eigenvalue-Dirichlet}

The spectrum of $H_\infty$ (see \eqref{cond-full-Dirichlet}) is completely explicit. If we set 
\[
\Tc_\infty = \bigcup_{j=1}^N \frac {\pi \Z}{\ell_j},
\]
then 
\[
\Sp(H_\infty) = \set{\tau^2, \tau \in \Tc_\infty}.
\]
Moreover, if $\tau^2$ is an eigenvalue of $H_\infty$, then its multiplicity is given by
$$\Card{\left\{j \in \NN : \tau \ell_j \in \pi \Z\right\}}.$$
In particular, if the lengths of the edges are pairwise incommensurable, then all the eigenvalues of $H_\infty$ are simple. On the contrary, if all the lengths are equal, then all the eigenvalues of $H_\infty$ have multiplicity $N-1$.

In the selfadjoint case ($\a \in \R$) it is known (see~\cite[Th.~3.1.8]{Berkolaiko-Kuchment} or Remark \ref{rem-Weyl-H0} below) that between two consecutive eigenvalues of $H_\infty$ there is exactly one simple eigenvalue of $H_\a$. In the degenerate situation where $\l \in \R$ is an eigenvalue of $H_\infty$ of multiplicity $m \geq 2$, then $\l$ is for all $\a \in \R$ an eigenvalue of multiplicity $m-1$ for $H_\a$. The first statement has no meaning for the possibly non-real eigenvalues of $H_\a$, $\a \notin \R$, but the second still holds in general.

\begin{lemma}\label{l:multiplicity} Let $\alpha\in\IC$.
 \begin{enumerate}[\rm (i)]
  \item If $\tau\in\Tc_\infty$, then
  \[
\dim ( \ker( H_\a - \tau^2) ) = \dim (\ker (H_\infty - \tau^2)) - 1.
\]
 \item If $z\in\IC\setminus\Tc_\infty$, then
 %\begin{equation} \label{eq-dim-ker-leq-1}
$$\dim(\ker(H_\a-z^2)) \leq 1.$$
%\end{equation}
 \end{enumerate}

\end{lemma}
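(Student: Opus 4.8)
The plan is to exploit the explicit description of the eigenfunctions obtained above in \eqref{expr-u-beta}. First I note that in both statements the relevant frequency is nonzero: in (ii) we have $z \neq 0$ because $0 \in \Tc_\infty$, and in (i) the number $\tau$ must be nonzero for $\tau^2$ to be an eigenvalue of $H_\infty$ at all. Hence any $u = (u_j)_{1\leq j\leq N}$ with $-u_j'' = z^2 u_j$ satisfying the outer Dirichlet condition \eqref{cond-dirichlet} is of the form $u_j(x) = \b_j \sin(zx)$ for some $\b_1,\dots,\b_N \in \C$. Writing $u(v)$ for the common value in \eqref{cond-continuite}, such a $u$ belongs to $\ker(H_\a - z^2)$ if and only if
\[
\b_j \sin(z \ell_j) = \b_k \sin(z \ell_k) = u(v) \quad \text{for all } j,k \in \NN, \qquad z \sum_{j=1}^N \b_j \cos(z \ell_j) + \a\, u(v) = 0.
\]
So in each case the proof reduces to counting the solutions $(\b_j)_{1\leq j\leq N}$ of this linear system.

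For (ii), assume $z \in \C \setminus \Tc_\infty$, so that $\sin(z\ell_j) \neq 0$ for every $j$. Then the continuity relations force $\b_j = u(v) / \sin(z\ell_j)$, so the whole vector $(\b_j)_j$, and hence $u$, is determined by the single scalar $u(v)$. Therefore the candidate eigenspace is at most one-dimensional, which gives $\dim(\ker(H_\a - z^2)) \leq 1$ (whether the dimension equals $0$ or $1$ is decided by whether the Robin relation $u(v)(z\sum_j \cot(z\ell_j) + \a) = 0$ is automatically satisfied, but this is irrelevant for the stated bound).

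For (i), assume $\tau \in \Tc_\infty$ and put $J_0 = \set{j \in \NN : \tau \ell_j \in \pi \Z}$, so that $m := \Card{J_0} = \dim(\ker(H_\infty - \tau^2)) \geq 1$. For $j \in J_0$ we have $\sin(\tau\ell_j) = 0$; evaluating the continuity relation on such an edge forces $u(v) = 0$. Consequently $\b_j = 0$ for every $j \notin J_0$ (where $\sin(\tau\ell_j) \neq 0$), while the coefficients $(\b_j)_{j \in J_0}$ stay free, an $m$-dimensional family. Crucially, since $u(v) = 0$ the $\a$-term drops out of the Robin relation --- this is exactly why the answer does not depend on $\a$ --- and that relation becomes the single scalar equation $\sum_{j \in J_0} \b_j \cos(\tau \ell_j) = 0$ (using $\tau \neq 0$). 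As $\cos(\tau\ell_j) = \pm 1 \neq 0$ for $j \in J_0$, this is a nontrivial linear constraint, so it lowers the dimension of the free family by exactly one. Hence $\dim(\ker(H_\a - \tau^2)) = m - 1$, as claimed.

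There is no genuine difficulty here beyond linear algebra once \eqref{expr-u-beta} is in hand; the points that require attention are checking that the frequency is nonzero so that this representation applies, observing that the presence of at least one Dirichlet-compatible edge in case (i) forces $u(v) = 0$ and thereby eliminates the $\a$-dependence, and verifying that the surviving Robin equation is nontrivial, which follows from $\cos(\tau\ell_j) = \pm 1$.
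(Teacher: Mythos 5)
Your proof is correct and follows essentially the same route as the paper's: in case (i) you show $u(v)=0$, so the eigenspace sits inside $\ker(H_\infty-\tau^2)$ and the Robin condition reduces to one non-trivial linear constraint (the paper phrases this as the non-vanishing of the linear form $u\mapsto\sum_j u_j'(\ell_j)$); in case (ii) you show $u$ is determined by the scalar $u(v)$, which is the paper's observation that $u\mapsto u(v)$ is injective on the eigenspace. The only difference is that you make the linear algebra explicit in the coefficients $\b_j$, and you are slightly more careful than the paper in flagging that the frequency must be non-zero.
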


\begin{proof}
Assume that $\tau^2$ is an eigenvalue of $H_\infty$. There exists $j \in \NN$ such that $\tau \ell_j / \pi$ is an integer. Let $u \in \ker(H_\a - \tau^2)$. By \eqref{expr-u-beta} we have $u(v) = u_j(\tau \ell_j) = 0$. This implies that $\ker(H_\a - \tau^2) \subset \ker(H_\infty-\tau^2)$. Moreover, 
the map $u \mapsto \sum_{j = 1}^N u_j'(\ell_j)$ is a non-zero linear form on $\ker(H_\infty - \tau^2)$, which yields the first part of the lemma.

Now let $z \in \C \setminus \Tc_\infty$. If $u \in \ker(H_\a-z^2)$ is such that $u(v) = 0$, then it belongs to $\ker(H_\infty-z^2)$, so $u = 0$. This means that the linear form $u \mapsto u(v)$ is injective on $\ker(H_\a-z^2)$ and proves the second statement.
\end{proof}

\subsection{The secular equation} \label{sec-secular}

In Paragraph \ref{sec-first-properties}, we have only used the vertex condition \eqref{cond-dirichlet}. For a more refined analysis, we have to take into account \eqref{cond-continuite} and \eqref{cond-Kirchhoff}.

Let $z \in \C^*$ and $u \in \ker(H_\a-z^2)$. Let $\b \in \C^N$ be such that \eqref{expr-u-beta} holds. Then the conditions \eqref{cond-continuite} and \eqref{cond-Kirchhoff} at the vertex $v$ read
\begin{equation} \label{eq-continuite-z}
 \b_1 \sin(z \ell_1) =\b_2 \sin(z \ell_2)=\ldots= \b_N \sin(z\ell_N)
\end{equation}
and
\begin{equation} \label{eq-Kirchhoff-z}
z \sum_{j=1}^N  \b_j \cos(z\ell_j) + \a \b_N \sin(z \ell_N) = 0.
\end{equation}
We can divide this last equality by $z$. Conversely, if $\b$ satisfies \eqref{eq-continuite-z}-\eqref{eq-Kirchhoff-z} and $u \in L^2(\G)$ is defined by \eqref{expr-u-beta} then $u \in \Dom(H_\a)$ and $H_\a u = z^2 u$. Thus, $z^2\neq 0$ is an eigenvalue of $H_\a$ if and only if
\[
\det \left( A\left( z\ell, \frac \a z \right) \right)= 0,
\]
where, for $y= (y_1,\dots,y_N) \in \C^N$ and $\eta \in \C$, 
\[
A(y,\eta) =\left(\begin{array}{ccccc}\sin(y_1) & -\sin(y_2) & 0 &\ldots & 0\\
                                                 0        &\sin(y_2)  &-\sin(y_3) &0&\ldots\\
                                                 \vdots & \ddots &\ddots &\ddots &\vdots\\
                                                 0 &\ldots &0 &\sin(y_{N-1}) &-\sin(y_N)\\
                                                 \cos(y_1) &\cos(y_2) &\ldots &\ldots& \cos(y_N)+\eta\sin(y_N)
                             \end{array}
                       \right).
\]
Moreover, since the map 
\[
(\b_j)_{1 \leq j \leq N} \in \ker \left( A \left( z\ell, \frac \a z \right) \right) \mapsto u \in \ker\left(H_\a-z^2 \right),
\]
defined by \eqref{expr-u-beta}, is an isomorphism, we have
\[
\dim \left( \ker \left( A \left( z\ell, \frac \a z \right) \right) \right) = \dim\left(\ker\left(H_\a-z^2 \right)\right).
\]
By a straightforward computation we observe that 
\begin{equation}\label{e:det-robin}
\det(A(y,\eta)) = F_N(y) + \eta F_D(y),
\end{equation}
where 
\begin{equation} \label{e:det-neumann-explicit}
F_N(y_1,\ldots,y_N)=\sum_{j=1}^N\cos(y_j)\prod_{k\neq j}\sin(y_k)
\end{equation}
and
\begin{equation} \label{e:det-dirichlet}
F_D(y_1,\ldots,y_N)=\prod_{j=1}^N\sin(y_j).
\end{equation} 

Note that $F_N = \det(A(\cdot,0))$ is the determinant associated with the Neumann (or Kirchhoff) problem ($\a = 0$) and $F_D$ is the determinant corresponding to the Dirichlet problem, i.e. $z^2$ is an eigenvalue of $H_\infty$ if and only if $F_D(z\ell) = 0$.\\

For $y \in \C^N$ such that $F_D(y) \neq 0$ we set 
\begin{equation}\label{e:neumann-cotan}
\Psi(y)= - \frac {F_N(y)}{F_D(y)} = -\sum_{j=1}^N\cotan(y_j).
\end{equation}

\begin{remark} \label{rem-Weyl-H0}
Let $\tau \in \R_+ \setminus \Tc_\infty$. Then $\tau^2$ is an eigenvalue of $H_0$ is and only if $\Psi(\tau \ell) = 0$. We have 
\[
\frac d {d\tau} \Psi(\tau \ell) = \sum_{j=1}^N \ell_j \big( 1 + \cotan(\tau \ell_j)^2 \big) > 0.
\]
Since $\abs{\Psi(\tau\ell)}$ goes to $+\infty$ when $\tau$ approaches $\Tc_\infty$, we see that in each connected component $I$ of $\R_+ \setminus \Tc_\infty$ (which does not contain 0) there exists exactly one $n \in \N^*$ such that $\tau_n \in I$. See Figure \ref{fig-Psi} in Section \ref{sec-separation} below. On the other hand, we deduce from the discussion of Paragraph \ref{sec-eigenvalue-Dirichlet} that the operator $H_\infty$ satisfies a Weyl Law as in \eqref{eq-Weyl-H0}. Then, combining Lemma \ref{l:multiplicity} and the previous remark, we recover \eqref{eq-Weyl-H0} for $H_0$. In fact, the same applies to $H_\a$ for any $\a \in \R$ if we observe that on $\R_+ \setminus \Tc_\infty$ we also have 
\[
\frac d {d\tau} \big( \tau \Psi(\tau \ell) \big) > 0.
\]

\end{remark}

\section{Eigenvalues of \texorpdfstring{$H_0$}{H0} and the Barra-Gaspard measure} \label{sec-H0}

In this section, we review a few facts on a natural Radon measure associated to our metric star graph $(\Gamma,\ell)$. This measure was defined by Barra and Gaspard in~\cite{barrag00} to study the level spacings of the operator $H_0$ (see also~\cite{BerkolaikoWi10}) and further used to study the distribution of the eigenmodes of $H_0$~\cite{KeatingMarWin03, BerkolaikoKeaWin04, cdverdiere15} or resonances on noncompact graphs~\cite{CdVerdiereTr18}. The main differences with these references is that we use the explicit structure of the graph to handle \emph{any} choice of metric on $\Gamma$.

\subsection{A stratified manifold associated to the eigenvalues of \texorpdfstring{$H_0$}{H0}}

We have said in the previous section that $\tau^2$ is an eigenvalue of $H_0$ if and only if $F_N(\tau \ell) = 0$, where $F_N$ is defined by \eqref{e:det-neumann-explicit}. 
We set 
\[
Z = \set{y \in \R^N : F_N(y) = 0}.
\]
This defines a stratified submanifold, which can be splitted as
$$
Z=\bigsqcup_{J\subset \NN}Z_J,
$$
where, for $J \subset \NN$,
$$
Z_J=\left\{y\in Z: y_j= 0\ \text{mod}\ \pi\Leftrightarrow\ j\in J\right\}.
$$

With this definition of $Z$, we see that $\tau^2$ is an eigenvalue of $H_0$ if and only if $\tau \ell \in Z$. More precisely,
\begin{enumerate}[\rm (i)]
\item if $\tau \ell \in Z_\emptyset$ then $F_D(\tau \ell) \neq 0$ so, by Lemma \ref{l:multiplicity}, $\tau^2$ is a simple eigenvalue of $H_0$ ;
\item if $\tau \ell \in Z_J$ for some $J \subset \NN$ with $\Card J \geq 2$, then $\tau^2$ is an eigenvalue of $H_\infty$ of multiplicity $\Card J$ and, again by Lemma \ref{l:multiplicity}, $\tau^2$ is an eigenvalue of $H_0$ of multiplicity $\Card J - 1$.
\end{enumerate}
Note that $Z_J$ is empty if $\Card J=1$ and $Z_\emptyset = \Psi\inv(\set 0)$ (see \eqref{e:neumann-cotan}). Then for $J \subset \NN$ we set 
\begin{equation} \label{def-mJ}
m_J = 
\begin{cases}
1 & \text{if } J = \emptyset,\\
\Card J-1 & \text{if } \Card J \geq 2.
\end{cases}
\end{equation}

Now we check that each stratum of $Z$ is a submanifold of $\R^N$ to which the vector $\ell$ is transverse.

\begin{lemma} \label{lem-Z-manifold}
\begin{enumerate}[\rm (i)]
\item $Z_\emptyset$ is a submanifold of $\R^N$ of dimension $N-1$, and for all $y \in Z_\emptyset$ we have $\nabla \Psi(y) \cdot \ell \neq 0$.
\item Let $J \subset \NN$ with $\Card J \geq 2$. Then $Z_J$ is a submanifold of $\R^N$ of dimension $N - (\Card J)$ and its boundary $\partial Z_J = \overline{Z_J} \setminus Z_J$ satisfies
\[
\partial Z_J = \bigcup_{J \subsetneqq J' \subset \NN} Z_{J'}.
\]
Moreover, $\ell$ is transverse to $Z_J$.
\end{enumerate}
\end{lemma}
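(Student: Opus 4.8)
The plan is to treat the two parts by rather different means: the stratum $Z_\emptyset$ is a genuine zero set of $\Psi$, so I would handle it with a regular-value argument, whereas for $\Card J \geq 2$ the stratum $Z_J$ turns out to be cut out purely by coordinate conditions. For (i), I would start from the fact that on the open set $U = \set{y \in \R^N : F_D(y) \neq 0}$ one has $Z_\emptyset = \Psi\inv(\set 0)$, with $\Psi$ smooth on $U$. From \eqref{e:neumann-cotan}, $\Psi(y) = -\sum_{j=1}^N \cotan(y_j)$, so $\partial_{y_j}\Psi(y) = 1 + \cotan(y_j)^2 > 0$ for each $j$. Hence $\nabla\Psi$ never vanishes on $U$ (all its components are strictly positive), so $0$ is a regular value of $\restr{\Psi}{U}$ and the implicit function theorem gives that $Z_\emptyset$ is a smooth submanifold of dimension $N-1$. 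The transversality statement is immediate from the same computation: since every $\ell_j > 0$,
\[
\nabla\Psi(y) \cdot \ell = \sum_{j=1}^N \ell_j \big(1 + \cotan(y_j)^2\big) > 0 .
\]

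For (ii), the key observation—and really the only nontrivial point—is that $F_N$ vanishes identically on the affine locus $L_J = \set{y \in \R^N : y_j \in \pi\Z \text{ for all } j \in J}$ as soon as $\Card J \geq 2$. Indeed, in each term $\cos(y_j)\prod_{k\neq j}\sin(y_k)$ of \eqref{e:det-neumann-explicit} the product omits only the single index $j$, so when $\Card J \geq 2$ it still contains a factor $\sin(y_{j'})$ with $j' \in J$, which vanishes on $L_J$. Thus the equation $F_N = 0$ is automatically satisfied on $L_J$, and
\[
Z_J = L_J \cap \set{y : y_k \notin \pi\Z \text{ for all } k \notin J}.
\]
This exhibits $Z_J$ as an open subset of $L_J$, which is a disjoint union of affine subspaces of dimension $N - \Card J$ (one for each choice of $(y_j)_{j\in J} \in (\pi\Z)^{J}$); hence $Z_J$ is a submanifold of that dimension.

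It then remains to identify the boundary and to check transversality. For the closure I would use that $\pi\Z$ has empty interior, so the open constraints $y_k \notin \pi\Z$ can be relaxed in the limit and $\overline{Z_J} = L_J$. A point of $\overline{Z_J} \setminus Z_J$ is therefore a $y \in L_J$ with $y_k \in \pi\Z$ for at least one $k \notin J$; setting $J' = \set{i : y_i \in \pi\Z}$ gives $J \subsetneqq J'$ and $y \in Z_{J'}$, while conversely every $Z_{J'}$ with $J \subsetneqq J'$ sits in $\overline{Z_J}\setminus Z_J$, which yields the asserted formula for $\partial Z_J$. Finally, the tangent space is $T_y Z_J = \set{v \in \R^N : v_j = 0 \text{ for all } j \in J}$, and since each $\ell_j$ is strictly positive the vector $\ell$ has a nonzero component along $J$, so $\ell \notin T_y Z_J$; that is, the flow direction $\ell$ is nowhere tangent to $Z_J$, which is the transversality claimed. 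The main obstacle is conceptual rather than computational: one must recognize that for $\Card J \geq 2$ the stratum is governed entirely by the coordinate conditions $y_j \in \pi\Z$ and not in any genuine way by $F_N = 0$. Once this is seen, both the manifold structure and the transversality reduce to elementary linear algebra.
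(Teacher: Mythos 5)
Your proof is correct and follows essentially the same route as the paper's: a regular-value/implicit-function argument for $Z_\emptyset$ based on $\nabla\Psi(y)\cdot\ell=\sum_j\ell_j(1+\cotan^2(y_j))>0$, and for $\Card J\geq 2$ the observation that $F_N$ vanishes identically once two coordinates lie in $\pi\Z$, so that $Z_J$ is an open dense subset of the affine locus $L_J$, with the boundary and transversality statements then read off directly. You merely spell out a few steps the paper leaves implicit (the density argument for $\overline{Z_J}=L_J$ and the explicit tangent space), which is harmless.
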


\begin{proof}
\stepp Let $y \in Z_\emptyset$. We have
\[
\nabla \Psi(y) \cdot \ell = \sum_{j=1}^N \ell_j \big( 1 + \cotan(y_j)^2 \big) > 0.
\]
In particular, $\nabla \Psi(y) \neq 0$. By the Implicit Function Theorem, this proves that $Z_\emptyset = \Psi\inv(\set 0)$ is a submanifold of dimension $(N-1)$ in $\R^N$, and $\ell$ is transverse to $Z_\emptyset$.

\stepp Now we consider $J \subset \NN$ with $\Card J \geq 2$. Then any $y \in \R^N$ such that $y_j = 0 \, \text{mod}\ \pi$ for all $j \in J$ belongs to $Z$, so 
\[
Z_J = \left\{y\in \R^N : y_j= 0\ \text{mod}\ \pi \, \Leftrightarrow \, j\in J\right\}.
\]
This defines a submanifold of dimension $N - (\Card J)$. On the other hand, it is not difficult to check that  
\[
\overline{Z_J} = \left\{y\in \R^N : y_j= 0\ \text{mod}\ \pi, \, \forall j\in J\right\},
\]
and the statement about $\partial Z_J$ follows. Finally, $\ell$ is transverse to each $Z_J$ because all its components are positive.
\end{proof}

% Moreover, one can verify that, for $y\in Z_{\emptyset}$, 
% $\nabla_yF_N\neq 0.$ Hence, by , $Z$ is a $N-1$-dimensional submanifold near each point 
% of the regular stratum $Z_{\emptyset}$. Fix now $J$ such that $\Card J\geq 2$. If $\lambda \ell$ belongs to $Z_J$, then 
% the multiplicity of the corresponding eigenvalue is given by the dimension of the kernel of the matrix $A_N(\lambda \ell)$, 
% i.e. $N$ minus the maximal number of independent rows of this matrix. Thus, one can verify in each case that
% \begin{lemma} $\tau^2$ is an eigenvalue of $H_0$ if and only if $\tau \ell\in Z$. Moreover, its 
% multiplicity is given by $|\Card J-1|$ if $\tau\ell\in Z_J$.
%  
% \end{lemma}

\subsection{Restriction to a dynamical system on a torus}

Since $F_N$ is $2\pi$-periodic with respect to each variable, the condition $F_N(\tau \ell) = 0$ can be seen as an equation on the torus $\T^N = \R^N / (2\pi\Z^N)$.
We set
$$\Lambda_\ell=\left\{k\in\IZ^N:k \cdot \ell=0\right\}$$
and
$$\Lambda_\ell^\perp=\left\{\xi\in\IR^N:\forall k\in\Lambda_\ell,\ k \cdot \xi=0\right\}.$$
In particular, $\tau \ell$ belongs to $\L_\ell^\perp$ for all $\tau \in \R$. We denote by $N_\ell$ the dimension of $\L_\ell^\perp$.

\begin{remark}
$N_\ell$ is the smallest integer for which there exist $\tilde \ell_1,\dots,\tilde \ell_{N_\ell} \in (0,+\infty)$ such that for all $j \in \NN$ we have 
\[
\ell_j = \sum_{k=1}^{N_\ell} m_{j,k} \tilde \ell_k,
\]
for some $m_{j,1},\dots,m_{j,N_\ell} \in \N^*$.
\end{remark}

Noticing that $2\pi\IZ^N\cap\Lambda_\ell^\perp$ is a lattice of $\L_\ell^\perp$, we define the torus
$$
\IT_\ell=\Lambda_\ell^\perp/(2\pi\IZ^N\cap\Lambda_\ell^\perp).
$$ 
It can be identified with a subset of $\T^N$. We denote by $\m_{\T_\ell}$ the Lebesgue measure on $\T_\ell$ (inherited from the Lebesgue measure on $\L_\ell^\perp$) and we set $\abs{\T_\ell} = \m_{\T_\ell}(\T_\ell)$.\\

For $y \in \T^N$ and $t \geq 0$ we set 
\[
\f_\ell^t (y) = y + t\ell \quad \text{mod } 2 \pi.
\]
From the unique ergodicity of this flow on $\T_\ell$, one has, for every continuous function $f$ on $\IT^N$ and uniformly for $y\in\IT_\ell$,
\begin{equation}\label{e:unique-ergodicity}
 \lim_{T\rightarrow+\infty}\frac{1}{T}\int_0^T (f\circ\varphi_\ell^t)(y)dt
 =\frac{1}{\abs{\IT_\ell}} \int_{\IT_\ell}f \, d\m_{\T_\ell}
\end{equation}
(this can be computed explicitely if $f$ is of the form $y \mapsto e^{i \kappa \cdot y}$ for some $\kappa \in \Z^N$, and the general case follows by decomposing $f$ in Fourier series). This implies in particular that any orbit of the flow $\varphi^t_\ell$ starting in $\T_\ell$ is dense in $\T_\ell$.\\

Since $Z$ (and each $Z_J$, $J \subset \NN$) is $(2\pi\Z^N)$-periodic, we can consider 
% the restriction of the vanishing locus to $\Lambda_\ell^\perp$, i.e.
% $$Z_\ell=Z\cap\Lambda_\ell^\perp=\bigsqcup_{J\subset \{1,\ldots,N\}}Z_{J,\ell},$$
% where $Z_{J,\ell}=Z_J\cap\Lambda_\ell^\perp.$ Then we set 
$$\mathcal{Z}=(Z \cap \Lambda_\ell^\perp)/(2\pi\IZ^N\cap\Lambda_\ell^\perp)=\bigsqcup_{J\subset \{1,\ldots,N\}}\mathcal{Z}_{J},$$
where
$$\mathcal{Z}_{J}= (Z_J \cap \L_\ell^\perp) /(2\pi\IZ^N\cap\Lambda_\ell^\perp).$$

Taking the convention that  $\dim\Zc_J=-\infty$ when $\Zc_J$ is empty, we have results analogous to Lemma \ref{lem-Z-manifold}:

\begin{lemma} \label{lem-Zc-manifold}
\begin{enumerate}[\rm (i)]
\item $\Zc_\emptyset$ is a submanifold of $\T_\ell$ of dimension $N_\ell-1$ to which $\ell$ is transverse.
\item Let $J \subset \NN$ with $\Card J \geq 2$. Then $\Zc_J$ is a submanifold of $\T_\ell$ of dimension not greater than $N_\ell -1$ and 
\[
\partial \Zc_J:=\overline{\Zc_J}\setminus \Zc_J \subset \bigcup_{J'} \Zc_{J'},
\]
where the union is over the sets $J' \subset \NN$ such that $\dim(\Zc_{J'}) < N_\ell-1$.
Moreover, $\ell$ is transverse to $\Zc_J$.
\end{enumerate}
\end{lemma}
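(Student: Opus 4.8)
The plan is to deduce this lemma from its Euclidean counterpart, Lemma~\ref{lem-Z-manifold}, by restricting every stratum to the linear subspace $\Lambda_\ell^\perp$ and then passing to the quotient torus $\T_\ell$. The two features I would lean on throughout are that $\ell$ itself belongs to $\Lambda_\ell^\perp$ and that every component $\ell_j$ is positive. For (i), note that $\Zc_\emptyset$ is the image in $\T_\ell$ of $Z_\emptyset\cap\Lambda_\ell^\perp=\Psi\inv(\set 0)\cap\Lambda_\ell^\perp$. I would restrict $\Psi$ to the open set $\Lambda_\ell^\perp\cap\set{F_D\neq 0}$: there the differential of $\Psi|_{\Lambda_\ell^\perp}$ at $y$ is $v\mapsto\nabla\Psi(y)\cdot v$ for $v\in\Lambda_\ell^\perp$, and since $\ell\in\Lambda_\ell^\perp$ while $\nabla\Psi(y)\cdot\ell>0$ by Lemma~\ref{lem-Z-manifold}, this differential is nonzero. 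Hence $0$ is a regular value, $Z_\emptyset\cap\Lambda_\ell^\perp$ is a codimension-one submanifold of $\Lambda_\ell^\perp$, and after quotienting by $2\pi\Z^N\cap\Lambda_\ell^\perp$ (under which everything is invariant) we get a submanifold $\Zc_\emptyset$ of $\T_\ell$ of dimension $N_\ell-1$, to which $\ell$ is transverse because $\ell\notin\ker(d(\Psi|_{\Lambda_\ell^\perp}))$.

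For (ii), recall from the proof of Lemma~\ref{lem-Z-manifold} that, when $\Card J\geq 2$, $Z_J$ is the disjoint union of the affine subspaces $\set{y\in\R^N:y_j\in\pi\Z\ (j\in J)}$ restricted by the open conditions $y_k\notin\pi\Z$ ($k\notin J$), each a translate of $V_J=\set{w:w_j=0,\ j\in J}$. Intersecting with $\Lambda_\ell^\perp$ produces a disjoint union of open subsets of affine slices, with tangent direction $V_J\cap\Lambda_\ell^\perp$, which after quotienting is a submanifold of $\T_\ell$. Choosing any $j_0\in J$, the linear form $y\mapsto y_{j_0}$ is \emph{non-constant} on $\Lambda_\ell^\perp$: otherwise $e_{j_0}\perp\Lambda_\ell^\perp$, contradicting $e_{j_0}\cdot\ell=\ell_{j_0}>0$ and $\ell\in\Lambda_\ell^\perp$. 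So the single constraint $y_{j_0}\in\pi\Z$ already cuts the dimension down, giving $\dim\Zc_J\leq N_\ell-1$. Transversality of $\ell$ is then immediate, since $\ell\in\Lambda_\ell^\perp$ but $\ell\notin V_J$ (as $\ell_j>0$ for $j\in J$), hence $\ell\notin V_J\cap\Lambda_\ell^\perp=T\Zc_J$.

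For the boundary statement, I would intersect the inclusion $\partial Z_J=\bigcup_{J\subsetneqq J'}Z_{J'}$ of Lemma~\ref{lem-Z-manifold} with $\Lambda_\ell^\perp$ to get $\partial\Zc_J\subset\bigcup_{J\subsetneqq J'}\Zc_{J'}$, and then verify that the strata that actually occur have dimension strictly below $N_\ell-1$. Fix a slice $A\subset\Lambda_\ell^\perp$ of $\Zc_J$, of dimension $\dim\Zc_J\leq N_\ell-1$; on $A$ the stratum $\Zc_J$ is the open set where $y_k\notin\pi\Z$ for all $k\notin J$. For each such $k$, either $y\mapsto y_k$ is non-constant on $A$, in which case $\set{y_k\in\pi\Z}\cap A$ has dimension $\leq\dim A-1$, or it is constant and $\notin\pi\Z$ on $A$, in which case $\set{y_k\in\pi\Z}\cap A=\emptyset$; the possibility of $y_k$ constant in $\pi\Z$ is excluded, as it would force $A\cap\Zc_J=\emptyset$. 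Consequently the relative frontier of $\Zc_J$ in $\Lambda_\ell^\perp$, which contains $\partial\Zc_J$, has dimension at most $\dim\Zc_J-1\leq N_\ell-2$, so every $\Zc_{J'}$ meeting $\partial\Zc_J$ satisfies $\dim\Zc_{J'}<N_\ell-1$.

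The hard part, I expect, is precisely this last paragraph: showing the topological frontier of each stratum is genuinely of lower dimension requires keeping track of which coordinate forms $y_k$ degenerate (become constant) upon restriction to $\Lambda_\ell^\perp$, a phenomenon dictated by the arithmetic of $\ell$ through the lattice $\Lambda_\ell$. Everything else is a routine restriction-to-a-subspace argument, whereas here one must rule out the degenerate configurations in which an apparently lower stratum would fail to drop dimension.
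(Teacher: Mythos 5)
Your overall strategy --- restrict each Euclidean stratum to $\Lambda_\ell^\perp$ and pass to the quotient torus, exploiting that $\ell\in\Lambda_\ell^\perp$ and that every $\ell_j$ is positive --- is exactly the paper's, and your treatment of (i) and of the submanifold, dimension and transversality claims in (ii) is correct. The problem is the very last inference in your boundary argument. The lemma asserts that every point of $\partial\Zc_J$ lies in a stratum $\Zc_{J'}$ whose \emph{own} dimension is $<N_\ell-1$; what you prove is that the \emph{set} $\partial\Zc_J$ is contained in a countable union of affine pieces of dimension at most $N_\ell-2$. The second statement does not imply the first: a stratum $\Zc_{J'}$ of dimension exactly $N_\ell-1$ could still meet a set of dimension $N_\ell-2$, and at such a point the conclusion of the lemma would fail. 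So ``the frontier has dimension at most $N_\ell-2$, hence every $\Zc_{J'}$ meeting it has dimension $<N_\ell-1$'' is a non sequitur as written.

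The gap is fixable with ingredients you already have. Keep your slice $A=y_0+(V_J\cap\Lambda_\ell^\perp)$ and your index $k\in J'\setminus J$ for which $y\mapsto y_k$ is non-constant on $A$ (your case analysis correctly produces such a $k$ whenever $y_0\in\overline{Z_J\cap A}\setminus Z_J$). Instead of only bounding the dimension of $\{y\in A: y_k\in\pi\Z\}$, note that the component of $\overline{Z_{J'}}\cap\Lambda_\ell^\perp$ through $y_0$ is $y_0+(V_{J'}\cap\Lambda_\ell^\perp)$; since $J\cup\{k\}\subset J'$, this component is contained in $\{y\in A: y_k=y_{0,k}\}$, which has dimension $\dim A-1\leq N_\ell-2$ precisely because $y_k$ is non-constant on $A$. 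Hence $\dim\Zc_{J'}\leq N_\ell-2$, which is the statement actually required. (The paper reaches the same conclusion the other way around: it shows directly that the stratum through $y_0$ must drop dimension, by translating a neighbourhood of $y_0$ in $Z_{J'}\cap\Lambda_\ell^\perp$ along a coordinate direction into $Z_J\cap\Lambda_\ell^\perp$.)
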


\begin{proof}
Since $\ell$ is transverse to $Z_\emptyset = \Psi^{-1}(\set 0)$ and belongs to $\L_\ell^\perp$, $Z_\emptyset \cap \L_\ell^\perp$ is a submanifold of $\L_\ell^\perp$ of dimension $N_\ell-1$ to which $\ell$ is transverse. After taking the quotient, the analogous properties hold for $\Zc_\emptyset$.

Now let $J \subset \NN$ with $\Card J \geq 2$. Since $Z_J$ coincides around each of its points with an affine subspace of $\R^N$ (with constant tangent space), the same holds for $Z_J \cap \L_\ell^\perp$. Since $\ell$ belongs to $\L_\ell^\perp$ and not to $T Z_J$, the dimension of $Z_J \cap \L_\ell^\perp$ is not greater than $N_\ell-1$ and $\ell$ is a transverse vector. All this then holds for $\Zc_J$.   

We have $\partial (Z_J\cap \L_\ell^\perp) \subset \partial Z_J \cap \L_\ell^\perp$. Let $y_0 \in \partial (Z_J\cap \L_\ell^\perp)$ and $J' \subset \NN$ such that $y_0 \in Z_{J'}\cap \L_\ell^\perp$ and $J \subsetneqq J'$. Let $\Vc$ be a neighborhood of $y_0$ in $Z_{J'}\cap \L_\ell^\perp$ such that $\Vc$ is an open subset of an affine subspace of $\R^N$. If we denote by $(f_1,\dots,f_N)$ the canonical basis of $\R^N$, there exists $j \in J' \setminus J$ such that $y_0 + tf_j$ belongs to $Z_J\cap \L_\ell^\perp$ for $t > 0$ small enough. Then for all $y \in \Vc$ and $t > 0$ small enough we have $y + t f_j \in Z_J\cap \L_\ell^\perp$. This implies that 
\[
\dim(Z_{J'}\cap \L_\ell^\perp) < \dim(Z_J\cap \L_\ell^\perp) \leq N_\ell - 1.
\]
The same conclusion holds for $\Zc_J$.
\end{proof}

% 
% We denote by $\text{Leb}_{\mathcal{Z}_{J,\ell}}$ the Lebesgue measure on $\mathcal{Z}_{J,\ell}$ induced by the Euclidean structure on $\IR^N$. 
% These submanifolds have a stratified structure in the following sense:
% \begin{lemma}\label{l:stratified} Let $J$ such that $\operatorname{dim}(\mathcal{Z}_{J,\ell})=N_{\ell}-1$. Then,
% $$\partial\mathcal{Z}_{J,\ell}:=\overline{\mathcal{Z}_{J,\ell}}\setminus\mathcal{Z}_{J,\ell}$$
% is included in the union of the strata $\mathcal{Z}_{J',\ell}$ with $\operatorname{dim}(\mathcal{Z}_{J',\ell})<N_{\ell}-1$. 
% \end{lemma}
% \begin{proof} In the case where $\Card J\geq 2$, this Lemma follows from the fact that, by construction, 
% any connected component of $Z_J$ is a piece of an affine subspaces of $\IR^N$ whose boundary 
% is given by some of the connected components of $Z_{J'}$ with $|J'|>\Card J$. Consider now a point $y$ in 
% $\partial\mathcal{Z}_{\emptyset,\ell}$. It means that $y$ belongs to some $\mathcal{Z}_{J',\ell}$ with $|J'|\geq 2$. 
% 
% [TO DO.] Show that $\operatorname{dim}(\mathcal{Z}_{J',\ell})\neq N_{\ell}-1$.
% 
% 
%  
% \end{proof}

\subsection{Distribution of the eigenvalues by ergodic averaging} \label{sec-ergodic}

We now study the distribution of the eigenvalues of $H_0$ following the seminal work of Barra and Gaspard~\cite{barrag00}. The results in this paragraph were already presented under some genericity assumptions on $\ell$ for quantum star graphs~\cite[\S 3]{KeatingMarWin03} and for general quantum graphs in~\cite[\S 4.3]{BerkolaikoWi10} and~\cite[\S 3]{cdverdiere15}. We briefly recall these arguments to see that the case of a general metric $\ell$ can be handled similarly for quantum star graphs using the conventions of the above paragraphs. For later purpose, note also that, compared with~\cite[\S 3]{KeatingMarWin03}, we will also allow test functions which are not necessarly compactly supported in $\mathcal{Z}_{\emptyset}$. To that aim, we set 
\[
\Jc_\ell = \set {J \subset \NN \st \dim(\Zc_J) = N_\ell - 1}.
\]
For $J \in \Jc_\ell$ we denote by $\m_J$ the Lebesgue measure on $\Zc_J$ and by $\nu$ a unit vector normal to $\Zc_J$ in $\T_\ell$.

\begin{lemma}\label{l:BG} Let $g$ be a continuous function on $\Zc$. We assume that $g$ is compactly supported on $\mathcal{Z}_{J}$ for some $J \in \Jc_\ell$. Then, one has
$$
\frac 1 n \sum_{k=1}^n (g\circ\varphi_\ell^{\tau_k})(0)
\limt n {+\infty} \frac{\pi m_J}{|\Gamma| \abs{\IT_\ell}}\int_{\mathcal{Z}_{J}} g(y) |\ell \cdot \nu(y)| \, d \mu_J(y).$$
\end{lemma}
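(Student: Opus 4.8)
We need to show that the empirical averages of $g$ along the orbit of $0$ under the flow $\varphi_\ell^t$, sampled at the times $\tau_1 < \tau_2 < \cdots$ (the square roots of the eigenvalues of $H_0$), converge to an explicit integral over the stratum $\mathcal{Z}_J$. This is the Barra–Gaspard averaging principle. The key observation is that the eigenvalue times $\tau_k$ are precisely the times at which the orbit $t \mapsto \varphi_\ell^t(0)$ crosses the hypersurface $\mathcal{Z}_\emptyset$ (for $g$ supported in $\mathcal{Z}_\emptyset$), or more generally the stratum $\mathcal{Z}_J$, so that the discrete Birkhoff-type sum on the left should be comparable to a continuous-time integral of $g$ against the measure induced on the crossing surface.

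**Plan of proof.** The plan is to relate the discrete sum over crossing times to a flux integral, and then to apply the unique ergodicity \eqref{e:unique-ergodicity} of the flow on $\T_\ell$. First I would fix a small tubular neighborhood of $\mathcal{Z}_J$ and, for $\e > 0$, approximate the counting/sampling procedure by the \emph{co-area}-type formula: the number of crossings of the orbit through a patch of $\mathcal{Z}_J$ up to time $T$ is, by transversality of $\ell$ (Lemma \ref{lem-Zc-manifold}), asymptotically proportional to the time spent by the orbit in a thin slab around that patch, with proportionality factor $|\ell \cdot \nu|$. Concretely, for a test function $g$ supported in $\mathcal{Z}_J$, I would introduce a thickened function $g_\e$ supported in an $\e$-slab transverse to $\mathcal{Z}_J$, normalized so that integrating across the transverse direction recovers $g$; then
\[
\sum_{k \,:\, \tau_k \leq T} (g\circ\varphi_\ell^{\tau_k})(0)
\approx \frac{1}{\e}\int_0^T (g_\e \circ \varphi_\ell^t)(0)\, |\ell\cdot\nu|^{-1}\, dt \cdot |\ell\cdot\nu|,
\]
the factor $|\ell\cdot\nu|$ accounting for the transverse speed of the orbit through the surface. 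The rigorous version of this is the statement that the time-$T$ orbit crosses $\mathcal{Z}_J$ a number of times asymptotic to $T |\ell\cdot\nu|$ times the local density.

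**Applying unique ergodicity.** Once the discrete sum is rewritten as a continuous-time average of a genuinely continuous function on $\T_\ell$, I would apply \eqref{e:unique-ergodicity} to replace the time average by the space integral against $\frac{1}{|\T_\ell|}\mu_{\T_\ell}$. Letting $\e \to 0$, the thickened integrand $g_\e |\ell\cdot\nu|^{-1}$ against the Lebesgue measure on $\T_\ell$ collapses (by a one-dimensional transverse integration / disintegration of the Lebesgue measure along the normal direction) to the integral $\int_{\mathcal{Z}_J} g(y)\,|\ell\cdot\nu(y)|\, d\mu_J(y)$ against the hypersurface measure, which is exactly the right-hand side up to the normalization. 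The final step is to reconcile the normalization: the count $n$ on the left is the total number of eigenvalues $\tau_k \leq T$, which by the Weyl Law \eqref{eq-Weyl-H0} satisfies $n \sim \frac{|\Gamma|}{\pi} T$; dividing by $n$ rather than by $T$ introduces the factor $\frac{\pi}{|\Gamma|}$, and the multiplicity weight $m_J$ from \eqref{def-mJ} enters because each point of $\mathcal{Z}_J$ (for $\Card J \geq 2$) corresponds to an eigenvalue counted $m_J$ times. Combining these factors yields the stated constant $\frac{\pi m_J}{|\Gamma|\,|\T_\ell|}$.

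**Main obstacle.** The delicate point is making the crossing-to-flux comparison uniform and controlling the strata of lower dimension. For $J = \emptyset$ the surface $\mathcal{Z}_\emptyset$ is a smooth hypersurface transverse to $\ell$ and the tubular-neighborhood argument is clean. The genuine difficulty is that the orbit may also cross the lower-dimensional strata $\mathcal{Z}_{J'}$ with $J' \neq \emptyset$, near which $\mathcal{Z}_\emptyset$ degenerates (the cotangents blow up, cf. Remark \ref{rem-Weyl-H0}); one must verify these contribute negligibly, i.e. that the orbit spends asymptotically no positive fraction of its time near $\bigcup_{J'} \partial\mathcal{Z}_{J'}$. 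This follows because each $\partial\mathcal{Z}_J$ has dimension strictly less than $N_\ell - 1$ (Lemma \ref{lem-Zc-manifold}), hence measure zero in $\T_\ell$, so by unique ergodicity the orbit's sojourn time there vanishes — but turning this into a uniform estimate valid along the \emph{discrete} crossing sequence requires care, and is where the assumption that $g$ be \emph{compactly supported} in a single stratum $\mathcal{Z}_J$ is used to localize away from the singular set.
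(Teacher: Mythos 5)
Your proposal is correct and follows essentially the same route as the paper: thicken $g$ into a flow-box around its support in $\mathcal{Z}_J$ (using transversality of $\ell$), rewrite the discrete sum as $m_J$ times a continuous-time average, apply unique ergodicity \eqref{e:unique-ergodicity}, recover the factor $\abs{\ell\cdot\nu}$ from the Jacobian of the tube coordinates, and get the normalization $\pi/\abs{\G}$ from the Weyl law \eqref{eq-Weyl-H0-2}. The only remark is that the ``main obstacle'' you flag (lower-dimensional strata) does not actually arise in this lemma, since the compact support in $\mathcal{Z}_J$ lets one choose the flow-box disjoint from $\Zc\setminus\Zc_J$ — exactly as the paper does — and the singular strata are dealt with separately in the subsequent lemma.
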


In the case of a generic $\ell$, see~\cite[Th.~8]{KeatingMarWin03} for star graphs and~\cite[Prop.~4.4]{BerkolaikoWi10} and~\cite[Lemma~3.1]{cdverdiere15} for the case of more general graphs.

\begin{proof} Let $\Vc$ be a neighborhood of the support of $g$ in $\Zc_J$ such that $\overline \Vc \subset \Zc_J$. Since $\ell$ is transverse to $\Zc_J$, there exists $\delta \in ]0,\tau_1[$ such that the map 
\[
\Tc : \fonc{]-2\d,2\d[ \times \Vc} {\T_\ell} {(t,y)}{y + t \ell}
\]
is injective with an image which does not intersect $\Zc \setminus \Zc_J$. Let $\h \in {C_0^\infty(]-\d,\d[,[0,1])}$ be equal to $1$ near $0$ and such that $\int_{\IR}\chi(t)dt=\delta$. We define a continuous function $\tilde g$ on $\T_\ell$ by setting 
\[
\tilde{g}(y+t\ell)=\chi(t)g(y)
\]
for all $t \in ]-\d,\d[$ and $y \in \Vc$, and $\tilde g = 0$ outside the image of $\Tc$. Then
$$
\frac 1 n \sum_{k=1}^n (g\circ\varphi^{\tau_k}_\ell)(0) = \frac{\tau_n+\delta}{n} \frac 1 {\tau_n+\delta} \frac{m_J}{\d}\int_0^{\tau_n+\delta}\tilde{g}\circ\varphi_\ell^t(0)dt.
$$
By \eqref{eq-Weyl-H0-2} and unique ergodicity~\eqref{e:unique-ergodicity} of the flow $\varphi_\ell^t$ on $\T_\ell$, one finds that
$$
\frac 1 n \sum_{k=1}^n (g\circ\varphi^{\tau_k}_\ell)(0) \limt n {+\infty}
\frac{\pi m_J}{|\Gamma| \abs{\IT_\ell}}\frac{1}{\d}\int_{\IT_\ell}\tilde{g}
 \, d\m_{\T_\ell}.$$
Since the Jacobian of $\Tc$ at $(t,y)$ is $|\ell \cdot \nu(y)|$, one gets
$$ \frac{1}{\d}\int_{\IT_\ell}\tilde{g} \, d\m_{\T_\ell}
= \int_{\mathcal{Z}_{J}}|\ell \cdot \nu(y)|g(y) \, d\m_J(y),$$
and the conclusion follows.
\end{proof}

For $\varepsilon>0$ we introduce
\begin{equation} \label{def-Z-eps}
\mathcal{Z}(\varepsilon)=\left\{y\in\mathcal{Z}: \exists J\ \text{s.t.}\ \text{dim}(\mathcal{Z}_{J})<N_\ell-1\ 
\text{and}\ d(y,\mathcal{Z}_{J})\leq\varepsilon\right\}.
\end{equation}
In other words, we consider an $\varepsilon$-neighborhood of the strata of $\mathcal{Z}$ whose dimensions are strictly smaller than $N_\ell-1$. 
In particular, the $(N_\ell-1)$-dimensional Hausdorff measure of this set is $\mathcal{O}(\varepsilon)$.

\begin{lemma} \label{lem-Z-eps}
We have
$$\limsup_{n\rightarrow+\infty}\frac{\Card{\set{k \leq n : \varphi_\ell^{\tau_k}(0)\in\mathcal{Z}_\ell(\varepsilon)}}}{n}
\mathop{=}_{\e \to 0} \mathcal{O}(\varepsilon).$$
\end{lemma}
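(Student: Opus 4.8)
The plan is to reduce the count of orbit points landing near the low-dimensional strata to the ergodic-averaging machinery already established in Lemma~\ref{l:BG}, after covering the bad set $\Zc_\ell(\e)$ by finitely many controllable pieces. First I would note that $\Zc_\ell(\e)$ is, by definition \eqref{def-Z-eps}, an $\e$-neighborhood of the union of the strata $\Zc_J$ with $\dim(\Zc_J) < N_\ell - 1$. By Lemma~\ref{lem-Zc-manifold}, these low-dimensional strata are contained in the boundaries of the top-dimensional strata $\Zc_J$ with $J \in \Jc_\ell$, and their closures form a compact set. The key geometric observation is that $\Zc_\ell(\e)$ meets $\Zc$ essentially inside the top strata $\Zc_J$ (with $J \in \Jc_\ell$), within distance $\Oc(\e)$ of the lower-dimensional boundary pieces, together with the lower strata themselves.

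Next I would construct a continuous cutoff $g_\e$ on $\Zc$ that dominates the indicator of $\Zc_\ell(\e) \cap \Zc_J$ for each $J \in \Jc_\ell$: take $g_\e$ supported in an $\Oc(\e)$-neighborhood (inside $\Zc_J$) of the boundary $\partial \Zc_J$, equal to $1$ on $\Zc_\ell(\e) \cap \Zc_J$, and bounded by $1$. Since the $(N_\ell-1)$-dimensional Hausdorff measure of $\Zc_\ell(\e)$ is $\Oc(\e)$, one can arrange $\int_{\Zc_J} g_\e \, d\mu_J = \Oc(\e)$ uniformly. One subtlety is that $g_\e$ is not compactly supported in $\Zc_J$ (its support touches $\partial\Zc_J$), whereas Lemma~\ref{l:BG} requires compact support in a single stratum; this is precisely why the remark preceding Lemma~\ref{l:BG} flags that test functions need not be compactly supported in $\Zc_\emptyset$. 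I would handle this by a further truncation, multiplying $g_\e$ by a function vanishing in a tiny $\e'$-collar of $\partial \Zc_J$ so that Lemma~\ref{l:BG} applies, and then letting $\e' \to 0$; the error introduced is controlled because the orbit of $0$ spends a negligible proportion of time arbitrarily close to the measure-zero boundary.

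With such a dominating function in hand, I would bound
\[
\frac{\Card{\set{k \leq n : \f_\ell^{\tau_k}(0) \in \Zc_\ell(\e)}}}{n}
\leq \sum_{J \in \Jc_\ell} \frac 1 n \sum_{k=1}^n (g_\e \circ \f_\ell^{\tau_k})(0) + (\text{lower-stratum terms}),
\]
where the lower-stratum contributions vanish in the limit because those strata have dimension $< N_\ell - 1$ and the equidistribution from unique ergodicity \eqref{e:unique-ergodicity} charges them with measure zero (their $(N_\ell-1)$-dimensional measure is null). Applying Lemma~\ref{l:BG} to each $J \in \Jc_\ell$, the $\limsup_n$ of each term equals
\[
\frac{\pi m_J}{\abs \G \abs{\T_\ell}} \int_{\Zc_J} g_\e(y) \, \abs{\ell \cdot \nu(y)} \, d\mu_J(y),
\]
which is $\Oc(\e)$ since $\abs{\ell \cdot \nu}$ is bounded on the compact closure $\overline{\Zc_J}$ and $\int_{\Zc_J} g_\e \, d\mu_J = \Oc(\e)$. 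Summing over the finitely many $J \in \Jc_\ell$ preserves the $\Oc(\e)$ bound, yielding the claim.

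The main obstacle I anticipate is the non-compact-support issue: ensuring that the cutoff functions used to invoke Lemma~\ref{l:BG} can be chosen to both dominate the indicator of $\Zc_\ell(\e)$ and have $\mu_J$-mass $\Oc(\e)$ uniformly, while their supports are pushed off the boundary strata. This requires a careful double limit (first in the collar width $\e'$, then reading off the $\e$-dependence), and one must verify that the proportion of orbit time spent in the thin boundary collar is genuinely negligible — which follows from unique ergodicity together with the transversality of $\ell$ to every stratum guaranteed by Lemma~\ref{lem-Zc-manifold}.
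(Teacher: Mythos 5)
Your reduction of the interior part of $\mathcal{Z}(\varepsilon)\cap\mathcal{Z}_J$ to Lemma~\ref{l:BG} is reasonable, but the way you dispose of the remaining hits --- those landing in the $\varepsilon'$-collar of $\partial\mathcal{Z}_J$ and on the strata of dimension $<N_\ell-1$ themselves --- contains a genuine gap, and in fact a circularity. You justify that step by saying the orbit ``spends a negligible proportion of time arbitrarily close to the measure-zero boundary'', invoking unique ergodicity. But unique ergodicity controls \emph{continuous-time} averages, hence the Lebesgue measure of sets visited by the flow; it says nothing directly about the proportion of the \emph{discrete} hitting times $\tau_k$ that land in a given subset of $\mathcal{Z}$. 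The whole set $\mathcal{Z}$ has Lebesgue measure zero in $\mathbb{T}_\ell$ and yet is hit at every single $\tau_k$, so ``Lebesgue-null $\Rightarrow$ negligible proportion of hits'' is false in this setting. The statement you actually need --- that the proportion of $k\leq n$ with $\varphi_\ell^{\tau_k}(0)$ within $\varepsilon'$ of the low-dimensional strata tends to $0$ --- is precisely (a version of) the lemma being proved, so the double limit in $\varepsilon'$ cannot be closed this way.

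The paper's proof supplies exactly the missing quantitative mechanism, and it bypasses Lemma~\ref{l:BG} entirely. One flows $\mathcal{Z}(\varepsilon)$ out by times $t\in[-\varepsilon,\varepsilon]$ to obtain a set $B(\varepsilon)\subset\mathbb{T}_\ell$; each discrete hit in $\mathcal{Z}(\varepsilon)$ forces the continuous orbit to spend a time interval of length $2\varepsilon$ inside $B(\varepsilon)$, and the Weyl law (at most $M$ of the $\tau_k$ in any interval of length $1$) converts the discrete count into the time average $\frac{M}{2\varepsilon}\int_0^{\tau_n+\varepsilon}\big(\mathbf{1}_{B(\varepsilon)}\circ\varphi^t_\ell\big)(0)\,dt$. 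At that point unique ergodicity \emph{can} be applied (to the indicator of the closed set $B(\varepsilon)$, via the upper semicontinuity remark quoted from Walters), and the crucial input is that the low strata have dimension at most $N_\ell-2$, so that $B(\varepsilon)$, being an $\mathcal{O}(\varepsilon)$-neighborhood of them, has Lebesgue measure $\mathcal{O}(\varepsilon^2)$; dividing by $2\varepsilon$ yields the desired $\mathcal{O}(\varepsilon)$. Your proposal never exploits this codimension-two gain nor the Weyl-law density bound, and without them the contribution near the boundary strata cannot be controlled.
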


See~\cite[p.~354]{cdverdiere15} for the case of a generic metric on a general quantum graph.

\begin{proof}
Let
$$B(\varepsilon)=\left\{y+t\ell: -\varepsilon\leq t\leq\varepsilon\ \text{and}\ y\in\mathcal{Z}(\varepsilon)\right\}.$$
Since this set is contained in an $\mathcal{O}(\epsilon)$ neighborhood of the strata of $\mathcal{Z}$ having dimensions not greater than $N_\ell-2$, its Lebesgue measure is of size $\mathcal{O}(\varepsilon^2)$ as $\varepsilon$ tends to $0$.

Let $n \in \N^*$ and $k \leq n$ such that $\varphi_\ell^{\tau_k}(0)\in\mathcal{Z}(\varepsilon)$. Then $\varphi_\ell^{t}(0)$ stays inside $B(\varepsilon)$ for $t \in [\tau_n-\varepsilon,\tau_n+\varepsilon]$. Moreover, by the Weyl Law, the number (counted with multiplicities) of square roots of eigenvalues of $H_0$ lying in an 
interval of length $1$ is uniformly bounded by some constant $M$. 
Hence, for $\varepsilon>0$ small enough, one has
$$\frac{\Card{\set{k \leq n : \varphi_\ell^{\tau_k}(0)\in\mathcal{Z}_\ell(\varepsilon)}}}{n}
\leq\frac{M}{2\varepsilon}\int_0^{\tau_n+\varepsilon}(\mathbf{1}_{B(\varepsilon)}\circ\varphi^t_\ell)(0)dt.$$
Hence, by letting $n$ tend to $+\infty$ and by combining~\eqref{e:unique-ergodicity} with~\cite[p.149, Rk. 3]{Walters}, one finds that
$$\limsup_{n\rightarrow+\infty}\frac{\Card{\set{k \leq n : \varphi_\ell^{\tau_k}(0)\in\mathcal{Z}_\ell(\varepsilon)}}}{n}
\leq\frac{M}{2\varepsilon} \frac {\m_{\T_\ell}(B(\varepsilon))}{\abs{\T_\ell}},$$
and the conclusion follows.
\end{proof}

We define the \emph{Barra-Gaspard measure} on $\mathcal{Z}$ by
\begin{equation}\label{e:barra-gaspard}
\mu_{\BG} = \sum_{J \in \Jc_\ell} \mu_{\BG,J},
\end{equation}
where, for $J \in \Jc_\ell$, we have set 
\[
\mu_{\BG,J} =  m_J |\ell \cdot \nu(y)| \mu_{J}.
\]

Finally, we obtain the following averaging property:

\begin{proposition} \label{prop-BG}
Let $g$ be a continuous function on $\Zc$. Then, one has
$$
\frac 1 n \sum_{k=1}^n (g\circ\varphi_\ell^{\tau_k})(0)
\limt n {+\infty} \frac{\pi}{|\Gamma| \abs{\IT_\ell}}\int_{\mathcal{Z}} g \, d \mu_{\BG}.$$
\end{proposition}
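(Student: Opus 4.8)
The plan is to reduce the case of a general continuous $g$ to the situation already settled in Lemma~\ref{l:BG}, where $g$ is compactly supported on a single top-dimensional stratum $\Zc_J$ with $J\in\Jc_\ell$. The measure $\mu_{\BG}$ is carried only by these strata, whereas the remaining strata have dimension strictly less than $N_\ell-1$; the role of Lemma~\ref{lem-Z-eps} is precisely to say that the orbit $(\varphi_\ell^{\tau_k}(0))_k$ spends an asymptotically negligible fraction of its time near those lower-dimensional pieces. So I would cut $g$ into a part supported away from the low-dimensional strata (to which Lemma~\ref{l:BG} applies after a finite decomposition) and a part localized near them (whose contribution to the ergodic average is $\Oc(\e)$), and then let $\e\to0$.

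Concretely, I would first fix $\e>0$ and build a continuous cutoff $\psi_\e$ on $\Zc$ with values in $[0,1]$, equal to $1$ on $\Zc(\e)$ and supported in $\Zc(2\e)$; this can be obtained by composing a fixed one-dimensional profile with the distance to the union of the strata of dimension $<N_\ell-1$. Writing $g=g\psi_\e+g(1-\psi_\e)$, the term $g\psi_\e$ is supported in $\Zc(2\e)$, so its ergodic average is bounded by $\nr g_\infty$ times the proportion of indices $k\le n$ with $\varphi_\ell^{\tau_k}(0)\in\Zc(2\e)$. By Lemma~\ref{lem-Z-eps}, the $\limsup$ in $n$ of this proportion is $\Oc(\e)$.

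Second, the function $g(1-\psi_\e)$ vanishes on a neighborhood of every stratum of dimension $<N_\ell-1$, hence, by Lemma~\ref{lem-Zc-manifold}, on a neighborhood of $\partial\Zc_J$ for each $J\in\Jc_\ell$. Since $\Jc_\ell$ is finite, the restriction of $g(1-\psi_\e)$ to each $\Zc_J$, $J\in\Jc_\ell$, is compactly supported in $\Zc_J$, and $g(1-\psi_\e)$ is recovered as the sum of these finitely many restrictions. Applying Lemma~\ref{l:BG} to each of them and summing, the ergodic average of $g(1-\psi_\e)$ converges, as $n\to+\infty$, to $\frac{\pi}{|\Gamma|\abs{\T_\ell}}\int_{\Zc} g(1-\psi_\e)\,d\mu_{\BG}$, using the definition \eqref{e:barra-gaspard} of $\mu_{\BG}=\sum_{J\in\Jc_\ell}\mu_{\BG,J}$ with $\mu_{\BG,J}=m_J\abs{\ell\cdot\nu}\mu_J$.

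Finally I would let $\e\to0$. Combining the two parts, for each fixed $\e$ the $\limsup$ in $n$ of the difference between the full average and $\frac{\pi}{|\Gamma|\abs{\T_\ell}}\int_{\Zc}g(1-\psi_\e)\,d\mu_{\BG}$ is $\Oc(\e)$. On the other hand, $\psi_\e$ is supported in $\Zc(2\e)$, and since $\mu_{\BG}$ is absolutely continuous with bounded density with respect to the $(N_\ell-1)$-dimensional Hausdorff measure on the top strata, whose mass on $\Zc(2\e)$ is $\Oc(\e)$ by the remark following \eqref{def-Z-eps}, one gets $\int_{\Zc}g\psi_\e\,d\mu_{\BG}=\Oc(\e)$, hence $\int_{\Zc}g(1-\psi_\e)\,d\mu_{\BG}\to\int_{\Zc}g\,d\mu_{\BG}$. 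Letting $\e\to0$ then yields the stated convergence. The only genuine point to be careful about is the bookkeeping ensuring that the top strata are separated from the low ones, so that the finite decomposition of $g(1-\psi_\e)$ into pieces compactly supported in single strata is legitimate; the two quantitative inputs (the $\Oc(\e)$ time spent near the low strata and the $\Oc(\e)$ mass of $\mu_{\BG}$ there) are already furnished by Lemma~\ref{lem-Z-eps} and the dimension count.
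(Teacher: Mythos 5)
Your proposal is correct and follows essentially the same route as the paper: a cutoff separating an $\e$-neighbourhood of the low-dimensional strata (handled by Lemma~\ref{lem-Z-eps}) from the top-dimensional strata $\Zc_J$, $J\in\Jc_\ell$ (handled by finitely many applications of Lemma~\ref{l:BG}), followed by letting $\e\to0$. The only cosmetic differences are that the paper phrases the cutoff as a partition of unity $\tilde\chi+\sum_{J\in\Jc_\ell}\chi_J=1$ and justifies the final limit by dominated convergence rather than by your explicit $\Oc(\e)$ bound on the $\mu_{\BG}$-mass of $\Zc(2\e)$; both are valid.
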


Again, we refer to~\cite{KeatingMarWin03, BerkolaikoWi10, cdverdiere15} for earlier versions of this result for quantum graphs endowed with Kirchhoff conditions and with a generic metric $\ell$. In particular, this proposition implies
\begin{equation} \label{eq-measure-BG}
\m_{\BG}(\Zc) = \frac{|\Gamma| \abs{\IT_\ell}}{\pi}.
\end{equation}

\begin{proof}
Let $\e > 0$. We introduce a family of nonnegative 
continuous functions $\tilde \chi$ and $(\chi_J)_{J \in \Jc_\ell}$ on $\mathcal{Z}$ such that 
$$\tilde \chi+\sum_{J \in \Jc_\ell}\chi_J=1,$$ 
$\tilde \chi$ is supported in $\mathcal{Z}(\varepsilon)$ and $\chi_J$ is compactly supported in $\mathcal{Z}_{J}$. By Lemmas \ref{l:BG} and \ref{lem-Z-eps} applied to $\chi_J g$, $J \in \Jc_\ell$, and $\tilde \chi g$, respectively, we have 
\[
\lim_{n \to +\infty}\frac 1 n \sum_{k=1}^n (g\circ\varphi_\ell^{\tau_k})(0) = \sum_{J \in \Jc_\ell} \int_{\Zc_J} \h_J g \, d\m_{\BG,J} + \Oc(\e).
\]
We let $\e$ go to 0, and by the dominated convergence Theorem we obtain the expected result.
\end{proof}

\section{Separation and localization of the eigenvalues} \label{sec-separation}

In this section, we will give an acccurate description of the localization of the eigenvalues of $H_\a$ when $\a\in\C$. This will rely on a careful analysis that will allow us to view in a quantitative manner the eigenvalues of $H_\a$ as a perturbation of the eigenvalues of $H_0$ as the real part of the spectral parameter tends to $+\infty$. We proceed in two steps. First, we localize a sequence of eigenvalues of $H_\a$ near the eigenvalues of $H_0$. Then, we show that we have indeed found all the eigenvalues. All this analysis is summarized in Proposition~\ref{prop-zna} which is the main result of this section and from which Propositions~\ref{prop-spectrum-Ha},~\ref{th-Weyl},~\ref{prop-eigenvalue-proche} and~\ref{prop-eigenvalue-Dir-alpha} follow.

To that aim, given $\a \in \C$ and $z \in \C^* \setminus \Tc_\infty$, we set 
\[
\psi_\a (z) = \Psi(z\ell) - \frac \a z = - \sum_{j=1}^N \cotan(z \ell_j) - \frac \a z.
\]
The functions $\psi_\a$, $\a \in \C$, are holomorphic on $\C^* \setminus \Tc_\infty$ and $z^2$ is an eigenvalue of $H_\a$ inside 
$\C^* \setminus \Tc_\infty$ if and only if $\psi_\a(z) = 0$ (see the discussion of Paragraph \ref{sec-secular}). In particular, the positive roots of $\psi_0$ 
outside $\Tc_\infty$ are given by the $\tau_n$ such that $\tau_n^2$ is not an 
eigenvalue of $H_{\infty}$. 

\subsection{Upper bound on the set of eigenvalues}

We set 
\[
\Nc = \set{n \in \N^* \st \tau_n \notin \Tc_\infty},
\]
and, for $n \in \Nc$,
\[
\rho_n = \frac 1 {\psi_0'(\tau_n)}=\frac{1}{\nabla_{\tau_n\ell}\Psi \cdot \ell}.
\]
Then, the following holds:

\begin{lemma} \label{lem-g}
\begin{enumerate}[\rm (i)]
\item For $n \in \Nc$, one has 
\begin{equation} \label{estim-rau-n}
0 < \rho_n \leq \min \left\{ \abs \G\inv , \frac{\sin(\tau_n \ell_1)^2}{\ell_1}, \dots, \frac {\sin(\tau_n \ell_N)^2}{\ell_N} \right\}.
\end{equation}
% where 
% \[
% c_0 = \frac 1 {\min_{1\leq j \leq N}(\ell_j)}.
% \]
%
\item Let $\g_0 = \frac 1 {16eN}$.
% \[
% \gamma_0 = \min \left\{ 1 , \frac 1 {16eN} , \frac {\ell_1}{2e}, \dots, \frac {\ell_N}{2e} \right\}.
% \]
%$\gamma_0 = \min_{1\leq j \leq N}\min\{1,\abs {\ell_j} / 2e\}$. 
Then, for $n \in \Nc$, one has 
$D(\tau_n,\g_0 \rho_n) \cap \Tc_\infty = \emptyset$ and, for $\g \in [0,\gamma_0]$ and $z \in D(\tau_n,\g \rho_n)$,
\begin{equation}\label{eq-g}
\abs{\psi_0(z) - \frac {z-\tau_n}{\rho_n}} \leq \frac \g 2.
\end{equation}
\end{enumerate} 
\end{lemma}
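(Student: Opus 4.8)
The plan is to prove the two items separately: (i) is a direct computation, while (ii) requires complex-analytic control of $\psi_0$ on the disk and is where the real work lies. For (i), I would differentiate $\psi_0(z) = -\sum_{j}\cotan(z\ell_j)$ to get $\psi_0'(z) = \sum_{j=1}^N \ell_j/\sin(z\ell_j)^2$ (this is, up to the chain rule, already recorded in Remark \ref{rem-Weyl-H0}). Since $n \in \Nc$ means $\tau_n \ell_j \notin \pi\Z$ for every $j$, each term $\ell_j/\sin(\tau_n\ell_j)^2$ is a positive real, so $\psi_0'(\tau_n) > 0$ and $\rho_n > 0$. Keeping only the $j$-th term gives $\psi_0'(\tau_n) \geq \ell_j/\sin(\tau_n\ell_j)^2$, hence $\rho_n \leq \sin(\tau_n\ell_j)^2/\ell_j$ for each $j$; and since $\sin(\tau_n\ell_j)^2 \leq 1$ we also get $\psi_0'(\tau_n) \geq \sum_j \ell_j = \abs\G$, i.e. $\rho_n \leq \abs\G\inv$. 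This is exactly \eqref{estim-rau-n}.

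For (ii), write $s_j = \abs{\sin(\tau_n\ell_j)} \in (0,1]$; by (i) we have $\g_0\rho_n\ell_j \leq \g_0 s_j^2 \leq \g_0 s_j$. For $z \in D(\tau_n,\g_0\rho_n)$ the point $z\ell_j$ lies in the disk of radius $\g_0\rho_n\ell_j$ about the real number $\tau_n\ell_j$, so $\abs{\mathsf{Im}(z\ell_j)} \leq \g_0 s_j$. Integrating $\cos$ along the segment from $\tau_n\ell_j$ to $z\ell_j$, where $\abs{\cos} \leq \cosh(\g_0)$, I would obtain
\[
\abs{\sin(z\ell_j)} \geq s_j - \g_0 s_j \cosh(\g_0) \geq \tfrac12 s_j > 0,
\]
using that $\g_0 \cosh(\g_0) \leq \tfrac12$. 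This single lower bound does double duty: it shows $\sin(z\ell_j) \neq 0$ on the disk for every $j$, so $D(\tau_n,\g_0\rho_n)$ avoids $\Tc_\infty$ and $\psi_0$ is holomorphic there.

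It then remains to estimate the Taylor remainder. Setting $g(z) = \psi_0(z) - (z-\tau_n)/\rho_n$, one has $g(\tau_n) = 0$ and $g'(\tau_n) = \psi_0'(\tau_n) - \rho_n\inv = 0$, so the integral form of the remainder gives $\abs{g(z)} \leq \tfrac12 \abs{z-\tau_n}^2 \sup\abs{\psi_0''}$ over the segment. Differentiating once more, $\psi_0''(z) = -2\sum_j \ell_j^2 \cos(z\ell_j)/\sin(z\ell_j)^3$; combining $\abs{\cos(z\ell_j)} \leq \cosh(\g_0)$ with $\abs{\sin(z\ell_j)}^3 \geq \tfrac18 s_j^3$ and the consequence $\ell_j^2/s_j^3 \leq \rho_n^{-2}$ of (i) yields $\rho_n^2 \abs{\psi_0''(z)} \leq 16 N \cosh(\g_0)$. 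Inserting $\abs{z-\tau_n} \leq \g\rho_n$ and $\g \leq \g_0 = (16eN)\inv$ gives $\abs{g(z)} \leq 8N\cosh(\g_0)\g^2 \leq 8N\cosh(\g_0)\g\g_0 = \tfrac{\cosh(\g_0)}{2e}\g \leq \tfrac\g2$, the last step because $\cosh(\g_0) \leq e$. This is \eqref{eq-g}, and it clarifies the two roles of $\g_0 = (16eN)\inv$: its smallness ($\g_0\cosh\g_0 \leq \tfrac12$) keeps $\sin$ away from zero, and the factor $e$ absorbs $\cosh(\g_0)$.

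The part I expect to be routine is (i) and the final bookkeeping; the main obstacle is the uniform lower bound on $\abs{\sin(z\ell_j)}$ over the \emph{complex} disk, since both the holomorphy of $\psi_0$ and every estimate on $\psi_0''$ depend on it. The delicate point is that the radius $\g_0\rho_n$ must be measured against $s_j^2/\ell_j$ rather than against the distance to $\Tc_\infty$ directly, and this is precisely what the estimate $\rho_n \leq s_j^2/\ell_j$ from (i) supplies, so the two items are tightly linked.
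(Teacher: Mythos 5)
Your proof is correct and follows essentially the same route as the paper: item (i) by reading off the positive terms of $\psi_0'(\tau_n)$, and item (ii) by using $\rho_n\ell_j\leq\sin(\tau_n\ell_j)^2$ to get a uniform lower bound on $\abs{\sin(z\ell_j)}$ over the complex disk, then bounding $\psi_0''$ and applying the Taylor remainder. The only differences are cosmetic choices of constants (you track $\cosh(\gamma_0)$ and $s_j=\abs{\sin(\tau_n\ell_j)}$ where the paper uses $e^{\gamma}\leq e$ and $\sqrt{\ell_j\rho_n}$), and both yield the stated bound.
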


\begin{proof}
\stepp For $z \in \C^* \setminus \Tc_\infty$, one has
\begin{equation} \label{expr-g'}
\psi_0'(z) = \sum_{j=1}^N \frac {\ell_j}{\sin(\ell_j z)^2} = \sum_{j=1}^N \ell_j \big( 1 + \cotan^2 (z\ell_j) \big),
\end{equation}
and
\begin{equation} \label{expr-g''}
\psi_0''(z) = - 2 \sum_{j=1}^N \frac {\ell_j^2 \cos(z\ell_j)}{\sin(z\ell_j)^3}.% = - 2 \sum_{j=1}^N \ell_j^2 \cotan(z \ell_j) \big( 1 + \cotan^2 (z\ell_j) \big).
\end{equation}
In particular, \eqref{expr-g'} gives \eqref{estim-rau-n}.

\stepp Let $\gamma \in ]0,\gamma_0]$, $n \in \Nc$, $j \in \NN$ and $z \in D(\tau_n,\g \rho_n)$. Since $\gamma \leq 1$ and $\ell_j \rho_n \leq 1$ one has 
\[
\abs {\cos(z \ell_j)} \leq e^{\ell_j \g \rho_n} \leq e,
\]
and hence, since we also have $\g \leq 1 / 2e$,
\begin{align*}
\abs{\sin(z \ell_j)}
 \geq \abs{\sin(\tau_n \ell_j)} - \abs{z-\tau_n} \ell_j e
 \geq \ell_j^{\frac 1 2} \rho_n^{\frac 12}  - \gamma \rho_n \ell_j e
 \geq \frac {\ell_j^{\frac 12} \rho_n^{\frac 12}}2. %\left( \ell_j^{\frac 12} - \gamma \abs \G^{-\frac 12} e^{\ell_j \g \abs \G^{-\frac 12}} \right).
\end{align*}
In particular, $F_D(z\ell) \neq 0$, so $D(\tau_n,\g \rho_n) \cap \Tc_\infty=\emptyset$. 
Then, by \eqref{expr-g''} we have 
\begin{equation}\label{e:bound-second-derivative}
\abs{\psi_0''(z)} \leq \frac {16 e}{\rho_n^{3/2}} \sum_{j=1}^N \sqrt {\ell_j}.
\end{equation}
From this one can infer that, for every $(z,z_1)$ in $D(\tau_n,\gamma \rho_n)$,
\[
\frac {\abs {z-\tau_n}^2}{2} \abs{\psi_0''(z_1)} \leq 8 \gamma^2 e \sum_{j=1}^N \sqrt {\rho_n \ell_j} \leq \frac \g 2,
\]
and the result follows by Taylor expansion.
\end{proof}

\begin{center}
\begin{figure}[h]
\includegraphics[width = 0.8 \linewidth]{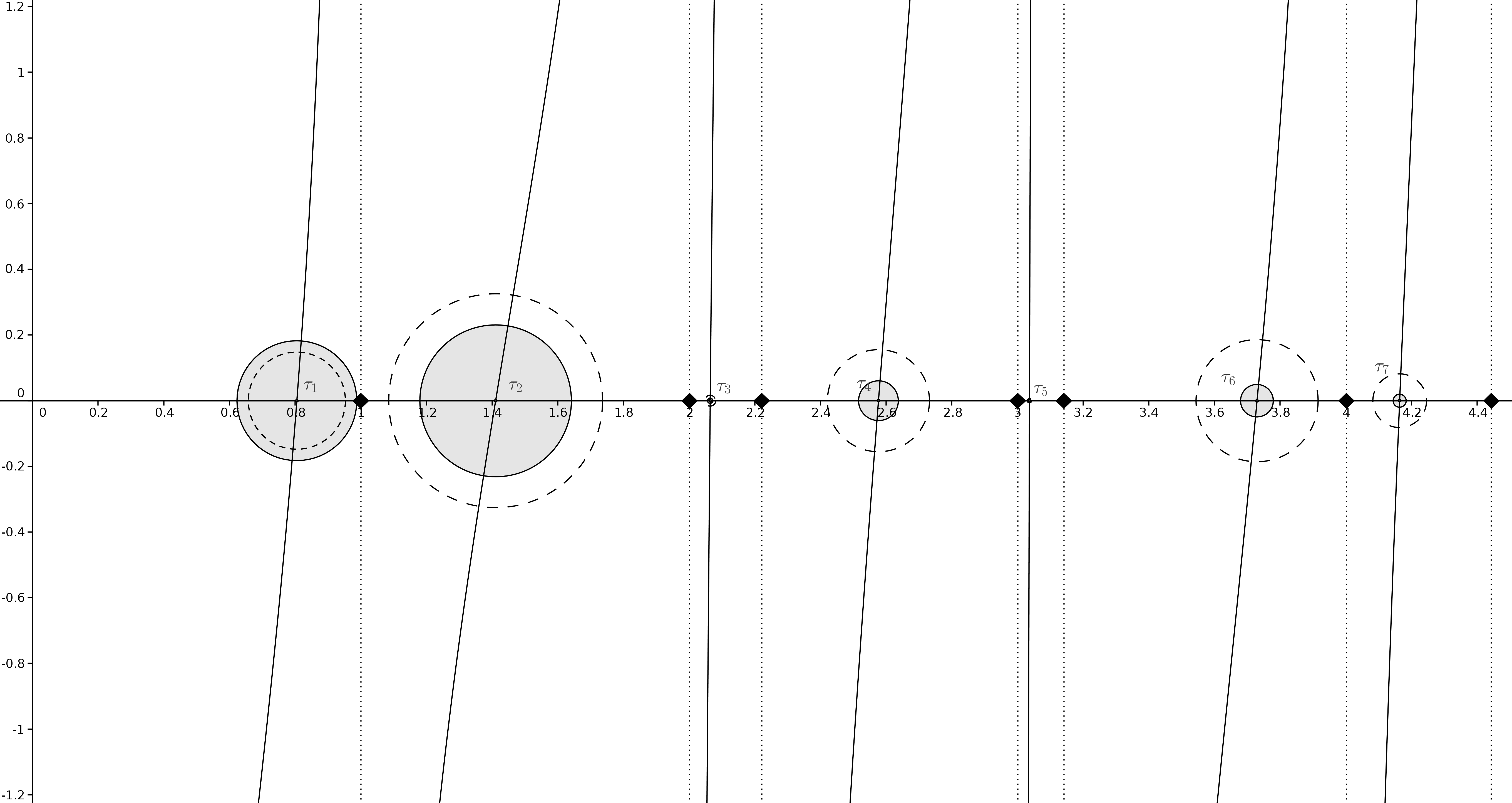}
\caption{The graph of $\Psi$ for $\ell = (1,\sqrt 2,\pi)$, with the elements of $\Tc_\infty$ (diamonds), $\tau_1$ to $\tau_7$, and the corresponding disks of radii $2 \rho_n / \tau_n$ (filled) and $2 \rho_n$ (dotted).}
\label{fig-Psi}
\end{figure}
\end{center}

\begin{proposition} \label{prop-sep-vp}
Let $\gamma_0 > 0$ be given by Lemma \ref{lem-g}. Let $r > 0$. There exists $n_r \in \N^*$ such that, for every 
$n \in \Nc$ with $n \geq n_r$ and for every $\a \in D(r)$, there is a unique $z_n(\a) \in D(\tau_n, \gamma_0 \rho_n)$ 
such that $z_n(\a)^2$ is an eigenvalue of $H_\a$. Moreover, $z_n(\a)$ depends analytically on $\a\in D(r)$, 
$z_n(\a)^2$ is an algebraically simple eigenvalue of $H_\a$, and 
\begin{equation} \label{estim-rn-alpha}
\abs{z_n(\a) - \tau_n} \leq \frac {8r\rho_n}{\tau_n}.
\end{equation}
\end{proposition}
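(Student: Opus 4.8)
The plan is to reduce the statement to a counting of zeros of the holomorphic function $\psi_\a$ and to treat $\psi_\a$ as a perturbation of the linear model $g(z) = (z-\tau_n)/\rho_n$, whose only zero in any disk $D(\tau_n,\gamma\rho_n)$ is the simple zero at $\tau_n$. Recall from Paragraph~\ref{sec-secular} that, inside $\C^*\setminus\Tc_\infty$, the point $z^2$ is an eigenvalue of $H_\a$ exactly when $\psi_\a(z)=0$, and that $\psi_\a(z) = \psi_0(z) - \a/z$. By Lemma~\ref{lem-g} the disk $D(\tau_n,\gamma_0\rho_n)$ is disjoint from $\Tc_\infty$, so there the eigenvalue problem is equivalent to finding the zeros of $\psi_\a$.

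First I would establish existence, uniqueness and simplicity of the zero by Rouch\'e's theorem on the circle $C(\tau_n,\gamma_0\rho_n)$. On this circle $\abs{g(z)} = \gamma_0$, while Lemma~\ref{lem-g}(ii) gives $\abs{\psi_0(z) - g(z)} \le \gamma_0/2$. For the perturbation term, since $\rho_n \le \abs\G\inv$ is bounded and $\tau_n \to +\infty$, one has $\abs z \ge \tau_n - \gamma_0\rho_n \ge \tau_n/2$ for $n$ large, hence $\abs{\a/z} \le 2r/\tau_n$ uniformly in $\a \in D(r)$. Choosing $n_r$ so that $2r/\tau_n < \gamma_0/2$ for $n \ge n_r$, we get $\abs{\psi_\a(z) - g(z)} < \gamma_0 = \abs{g(z)}$ on the circle, so $\psi_\a$ and $g$ have the same number of zeros (with multiplicity) inside, namely one; thus $z_n(\a)$ exists, is unique in $D(\tau_n,\gamma_0\rho_n)$, and is a simple zero of $\psi_\a$. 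Enlarging $n_r$ so that also $8r/\tau_n \le \gamma_0$, the very same comparison on the smaller circle $C(\tau_n,(8r/\tau_n)\rho_n)$ --- where now $\abs g = 8r/\tau_n$, $\abs{\psi_0 - g} \le 4r/\tau_n$ and $\abs{\a/z} \le 2r/\tau_n$, so $\abs{\psi_\a - g} \le 6r/\tau_n < 8r/\tau_n = \abs g$ --- locates the unique zero inside that smaller disk, which is exactly \eqref{estim-rn-alpha}. Analyticity of $\a \mapsto z_n(\a)$ then follows from the argument-principle representation
\[
z_n(\a) = \frac 1 {2\pi i} \oint_{C(\tau_n,\gamma_0\rho_n)} z \,\frac{\partial_z \psi_\a(z)}{\psi_\a(z)} \, dz,
\]
whose integrand is jointly analytic in $(z,\a)$ and whose denominator does not vanish on the contour for $n \ge n_r$ and $\a \in D(r)$.

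It remains to identify the two notions of multiplicity. Since $D(\tau_n,\gamma_0\rho_n)$ avoids $\Tc_\infty$, the point $z_n(\a)$ lies in $\C\setminus\Tc_\infty$, so Lemma~\ref{l:multiplicity}(ii) yields $\dim\ker(H_\a - z_n(\a)^2) \le 1$, i.e.\ geometric multiplicity one. For the algebraic multiplicity I would argue by perturbation in $\a$. The squaring map is injective on $D(\tau_n,\gamma_0\rho_n)$ (which sits in the right half-plane for $n$ large), so its image is bounded by a Jordan contour $\mathcal C_n$ in the $\l$-plane enclosing exactly the single eigenvalue $z_n(\a)^2$ for every $\a \in D(r)$ --- no eigenvalue coming from $\Tc_\infty$ can enter, precisely because the disk is disjoint from $\Tc_\infty$. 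The Riesz projection $P_n(\a) = \frac 1 {2\pi i}\oint_{\mathcal C_n}(\l - H_\a)\inv\,d\l$ then depends analytically on $\a$, so its rank $\mathrm{tr}\,P_n(\a)$ --- the algebraic multiplicity --- is an integer-valued continuous function on the connected set $D(r)$, hence constant. At $\a = 0$ the operator $H_0$ is selfadjoint and $\tau_n^2$ is a simple eigenvalue for $n \in \Nc$, so the rank equals one throughout.

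The main obstacle, in my view, is this last point: transferring simplicity of the zero of the scalar function $\psi_\a$ to algebraic simplicity of the eigenvalue of $H_\a$. The Rouch\'e estimates are essentially mechanical once Lemma~\ref{lem-g} is in hand, but one must check that the contour $\mathcal C_n$ captures exactly one eigenvalue uniformly in $\a$ and that the resolvent is genuinely analytic there; the disjointness from $\Tc_\infty$ and the uniform non-vanishing of $\psi_\a$ on the boundary are what make the perturbation argument go through.
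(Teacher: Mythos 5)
Your proof is correct and follows essentially the same route as the paper: the same bounds from Lemma \ref{lem-g} together with $\abs{\a/z}\leq 2r/\tau_n$ on the circles $C(\tau_n,\g\rho_n)$ for $\g\in\big[\tfrac{8r}{\tau_n},\g_0\big]$, and the same operator-level step (your Riesz-projection argument is exactly the paper's ``continuity of the spectrum'' appeal to Kato) to get existence, uniqueness and algebraic simplicity by comparison with the case $\a=0$. The only cosmetic difference is that you count the zeros of the scalar function $\psi_\a$ via Rouch\'e against the linear model $(z-\tau_n)/\rho_n$, whereas the paper simply notes that $\psi_\a$ does not vanish on the circles and lets the spectral continuity argument do all the counting.
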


\begin{proof}
For $\g \in ]0,\gamma_0]$, $n \in \Nc$ and $z$ on the circle $C(\tau_n,\g \rho_n)$, one has by \eqref{eq-g}
\[
\abs{\psi_0(z)} \geq \frac \g 2.
\]
On the other hand, since $\rho_n\leq|\Gamma|^{-1}$, one has, for $n$ large enough,
\[
\frac {\abs \a}{\abs z} \leq \frac r {\tau_n - \g \rho_n} \leq \frac {2r} {\tau_n}.
\]
Thus, there exists $n_r \in \N^*$ such that for $n \in \Nc$ with $n \geq n_r$, $\g \in \big[ \frac {8r}{\tau_n},\gamma_0 \big]$ and $z \in C(\tau_n,\g \rho_n)$ we have 
\[
\abs{\psi_\a(z)} \geq \abs{\psi_0(z)} - \frac {\abs \a}{\abs z}\geq  \frac {2r} {\tau_n} > 0.
\]
This proves that, for every $\a \in D(r)$ and every $\g \in \big[ \frac {8r}{\tau_n},\gamma_0 \big]$, 
the operator $H_\a$ has no eigenvalue on the set $\set {z^2 , z \in C(\tau_n ,\g \rho_n)}$. 
By continuity of the spectrum, the number of eigenvalues counted with multiplicity in 
$\set {z^2 , z \in D(\tau_n ,\g \rho_n)}$ does not depend on $\a \in D(r)$. 

By Remark \ref{rem-Weyl-H0}, $\tau_n$ is the only $\tau$ in $D(\tau_n,\g \rho_n)$ such that $\tau^2$ is an eigenvalue of $H_0$. Thus, for $\a \in D(r)$, there is exactly one $z_n(\a) \in D(\tau_n,\g\rho_n)$ such that $z_n(\a)^2$ is an eigenvalue of $H_\a$. The analyticity of this eigenvalue comes from the usual perturbation theory for parameter dependant operators (see \cite{kato}).
\end{proof}

From this point on, we set, for $r>0$, 
\[
\Nc_r = \set{n \in \Nc : n \geq n_r}.
\]

\subsection{Lower bound on the set of eigenvalues}

At this stage, \emph{we have identified some eigenvalues of $H_\a$}. If $\tau^2$ is an eigenvalue of $H_0$ which is also an eigenvalue of $H_\infty$ then it is also an eigenvalue of $H_\a$, and the multiplicities of $\tau^2$ as an eigenvalue of $H_0$ or $H_\a$ coincide. Otherwise, $\tau^2$ is a simple eigenvalue of $H_0$, and around $\tau^2$ (in a sense made precise by Proposition \ref{prop-sep-vp}), there is a unique simple eigenvalue of $H_\a$. \emph{Our purpose is now to prove that we have in fact found all the eigenvalues of $H_\a$}. For this we will need the following lemma.

\begin{lemma} \label{lem-min-rho}
We have 
\[
\limsup_{\substack{n\in\Nc \\ n \to +\infty}} \rho_n > 0.
\]
\end{lemma}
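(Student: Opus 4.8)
The plan is to recognize $\rho_n$ as the value at $\varphi_\ell^{\tau_n}(0)$ of a fixed continuous function on $\Zc$, and then invoke the Barra--Gaspard averaging of Proposition~\ref{prop-BG}. By \eqref{expr-g'} we have, for $n \in \Nc$,
\[
\frac 1 {\rho_n} = \psi_0'(\tau_n) = \sum_{j=1}^N \frac{\ell_j}{\sin(\tau_n \ell_j)^2} = \Phi(\tau_n \ell), \qquad \Phi(y) := \sum_{j=1}^N \frac{\ell_j}{\sin(y_j)^2}.
\]
The function $\Phi$ is $2\pi$-periodic and satisfies $\Phi \geq \abs\G$ on $Z_\emptyset$, so $\rho_n = 1/\Phi(\varphi_\ell^{\tau_n}(0))$ depends only on the point $\varphi_\ell^{\tau_n}(0) \in \Zc_\emptyset$, and $0 < \rho_n \leq \abs\G^{-1}$.

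First I would introduce the function $g$ on $\Zc$ defined by $g = 1/\Phi$ on $\Zc_\emptyset$ and $g = 0$ on the other strata. This $g$ is continuous: it is bounded by $\abs\G^{-1}$, and as $y \in \Zc_\emptyset$ tends to a boundary point lying in some $\Zc_J$ with $\Card J \geq 2$, at least one coordinate $y_j$ tends to $0 \bmod \pi$, so $\Phi(y) \to +\infty$ and $g(y) \to 0$, matching the value prescribed on the lower strata. By construction $(g \circ \varphi_\ell^{\tau_k})(0) = \rho_k$ whenever $k \in \Nc$, while $(g \circ \varphi_\ell^{\tau_k})(0) = 0$ when $k \notin \Nc$ (then $\varphi_\ell^{\tau_k}(0)$ lies in a stratum $\Zc_J$ with $\Card J \geq 2$, where $g = 0$).

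Then I would apply Proposition~\ref{prop-BG} to $g$, obtaining
\[
\frac 1 n \sum_{\substack{k \leq n\\ k \in \Nc}} \rho_k = \frac 1 n \sum_{k=1}^n (g \circ \varphi_\ell^{\tau_k})(0) \limt n {+\infty} L, \qquad L := \frac{\pi}{\abs\G \, \abs{\T_\ell}} \int_\Zc g \, d\mu_{\BG}.
\]
It remains to check that $L > 0$. Since $\Zc_\emptyset$ has dimension $N_\ell - 1$ by Lemma~\ref{lem-Zc-manifold}(i), we have $\emptyset \in \Jc_\ell$, so $\mu_{\BG}$ dominates its component $\mu_{\BG,\emptyset} = \abs{\ell \cdot \nu(y)} \mu_\emptyset$ on $\Zc_\emptyset$ (recall $m_\emptyset = 1$). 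As $\ell$ is transverse to $\Zc_\emptyset$ we have $\abs{\ell \cdot \nu} > 0$ everywhere, $\mu_\emptyset$ is the nonzero Lebesgue measure of the nonempty $(N_\ell-1)$-dimensional manifold $\Zc_\emptyset$, and $g > 0$ on $\Zc_\emptyset$; hence $\int_\Zc g \, d\mu_{\BG} \geq \int_{\Zc_\emptyset} g \, d\mu_{\BG,\emptyset} > 0$, so $L > 0$.

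Finally I would conclude by a soft Cesàro argument: the sequence $(a_k)$ obtained by setting $a_k = \rho_k$ for $k \in \Nc$ and $a_k = 0$ otherwise is nonnegative, and its Cesàro means converge to $L > 0$, whence $\limsup_{k \to +\infty} a_k \geq L$. Since $a_k = 0$ off $\Nc$ and $a_k = \rho_k$ on the infinite set $\Nc$, this gives $\limsup_{n \in \Nc,\, n \to +\infty} \rho_n \geq L > 0$, which is the claim. The main obstacle is to verify carefully that $g$ extends continuously across the lower-dimensional strata of $\Zc$, so that Proposition~\ref{prop-BG} applies, together with the positivity of the Barra--Gaspard mass carried by $\Zc_\emptyset$; once these two points are secured the conclusion is immediate.
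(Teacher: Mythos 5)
Your proof is correct and follows essentially the same route as the paper: both recognize $\rho_n$ as the value along the orbit $\f_\ell^{\tau_n}(0)$ of a continuous function built from $1/(\nabla_y\Psi\cdot\ell)$, apply Barra--Gaspard averaging to get a positive limit of the Ces\`aro means, and conclude that the limsup is positive. The only (harmless) difference is technical: the paper multiplies by a cutoff compactly supported in $\Zc_\emptyset$ and uses Lemma~\ref{l:BG}, thereby avoiding the continuity check across the lower strata, whereas you extend by zero to all of $\Zc$, verify that this extension is continuous, and invoke Proposition~\ref{prop-BG}.
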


\begin{proof}
Let $\chi$ be a continuous function from $\Zc$ to $[0,1]$ compactly supported in $\Zc_\emptyset$ and equal to 1 on some open subset thereof. Then the function 
\[
f : y \mapsto \frac {\h(y)}{\nabla_y \Psi \cdot \ell}
\]
can be seen as a continuous function on $\Zc$, compactly supported in $\Zc_\emptyset$. By Lemma \ref{l:BG} we have 
\[
\frac 1 n \sum_{k=1}^n (f \circ \f_\ell^{\tau_k})(0) \limt n {+\infty} \frac {\pi}{\abs \G \abs{\T_\ell}} \int_{\Zc_\emptyset} f \, d\m_{\BG,\emptyset} > 0.
\]
This implies in particular that $\rho_n$ does not go to 0 as $n \in \Nc$ goes to $+\infty$.
\end{proof}

Now we can state the main result of this section which compares the set of eigenvalues of $H_\a$ with the ones of $H_0$.

\begin{proposition} \label{prop-zna}
Let $r > 0$ and $n_r \in \N^*$ be given by Proposition \ref{prop-sep-vp}. Then, for every $\a \in D(r)$, the spectrum of $H_\a$ is given by a sequence of eigenvalues $(z_n(\a)^2)_{n \in \N^*}$ (repeated according to algebraic multiplicities) such that the following assertions hold for $n \geq n_r$.
\begin{enumerate}[\rm(i)]
 \item If $n \notin \Nc$ then $z_n(\a) = \tau_n$.
 \item If $n \in \Nc$ then $z_n(\a)$ is given by Proposition \ref{prop-sep-vp}. In particular, there exists a disk in $\C$ centered at $\tau_n$ which contains $z_n(\a)$ but does not contain $\tau_m$ or $z_m(\a)$ for any $m \neq n$.
 \item The functions $\a\mapsto z_n(\a)$ are analytic,
 \item The geometric and algebraic multiplicity of the eigenvalue $z_n(\a)^2$ are equal, and in particular this multiplicity is 1 if $n \in \Nc$.
\end{enumerate}
\end{proposition}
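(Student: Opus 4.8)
The assertions (i), (ii) and (iii) amount to bookkeeping of what is already established: for $n\in\Nc$ with $n\ge n_r$, Proposition \ref{prop-sep-vp} produces a unique, analytic, algebraically simple $z_n(\a)$ in $D(\tau_n,\gamma_0\rho_n)$; and for $\tau_n\in\Tc_\infty$ (that is, $n\notin\Nc$), Lemma \ref{l:multiplicity} shows that $\tau_n^2$ is an eigenvalue of $H_\a$ whose multiplicity $\Card J-1$ does not depend on $\a$, so that we may set $z_n(\a)=\tau_n$. The genuine content is therefore (a) that these eigenvalues \emph{exhaust} $\Sp(H_\a)$ (the lower bound) and (b) the equality of geometric and algebraic multiplicities in (iv). The plan is to obtain (a) and (b) at once from a single homotopy-and-counting argument in the parameter $\a$.

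First I would manufacture, at arbitrarily large frequencies, closed contours in the $z$-plane along which $\psi_\a$ is bounded below \emph{uniformly} in $\a\in D(r)$. By Lemma \ref{lem-min-rho} there exist $\rho_*>0$ and a sequence $(n_k)\subset\Nc$ with $\tau_{n_k}\to+\infty$ and $\rho_{n_k}\ge\rho_*$. From \eqref{eq-g} one extracts, for every $z\in D(\tau_{n_k},\gamma_0\rho_{n_k})$, the lower bound $\abs{\psi_0(z)}\ge\abs{z-\tau_{n_k}}/(2\rho_{n_k})$. Hence, on the vertical segment $\Re(z)=\tau_{n_k}-c\rho_{n_k}$, $\abs{\Im(z)}\le h$ (with $c<\gamma_0$ and $h\le\rho_*\sqrt{\gamma_0^2-c^2}$, so that the segment lies inside the disk), one has $\abs{\psi_0(z)}\ge c/2$, while $\abs{\a/z}\to 0$; thus $\abs{\psi_\a(z)}\ge c/4$ there for $k$ large, uniformly in $\a\in D(r)$. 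Combining this right-hand cut with the horizontal segments $\Im(z)=\pm h$ — on which, by Proposition \ref{prop-strip}, there is no $z$ with $z^2\in\Sp(H_\a)$ once $h$ is a fixed small height avoiding the imaginary parts of the finitely many eigenvalues with bounded real part — and a fixed vertical left cut $\Re(z)=\varepsilon_0>0$, I obtain a region $\Omega_k\subset\set{\Re z>0}$ whose boundary avoids $\set{z:z^2\in\Sp(H_\a)}$ for all $\a\in D(r)$ and which contains exactly the $\tau_n$ with $\varepsilon_0<\tau_n<\tau_{n_k}$.

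Since $z\mapsto z^2$ is injective on $\Omega_k$ and $\partial\Omega_k$ avoids $\set{z:z^2\in\Sp(H_\a)}$, the total algebraic multiplicity of the eigenvalues of $H_\a$ inside $\Omega_k$ is the rank of a Riesz projection $\frac{1}{2\pi i}\oint(\l-H_\a)\inv\,d\l$ over the image contour; this rank is locally constant in $\a$, hence constant on the connected set $D(r)$, so it equals the value for $H_0$, namely $\Card{\set{n:\varepsilon_0<\tau_n<\tau_{n_k}}}$. On the other hand, inside $\Omega_k$ we have already exhibited distinct eigenvalues whose \emph{geometric} multiplicities sum to this same number: each algebraically simple $z_n(\a)$ with $n\in\Nc$ (contributing $1$), and each $\Tc_\infty$-eigenvalue $\tau_n^2$ (contributing $\Card J-1$). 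Because algebraic multiplicity dominates geometric multiplicity, and the total algebraic multiplicity equals this lower bound, equality must hold throughout: there is no further eigenvalue of $H_\a$ in $\Omega_k$, which gives (a), and each $\Tc_\infty$-eigenvalue has algebraic multiplicity equal to its geometric multiplicity, which gives (iv). Letting $k\to+\infty$ exhausts $\Sp(H_\a)$, the finitely many eigenvalues with $\Re z\le\varepsilon_0$ being re-indexed as $z_1(\a),\dots,z_{n_r-1}(\a)$.

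The main obstacle is the contour construction of the second step. The poles $\Tc_\infty$ and the zeros $\tau_n$ of $\psi_0$ interlace on the real axis and, for badly commensurable lengths $\ell$, accumulate with arbitrarily small gaps, so one cannot cross the real axis at an arbitrary large height while keeping $\psi_\a$ bounded below. Lemma \ref{lem-min-rho} is exactly what circumvents this: it guarantees infinitely many \emph{wide} gaps, of radius at least $\gamma_0\rho_*$, through which a robust vertical cut can be made, and Proposition \ref{prop-strip} guarantees that no eigenvalue escapes vertically, so that the contour need only be controlled inside a fixed horizontal strip.
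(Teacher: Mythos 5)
Your proposal is correct and follows essentially the same route as the paper: confine $\Sp(H_\a)$ to a region using Propositions \ref{prop-strip} and \ref{prop-sep-vp}, use Lemma \ref{lem-min-rho} to cut that region into bounded pieces through the wide eigenvalue-free annuli, invoke the constancy in $\a$ of the algebraic eigenvalue count in each bounded piece, and conclude by comparison with the eigenvalues already identified. The only cosmetic differences are that the paper works in the $\l$-plane with the connected components of its set $\O_r$ rather than with explicit rectangular contours in the $z$-plane, and that you spell out the geometric-versus-algebraic multiplicity sandwich that the paper leaves implicit when deducing (iv).
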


Propositions~\ref{prop-spectrum-Ha},~\ref{th-Weyl},~\ref{prop-eigenvalue-proche} and~\ref{prop-eigenvalue-Dir-alpha} are consequences of the previous proposition combined with Remark~\ref{rem-Weyl-H0} (for the proof of Proposition~\ref{th-Weyl}).

\begin{center}
\begin{figure}[h]
\includegraphics[width = 0.7 \linewidth]{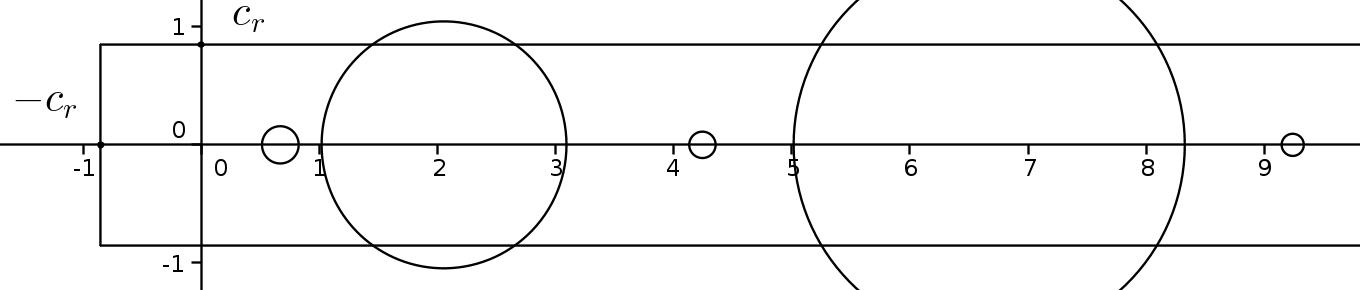}
\caption{The sets $\set{z^2 \, : \, \abs{z - \tau_n} = \g_0 \rho_n}$ for $\ell = (1,\sqrt 2,\pi)$. They are close to the disks centered at $\tau_n^2$ with radius $2\gamma_0 \tau_n \rho_n$. By Lemma \ref{lem-min-rho}, this radius is regularly greater than $c_r$.}
\label{fig-separation}
\end{figure}
\end{center}

\begin{proof}
Let $c_r$ be given by Proposition \ref{prop-strip}. We set 
\[
\O_r = \set{\l \in \C : \Re(\l) \geq -c_r, \abs{\Im(\l)} \leq c_r} \setminus  \bigcup_{n \in \Nc_r} \set{z^2 : \frac {8r \rho_n}{\tau_n} < \abs{z -\tau_n} < \gamma_0 \rho_n} .
\]
By Propositions \ref{prop-strip} and \ref{prop-sep-vp}, we have $\Sp(H_\a) \subset \O_r$ for all $\a \in D(r)$. By Lemma \ref{lem-min-rho}, all the connected components of $\O_r$ are bounded (see Figure \ref{fig-separation}). By the perturbation theory of Kato \cite{kato}, the number of eigenvalues of $H_\a$ (counted with algebraic multiplicities) in each of these connected components does not depend on $\a$. Choosing $n_r$ larger if necessary, we can assume that there exists a finite union $\tilde \O_r$ of connected components of $\O_r$ which contains $\tau_n^2$ if and only if $n < n_r$. Then, for all $\a \in D(r)$, $\tilde \O_r$ contains exactly $n_r-1$ eigenvalues of $H_\a$ (counted with multiplicities), which can be denoted by $z_n(\a)^2$, $1 \leq n < n_r$. The other connected components of $\O_r$ contain eigenvalues which have already been identified. If such a connected component contains $\tau_n^2$ for some $n \in \Nc$, $n \geq n_r$, it contains exactly one eigenvalue of $H_\a$, which we have denoted by $z_n(\a)^2$. We also know that for $n \notin \Nc$, the eigenvalue $\tau_n^2$ remains an eigenvalue of $H_\a$ with the same multiplicity. Since the number of eigenvalue in each connected components is fixed, there cannot be any other eigenvalue for $H_\a$ in $\O_r$. All the properties enumerated in the proposition then follow from Lemma \ref{l:multiplicity} and Proposition \ref{prop-sep-vp}.
\end{proof}

In this proof, it is important that all the connected components of $\O_r$ are bounded. Otherwise, even with the continuity of the spectrum, an eigenvalue could disappear (by escaping to infinity) or be created (by coming from infinity).

\section{Proofs of the main results} \label{sec-proofs}

Let $\a \in \C$. We have seen that for $n$ large enough, the eigenvalue $z_n(\a)^2$ of $H_\a$ is either equal to $\tau_n^2$, or close to $\tau_n^2$ in the sense of Proposition \ref{prop-sep-vp}. Now we prove that the difference $\d_n(\a)$ (see \eqref{def-delta}) is, at first order, a simple function of $\tau_n$, and we use the results of Section \ref{sec-ergodic} to prove Theorem \ref{t:maintheo2}. Finally, as a by-product of our analysis, we will prove Propositions \ref{t:maintheo3} and \ref{t:maintheo4}.

\subsection{The limit measure}

\begin{lemma} \label{lem-delta}
Let $r > 0$ and let $n_r \in\N^*$ be given by Proposition \ref{prop-sep-vp}. Then, for $n \in \Nc_r$, we have 
\[
\d_n(\a) = 2\a \rho_n + \mathop{\Oc}_{n \to +\infty} (n\inv), 
\]
where the rest is uniform with respect to $\a \in D(r)$.
\end{lemma}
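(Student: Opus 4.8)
The plan is to read off $z_n(\a)$ from the secular equation and Taylor-expand the holomorphic function $\psi_0$ at $\tau_n$. Since $z_n(\a)^2$ is an eigenvalue of $H_\a$ lying in $\C^*\setminus\Tc_\infty$, Proposition \ref{prop-sep-vp} gives $\psi_\a(z_n(\a))=0$, that is
$$
\psi_0(z_n(\a)) = \frac{\a}{z_n(\a)}.
$$
Recalling that $\psi_0(\tau_n)=0$ and $\psi_0'(\tau_n)=\rho_n^{-1}$, I would use the second-order Taylor formula
$$
\psi_0(z_n(\a)) = \frac{z_n(\a)-\tau_n}{\rho_n} + \frac{(z_n(\a)-\tau_n)^2}{2}\,\psi_0''(\xi_n),
$$
for some $\xi_n$ on the segment joining $\tau_n$ to $z_n(\a)$ inside $D(\tau_n,\gamma_0\rho_n)$.

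Next I would feed in the a priori estimates. By \eqref{estim-rn-alpha} we have $\abs{z_n(\a)-\tau_n}\leq 8r\rho_n/\tau_n$, by \eqref{estim-rau-n} we have $0<\rho_n\leq\abs\G^{-1}$, and by the Weyl Law \eqref{eq-Weyl-H0-2} the quantities $\tau_n$ and $n$ are comparable; hence $\abs{z_n(\a)-\tau_n}=\Oc(n^{-1})$, uniformly in $\a\in D(r)$. Combining the bound \eqref{e:bound-second-derivative} on $\psi_0''$ with the boundedness of $\rho_n$, the quadratic Taylor remainder is $\Oc(\rho_n^{1/2}\tau_n^{-2})=\Oc(n^{-2})$. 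Together with $z_n(\a)^{-1}=\tau_n^{-1}+\Oc(n^{-3})$, solving the expansion above for $z_n(\a)-\tau_n$ yields
$$
z_n(\a)-\tau_n = \frac{\a\rho_n}{\tau_n} + \Oc(n^{-2}).
$$

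Finally, writing $\d_n(\a)=z_n(\a)^2-\tau_n^2=(z_n(\a)-\tau_n)(z_n(\a)+\tau_n)$ and using $z_n(\a)+\tau_n=2\tau_n+\Oc(n^{-1})$, I would obtain
$$
\d_n(\a) = \Big(\frac{\a\rho_n}{\tau_n}+\Oc(n^{-2})\Big)\big(2\tau_n+\Oc(n^{-1})\big) = 2\a\rho_n + \Oc(n^{-1}),
$$
where the dominant error is the cross term $\Oc(n^{-2})\cdot 2\tau_n=\Oc(n^{-1})$. This is exactly the claimed asymptotics.

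The main point here is technical rather than conceptual: it is the uniform bookkeeping of the error terms. One must check that the quadratic Taylor remainder, which carries the negative power $\rho_n^{-3/2}$ coming from \eqref{e:bound-second-derivative}, is tamed by the factor $\abs{z_n(\a)-\tau_n}^2=\Oc(\rho_n^2\tau_n^{-2})$ that it multiplies; since $\rho_n$ is bounded, this produces an error $\Oc(n^{-2})$ in the equation for $z_n(\a)-\tau_n$, which after multiplication by $z_n(\a)+\tau_n=\Oc(\tau_n)$ gives precisely the announced $\Oc(n^{-1})$. Because the constants in \eqref{estim-rn-alpha}, \eqref{estim-rau-n} and \eqref{e:bound-second-derivative} are independent of $\a$, all these estimates hold uniformly for $\a\in D(r)$, as required.
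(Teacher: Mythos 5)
Your proposal is correct and follows essentially the same route as the paper: write $\eta_n(\a)=z_n(\a)-\tau_n$, use the secular equation $\psi_0(z_n(\a))=\a/z_n(\a)$ together with the second-order Taylor bound from \eqref{e:bound-second-derivative} and the a priori estimate \eqref{estim-rn-alpha} to get $\eta_n(\a)=\a\rho_n/\tau_n+\Oc(\tau_n^{-2})$, then expand $\d_n(\a)=2\tau_n\eta_n(\a)+\eta_n(\a)^2$ and invoke the Weyl law \eqref{eq-Weyl-H0-2}. The only cosmetic difference is your mean-value phrasing of the Taylor remainder (``for some $\xi_n$ on the segment''), which should be replaced for a holomorphic function by the integral form bounded by the supremum of $\abs{\psi_0''}$ on the disk, exactly as you in fact do in the next line.
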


This asymptotic expression will allow us to relate the distribution of $\delta_n(\alpha)$ for large $n$ to the Barra-Gaspard measure, as in~\cite[Th.~3]{KeatingMarWin03} where the asymptotic distribution of $\psi_0'(\tau_n)=\rho_n^{-1}$ on compact intervals of $\R$ was described for $\Lambda_\ell=\{0\}$ (in that case, $\mathcal{N}=\N^*$).

\begin{proof}
For $\a \in D(r)$ and $n \in \Nc_r$, we set 
\[
\eta_n(\a) = z_n(a) - \tau_n.
\]
By \eqref{estim-rn-alpha} and \eqref{estim-rau-n}, we have 
\[
\eta_n(\a) = \Oc(\rho_n \tau_n^{-1}) = \Oc(\tau_n^{-1}).
\]
Then, by \eqref{e:bound-second-derivative} and \eqref{estim-rn-alpha},
$$
\abs{\psi_0(z_n(\alpha)) - \eta_n(\a) \psi_0'(\tau_n)} \leq \frac {\abs{\eta_n(\a)}^2}2 \sup_{\abs{z-\tau_n} \leq \abs{\eta_n(\a)}} \abs{\psi_0''(z)} \lesssim \frac {\rho_n^{1/2}}{\tau_n^2} \lesssim \frac 1 {\tau_n^2}.
$$
We have
$$
0 = \psi_0 \big( \tau_n+ \eta_n(\a) \big) - \frac{\alpha}{\tau_n+ \eta_n(\a)} = \eta_n(\a) \psi_0'(\tau_n) - \frac \a {\tau_n} + \Oc(\tau_n^{-2}).
$$
Since $\rho_n$ is bounded, this gives 
\begin{equation} \label{eq-eta-n}
\eta_n(\a) = \frac {\a \rho_n}{\tau_n} + \Oc(\tau_n^{-2}),
\end{equation}
and then 
\begin{equation} \label{eq-delta-n}
\d_n(\a) = 2 \tau_n \eta_n(\a) + \eta_n(\a)^2 = 2\a \rho_n + \Oc(\tau_n^{-1}).
\end{equation}
We conclude with \eqref{eq-Weyl-H0-2}.
\end{proof}

We consider on $\Zc$ the continuous function defined by
\[
\Phi (y) = \frac {2} {\nabla_y \Psi \cdot \ell} = \frac{2}{\sum_{j=1}^N\ell_j(1+\cotan^2(y_j))}, \qquad  \text{if} \quad y \in \Zc_\emptyset,
\]
and $\Phi(y) = 0$ otherwise.

\begin{proposition}
The first statement of Theorem \ref{t:maintheo2} holds if we set 
\begin{equation} \label{def-mu-ell}
\mu_\ell =  \frac \pi {\abs \G \abs{\T_\ell}}  \left(\Phi_* \m_{\BG,\emptyset}  + \d_0 \sum_{J \in \Jc_\ell, J \neq \emptyset} \m_{\BG,J}(\Zc_J) \right),
\end{equation}
where $\Phi_* \m_{\BG,\emptyset}$ is the pushforward of the measure $\m_{\BG,\emptyset}$ by $\Phi$.
\end{proposition}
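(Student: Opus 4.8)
The plan is to combine the pointwise asymptotics of Lemma~\ref{lem-delta} with the Barra--Gaspard averaging of Proposition~\ref{prop-BG}, after recognizing that the quantity $2\rho_n$ appearing in Lemma~\ref{lem-delta} is precisely the continuous function $\Phi$ evaluated along the orbit $\f_\ell^{\tau_n}(0) = \tau_n \ell \bmod 2\pi$. First I would record the comparison
$$
\d_n(\a) = \a\, \Phi\big(\f_\ell^{\tau_n}(0)\big) + \Oc(n^{-1}), \qquad n \to +\infty,
$$
uniformly for $\a \in D(r)$. Indeed, $\f_\ell^{\tau_n}(0) \in \Zc$ for every $n$, and $\f_\ell^{\tau_n}(0) \in \Zc_\emptyset$ exactly when $n \in \Nc$; in that case $\Phi(\f_\ell^{\tau_n}(0)) = 2/(\nabla_{\tau_n\ell}\Psi \cdot \ell) = 2\rho_n$, so the comparison is exactly the content of Lemma~\ref{lem-delta}. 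When $n \geq n_r$ and $n \notin \Nc$, Proposition~\ref{prop-zna}(i) gives $z_n(\a) = \tau_n$, whence $\d_n(\a) = 0$, while $\Phi$ vanishes on the lower strata; both sides are then $0$ and the comparison holds trivially.

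Next I would pass to the empirical average. The sequence $(\d_n(\a))_n$ is bounded (by Proposition~\ref{prop-eigenvalue-proche} together with the Weyl Law \eqref{eq-Weyl-H0}), and $\a\Phi(\f_\ell^{\tau_n}(0))$ lies in $\set{\a s : s \in [0,2\abs\G\inv]}$, so all the relevant values of $f$ belong to a fixed compact set on which $f$ is uniformly continuous. Combined with the $\Oc(n^{-1})$ error above, this gives $\abs{f(\d_n(\a)) - f(\a\Phi(\f_\ell^{\tau_n}(0)))} \limt n {+\infty} 0$, hence by Ces\`aro summation, setting $g(y)=f(\a\Phi(y))$,
$$
\frac 1 n \sum_{k=1}^n f(\d_k(\a)) - \frac 1 n \sum_{k=1}^n g\big(\f_\ell^{\tau_k}(0)\big) \limt n {+\infty} 0.
$$
Since $\Phi$ is continuous on $\Zc$ (it extends by $0$ across the strata where some $\cotan^2$ blows up), the function $g$ is continuous on $\Zc$ and Proposition~\ref{prop-BG} applies, yielding
$$
\frac 1 n \sum_{k=1}^n g\big(\f_\ell^{\tau_k}(0)\big) \limt n {+\infty} \frac{\pi}{\abs\G\abs{\T_\ell}} \int_{\Zc} f(\a\Phi) \, d\m_{\BG}.
$$

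Finally I would decompose the integral along the strata using $\m_{\BG} = \sum_{J \in \Jc_\ell} \m_{\BG,J}$. On $\Zc_\emptyset$ the pushforward change of variables gives $\int_{\Zc_\emptyset} f(\a\Phi)\, d\m_{\BG,\emptyset} = \int f(\a s)\, d(\Phi_*\m_{\BG,\emptyset})(s)$, while on each $\Zc_J$ with $J \neq \emptyset$ one has $\Phi \equiv 0$, so $\int_{\Zc_J} f(\a\Phi)\, d\m_{\BG,J} = f(0)\, \m_{\BG,J}(\Zc_J)$. Reassembling, the limit equals $\int f(\a s)\, d\mu_\ell(s)$ with $\mu_\ell$ given by \eqref{def-mu-ell}. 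That $\mu_\ell$ is a probability measure follows from $\m_{\BG}(\Zc) = \abs\G\abs{\T_\ell}/\pi$ (equation \eqref{eq-measure-BG}), and its support lies in $[0,2\abs\G\inv]$ because $0 < \Phi \leq 2\abs\G\inv$ on $\Zc_\emptyset$ (using $\cotan^2 \geq 0$ and $\sum_j \ell_j = \abs\G$) together with the Dirac mass at $0$. The only genuinely delicate point is the reduction in the second step: one must ensure that the $\Oc(n^{-1})$ error, after composition with $f$ and Ces\`aro averaging, truly washes out --- this is where boundedness of $(\d_n(\a))_n$ and uniform continuity of $f$ on a compact set are essential --- and that $g$ is continuous up to and including the negligible lower strata, so that Proposition~\ref{prop-BG} is legitimately applicable.
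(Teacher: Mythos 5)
Your proof is correct and follows essentially the same route as the paper: compare $\d_n(\a)$ with $\a\Phi(\f_\ell^{\tau_n}(0))$ via Lemma~\ref{lem-delta}, apply Proposition~\ref{prop-BG} to $f\circ(\a\Phi)$, and split the Barra--Gaspard measure along the strata. You are in fact slightly more careful than the paper on two points it leaves implicit --- the case $n\notin\Nc$ (where both sides vanish) and the justification that the $\Oc(n^{-1})$ error survives composition with $f$ and Ces\`aro averaging via boundedness and uniform continuity.
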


\begin{proof}
We first observe that the image of $\Phi$ is included in $[0,2\abs \G\inv]$ so, by \eqref{eq-measure-BG}, the right-hand side of \eqref{def-mu-ell} defines a probability measure on $[0,2\abs \G\inv]$.

Let $\a \in \C$. Let $f$ be a continuous function on $\C$. By Lemma \ref{lem-delta} we have 
\[
\frac 1 n \sum_{k=1}^n f(\d_n(\a)) = \frac 1 n \sum_{k=1}^n f(\a \Phi(\f_\ell^{\tau_n}(0))) + \mathop{\Oc}_{n \to +\infty} (n\inv).
\]
Then, by Proposition \ref{prop-BG} applied to $g = f \circ (\a \Phi)$, 
\[
\frac 1 n \sum_{k=1}^n f(\d_n(\a)) \limt n {+\infty} \frac \pi {\abs \G \abs{\T_\ell}} \left( \int_{\Zc_\emptyset} (f \circ (\a \Phi)) \, d\m_{\BG,\emptyset} + f(0) \sum_{J \in \Jc_\ell, J \neq \emptyset} \m_{\BG,J}(\Zc_J) \right).
\]
The proposition follows.
\end{proof}

We observe that if $\ell_j / \ell_1 \in \Q$ for all $j \in \NN$ then $\L_\ell^\perp$ is of dimension 1. So, by Lemma \ref{lem-Zc-manifold}, $\Zc$ consists of isolated points, and $\Phi$ only takes a finite number of values (including 0, at the point 0). This implies that $\mu_\ell$ is a finite sum of Dirac masses, including the Dirac mass at 0.

To conclude the proof of Theorem \ref{t:maintheo2}, it remains to prove the last statement. It is given by the following proposition.

\begin{proposition}\label{prop:support-measure} Suppose that $\Lambda_\ell=\{0\}$. Then the measure $\mu_\ell$ is absolutely continuous with respect to the Lebesgue measure and its support is the interval $[0,2|\Gamma|^{-1}]$.
\end{proposition}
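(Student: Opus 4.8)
The plan is to reduce the statement to a concrete pushforward computation, establish absolute continuity through an analyticity argument, and identify the support by computing the range of $\Phi$. First I would unwind the definition \eqref{def-mu-ell} under the hypothesis $\Lambda_\ell=\{0\}$. In that case $\Lambda_\ell^\perp=\R^N$, so $\T_\ell=\T^N$ and $N_\ell=N$. By Lemma~\ref{lem-Zc-manifold}, for $J\subset\NN$ with $\Card J\geq 2$ one has $\dim(\Zc_J)=N-\Card J\leq N-2<N_\ell-1$, while $\Zc_J=\emptyset$ when $\Card J=1$; hence $\Jc_\ell=\set{\emptyset}$ and the Dirac part of \eqref{def-mu-ell} disappears. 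Thus $\mu_\ell=\frac{\pi}{\abs\G\abs{\T_\ell}}\,\Phi_*\m_{\BG,\emptyset}$, where (recall $m_\emptyset=1$) $\m_{\BG,\emptyset}=\abs{\ell\cdot\nu}\,\m_\emptyset$ is carried by the analytic hypersurface $\Zc_\emptyset=\set{y:\Psi(y)=0}$, with density $\abs{\ell\cdot\nu}$ that is smooth and strictly positive by transversality (Lemma~\ref{lem-Zc-manifold}). So everything reduces to analysing how $\Phi$ pushes forward a measure that is absolutely continuous, with positive smooth density, with respect to the surface measure on $\Zc_\emptyset$.

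Next I would introduce the coordinates $c_j=\cotan(y_j)$. On each of the $2^N$ connected components of the complement in $\T^N$ of the walls $\set{y_j\in\pi\Z}$, the map $y\mapsto(c_1,\dots,c_N)$ is a real-analytic diffeomorphism onto $\R^N$. Since $\Zc_\emptyset$ avoids these walls, each of its connected components is the preimage of the hyperplane $\set{\sum_j c_j=0}$, and there $\Phi(y)=2\big(\abs\G+\sum_{j=1}^N\ell_j c_j^2\big)^{-1}$. The restriction of $\sum_j\ell_j c_j^2$ to $\set{\sum_j c_j=0}$ is a non-constant positive quadratic form (test it along $c=(t,-t,0,\dots,0)$, using $N\geq 2$), so $\Phi$ is a \emph{non-constant} real-analytic function on each connected component of $\Zc_\emptyset$.

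This is the key to absolute continuity. I would split $\Zc_\emptyset=C\sqcup R$ into the critical set $C=\set{d(\Phi|_{\Zc_\emptyset})=0}$ and its complement $R$. On each component $\Phi$ is real-analytic and non-constant, so $C$ is a proper analytic subset and therefore $\m_\emptyset(C)=0$, whence $\m_{\BG,\emptyset}(C)=0$. On $R$ the function $\Phi$ is a submersion onto $\R$, so in local coordinates adapted to $\Phi$ one has, by Fubini, $\m_{\BG,\emptyset}(\Phi^{-1}(E)\cap R)=0$ for every Lebesgue-null $E\subset[0,2\abs\G\inv]$ (the density being locally bounded). Combining the two, $\m_{\BG,\emptyset}(\Phi^{-1}(E))=0$ for every null $E$, i.e. $\Phi_*\m_{\BG,\emptyset}$, and hence $\mu_\ell$, is absolutely continuous (equivalently, the coarea formula provides an explicit density on the regular values, while Sard's theorem together with $\m_{\BG,\emptyset}(C)=0$ excludes any singular part). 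I expect this step — ruling out a singular part by showing the critical set is $\m_{\BG,\emptyset}$-negligible — to be the main obstacle; it is exactly where the real-analyticity of $\Phi|_{\Zc_\emptyset}$, rather than mere smoothness, is needed.

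Finally, for the support I would compute the range of $\Phi$. Since $\sum_j\ell_j c_j^2\geq 0$, one has $\Phi\leq 2\abs\G\inv$, with equality at $c=0$, i.e. at $y=(\pi/2,\dots,\pi/2)\in\Zc_\emptyset$; and $\Phi\to 0$ as $\sum_j\ell_j c_j^2\to+\infty$. By continuity and connectedness, $\Phi(\Zc_\emptyset)=(0,2\abs\G\inv]$. As $\m_{\BG,\emptyset}$ has positive density on the whole hypersurface $\Zc_\emptyset$, it assigns positive mass to every nonempty open subset, so the support of the pushforward is $\overline{\Phi(\Zc_\emptyset)}=[0,2\abs\G\inv]$, as claimed.
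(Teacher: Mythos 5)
Your proof is correct and follows essentially the same route as the paper's: reduce to the pushforward of $\m_{\BG,\emptyset}$ by $\Phi$, show that the critical set of $\Phi|_{\Zc_\emptyset}$ is negligible (the paper computes it explicitly as the points with $y_j=\pi/2 \bmod \pi$, which is exactly what your quadratic form $q(c)=\sum_j\ell_j c_j^2$ yields at $c=0$), and conclude via the coarea formula together with the positivity of $\abs{\ell\cdot\nu}$ on $\Zc_\emptyset$ for the support. Your passage to the coordinates $c_j=\cotan(y_j)$ and the appeal to real-analyticity for the critical set are pleasant simplifications of the paper's direct gradient computation, but the argument is the same.
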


\begin{proof} 
We first observe that when $N_\ell = N$ we have $\Jc_\ell = \set \emptyset$. Then, for $f \in C_c^0(\R)$, we have 
$$
\int_\R f \, d\mu_\ell = \frac{\pi}{|\Gamma|(2\pi)^N}\int_{\mathcal{Z}_{\emptyset}} (f \circ \Phi) \, d\mu_{\BG,\emptyset}.
$$

The image of $\Phi$ is included in $[0,2\abs \G\inv]$, and considering the point $\big(\th,-\th, \frac \pi 2, \dots, \frac \pi 2\big)$ (modulo $2\pi$) for $\th$ going from $\frac \pi 2$ to $0$, we see that the image of $\Phi$ is in fact exactly $[0,2\abs \G\inv]$.

We extend $\Phi$ to a map $\tilde \Phi : y \mapsto 2 (\nabla_y \Psi \cdot \ell) \inv$ on an open neighborhood of $\Zc_\emptyset$ in $\T^N$. For $y \in \Zc_\emptyset$ we have 
$$
\nabla_y\tilde \Phi
=\Phi(y)^2 \left(\frac{\ell_1\cos(y_1)}{\sin^3(y_1)},\ldots,\frac{\ell_N\cos(y_N)}{\sin^3(y_N)}\right)
$$
and
$$
\nabla_y\Psi=\left(\frac{1}{\sin^2(y_1)},\ldots,\frac{1}{\sin^2(y_N)}\right).
$$
If these two vectors are colinear we have 
\[
\ell_1 \cotan(y_1) = \dots = \ell_N \cotan(y_N).
\]
Since $\Psi(y) = 0$, this only happens when $y_j=\frac{\pi}{2}\ \text{mod}\ \pi$ for all $j \in \NN$. Then, for $t\in (0,2|\Gamma|^{-1})$, the subset $\Phi^{-1}(\set t)$ is a smooth submanifold of codimension $1$ in $\mathcal{Z}_{\emptyset}$.

Let $f \in \mathcal{C}^\infty_c((0,2|\Gamma|^{-1}))$. Applying the co-area formula (see \cite{Federer}), we can write
$$\int_{\mathcal{Z}_{\emptyset}}
 (f\circ \Phi) \, d\mu_{\BG,\emptyset}
 =\int_0^{2|\Gamma|^{-1}}f(t)
 \left(\int_{\Phi^{-1}(\set t)} \frac{|\nu(y) \cdot \ell|}{\|\nabla \Phi \|} \, d\mu_{\Phi^{-1}(\set t)}\right)dt,
$$
where $\mu_{\Phi^{-1}(\set t)}$ is the Lebesgue measure on $\Phi\inv(\set t)$.
Hence, the measure $\mu_\ell$ is absolutely continuous with respect to the Lebesgue measure 
on the interval $(0,2|\Gamma|^{-1})$. We can also note that its density is positive almost everywhere 
since $\abs{\nu(y)\cdot\ell} > 0$ on $\mathcal{Z}_{\emptyset}$ and $\Phi\inv(\set t)$ has positive measure for all $t \in ]0,\abs \Gamma\inv[$. Since $\Phi$ does not vanish on $\Zc_\emptyset$, the measure $\mu_\ell$ does not put any mass at $t=0$, and $\Phi^{-1}(\{2|\Gamma|^{-1}\})$ is reduced to $2^N$ points so $\mu_{\BG,\emptyset}\left(\Phi^{-1}(\{|\Gamma|^{-1}\})\right)=0$.
\end{proof}

\begin{remark}
More generally, we can consider the case where $1<N_\ell \leq N$. We denote by $P_\ell$ the 
orthogonal projection of $\R^N$ on the vector space $\Lambda_\ell^\perp.$ Then, we write down the partition
$$\mathcal{Z}_{\emptyset}=\mathcal{Z}_{\emptyset}^{\text{reg}}\sqcup \mathcal{Z}_{\emptyset}^{\text{sing}},$$
where
$$\mathcal{Z}_{\emptyset}^{\text{reg}}=\{y\in\mathcal{Z}_{\emptyset}:\text{rk}(P_\ell\nabla_y\Psi,
P_\ell\nabla_y\tilde{\Phi})=2\},$$
and
$$\mathcal{Z}_{\emptyset}^{\text{sing}}=\{y\in\mathcal{Z}_{\emptyset}:\text{rk}(P_\ell\nabla_y\Psi,
P_\ell\nabla_y\tilde{\Phi})=1\}.$$
We can decompose the integral following this partition of $\mathcal{Z}_{\emptyset}$:
$$
\int_{\mathcal{Z}_{\emptyset}} (f\circ\Phi) \, d\mu_{\BG,\emptyset}=\int_{\mathcal{Z}_{\emptyset}^{\text{reg}}} (f\circ\Phi) \, d\mu_{\BG,\emptyset}+\int_{\mathcal{Z}_{\emptyset}^{\text{sing}}} (f\circ\Phi) \, d\mu_{\BG,\emptyset}.
$$
To the first term, we can again apply the co-area formula
$$
\int_{\mathcal{Z}_{\emptyset}^{\text{reg}}} (f\circ\Phi) \, d\mu_{\BG,\emptyset} = \int_0^{2|\Gamma|^{-1}} f(t)
 \left(\int_{\Phi^{-1}(\set t)\cap\mathcal{Z}_{\emptyset}^{\text{reg}}}\frac{|\nu(y) \cdot \ell|}{\|\nabla\Phi\|} \, d\m_{\Phi^{-1}(\set t)}\right)dt,$$
which gives as above a contribution to the absolutely continuous part of the the measure $\mu_\ell$. 
In particular, if $$\m_{\mathcal{Z}_{\emptyset}}(\mathcal{Z}_{\emptyset}^{\text{sing}})=0,$$ then 
the limit measure is a linear combination of a Dirac mass at $0$ and of an absolutely continuous measure.
\end{remark}

\subsection{Rates of convergence}

In this paragraph we prove Propositions \ref{t:maintheo3} and \ref{t:maintheo4}. %We begin with the proof of Proposition \ref{t:maintheo3}. 
% That some $s \in [0,\abs \G\inv]$ belongs to the support of $\mu_\ell$ means in particular that there exists an increasing sequence $(n_k)_{k \in\N^*}$ such that for any $\a \in \C$ we have
% \[
% \d_{n_k} (\a) \limt {k}{+\infty} s\a,
% \]
% but it does not say anything about the rates of convergence.

\begin{proof}[Proof of Proposition \ref{t:maintheo3}]
We already know that if two lengths are commensurable then $\lambda_n(\alpha)=\lambda_n(0)$ (and hence $\d_n(\a) = 0$) for an 
infinite number of $n \in \N^*$. Now assume that, for instance, $\ell_1/\ell_2$ does not belong to $\mathbb{Q}$. 

We have seen in Proposition \ref{prop-sep-vp} that $\d_n(\a)$ is small when $\rho_n$ is small. This happens when $\tau_n$ is close to $\Tc_\infty$. By the Dirichlet Approximation Theorem, there exists a sequence $((p_k,q_k))_{k \in \N}$ in $\IZ_+^2$ such that $q_k$ goes to $+\infty$ and, for all $k \in \N$,
$$
\left|\frac{\ell_1}{\ell_2}-\frac{p_k}{q_k}\right|\leq\frac{1}{q_k^2}.
$$
Thus, one has
$$
\left|\frac{\pi q_k}{\ell_2}-\frac{\pi p_k}{\ell_1}\right|\leq\frac{\pi}{q_k\ell_1}.
$$
Since $\pi q_k/\ell_1$ and $\pi p_k / \ell_2$ belong to $\Tc_\infty$, there exists $n_k \in \N^*$ such that $\tau_{n_k}$ is between them (see Remark \ref{rem-Weyl-H0}). In particular,
\[
\abs{\tau_{n_k} \ell_2 - \pi q_k} = \ell_2 \abs{\tau_{n_k}- \frac {\pi q_k}{\ell_2}} \leq \frac {\pi \ell_2} {q_k \ell_1},
\]
so, by Lemma \ref{lem-g},
\[
\rho_{n_k} \lesssim \sin(\tau_{n_k} \ell_2)^2 \lesssim \frac 1 {q_k^2} \lesssim \frac 1 {\tau_{n_k}^2}.
\]
By Proposition~\ref{prop-sep-vp}, there exists an eigenvalue $\lambda_{n_k}(\alpha)=z_{n_k}(\alpha)^2$ of $H_\alpha$ such that 
$$
z_{n_k}(\a)\in D \left(\tau_{n_k} , \frac {8|\alpha| \rho_{n_k}}{\tau_{n_k}} \right).
$$ 
Then
\[
\abs{\d_{n_k}(\a)} \lesssim \tau_{n_k} \abs{z_{n_k}(\a) - \tau_{n_k}} \lesssim \frac 1 {\tau_{n_k}^2},
\]
and the conclusion follows.
\end{proof}

Let $\gamma>0$, we say that $\ell$ is $\gamma$-Diophantine (see \cite{Poschel01}) if there exists $C > 0$ such that, for every $\kappa\in\IZ^N\setminus\{0\}$,
\begin{equation}
\label{e:diophantine}|\kappa \cdot \ell|\geq C\| \kappa \|^{-\gamma},
\end{equation}
where $\| \cdot \|$ is the Euclidean norm on $\IR^N$. In particular, $\k \cdot \ell \neq 0$ for all $\k \in \Z \setminus \set 0$ in this case. According to~\cite{Poschel01}, the set 
of $\gamma$-Diophantine vectors has full Lebesgue measure for $\gamma>N-1$ and is empty when $\gamma<N-1$. The Diophantine properties of $\ell$ 
allow to be more precise on the unique ergodicity of the flow $\varphi^t_{\ell}$:

\begin{lemma}\label{l:quantitative-birkhoff}
 Suppose that $\ell$ is $\gamma$-Diophantine for some $\gamma>0$. Then there exists $C_1>0$ such that
 for $a\in\mathcal{C}^{\infty}(\IT^N)$, $T>0$ and $x\in\IT^N$ one has
 $$\left|\frac{1}{T}\int_0^T a\circ\varphi_\ell^t(x)dt - \frac{1}{(2\pi)^N} \int_{\IT^N} a(y) \, dy\right|
 \leq\frac{C_1}{T}\sum_{\kappa \in\IZ^N \setminus \set 0}\|\kappa \|^{\gamma}\left|\frac{1}{(2\pi)^N} \int_{\IT^N}a(y)e^{-i \kappa \cdot y}\, dy \right|.$$
\end{lemma}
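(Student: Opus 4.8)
The plan is to prove this quantitative ergodic estimate by expanding the observable $a$ in its Fourier series on $\IT^N$ and estimating each nonzero frequency contribution separately, using the Diophantine lower bound on $\kappa\cdot\ell$ to control the relevant oscillatory integrals in $t$. First I would write the Fourier decomposition $a(y)=\sum_{\kappa\in\IZ^N}\hat a(\kappa)\,e^{i\kappa\cdot y}$, where $\hat a(\kappa)=(2\pi)^{-N}\int_{\IT^N}a(y)e^{-i\kappa\cdot y}\,dy$, and observe that the $\kappa=0$ term is precisely the spatial average $(2\pi)^{-N}\int_{\IT^N}a$. Thus the difference on the left-hand side equals
\[
\frac1T\int_0^T\!\!\sum_{\kappa\neq 0}\hat a(\kappa)\,e^{i\kappa\cdot(x+t\ell)}\,dt
=\sum_{\kappa\neq 0}\hat a(\kappa)\,e^{i\kappa\cdot x}\,\frac1T\int_0^T e^{it(\kappa\cdot\ell)}\,dt.
\]
(The interchange of sum and integral is justified by the rapid decay of $\hat a(\kappa)$ for $a\in\mathcal{C}^\infty(\IT^N)$.)

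The key computation is the elementary bound on the time-averaged exponential: for $\kappa\neq 0$ the Diophantine hypothesis \eqref{e:diophantine} guarantees $\kappa\cdot\ell\neq 0$, so
\[
\left|\frac1T\int_0^T e^{it(\kappa\cdot\ell)}\,dt\right|
=\left|\frac{e^{iT(\kappa\cdot\ell)}-1}{iT(\kappa\cdot\ell)}\right|
\leq\frac{2}{T\,\abs{\kappa\cdot\ell}}
\leq\frac{2}{C\,T}\,\nr{\kappa}^{\gamma}.
\]
Combining this with the triangle inequality applied to the Fourier expansion above, and using $\abs{e^{i\kappa\cdot x}}=1$, I obtain
\[
\left|\frac1T\int_0^T a\circ\varphi_\ell^t(x)\,dt-\frac{1}{(2\pi)^N}\int_{\IT^N}a\right|
\leq\frac{2}{C\,T}\sum_{\kappa\neq 0}\nr{\kappa}^{\gamma}\,\abs{\hat a(\kappa)},
\]
which is exactly the claimed inequality with $C_1=2/C$.

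The only genuine point requiring care — rather than an outright obstacle — is the convergence of the series $\sum_{\kappa\neq 0}\nr{\kappa}^{\gamma}\abs{\hat a(\kappa)}$ and the legitimacy of swapping the finite integral $\int_0^T$ with the infinite Fourier sum. Both follow from the smoothness of $a$: for $a\in\mathcal{C}^\infty(\IT^N)$ the Fourier coefficients satisfy $\abs{\hat a(\kappa)}=\mathcal{O}(\nr{\kappa}^{-M})$ for every $M$, so the weighted sum converges absolutely and dominated convergence (or uniform convergence of the partial sums) justifies the interchange. I would note in passing that although the right-hand side of the stated inequality is written with the full weighted sum rather than with the explicit constant $2/C$, these agree up to the factor $C_1$, and the estimate is most naturally stated as above since it isolates the Diophantine constant cleanly.
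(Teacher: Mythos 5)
Your argument is correct and is exactly the one the paper intends: the paper's proof is the single sentence ``The proof is immediate once we have decomposed $a$ in Fourier series,'' and your Fourier expansion together with the bound $\bigl|\frac{1}{T}\int_0^T e^{it(\kappa\cdot\ell)}\,dt\bigr|\leq \frac{2}{T|\kappa\cdot\ell|}\leq \frac{2}{CT}\|\kappa\|^{\gamma}$ fills in precisely those details. No discrepancy with the paper's approach.
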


\begin{proof} The proof is immediate once we have decomposed $a$ in Fourier series. 
\end{proof}

We now proceed to the proof of Proposition~\ref{t:maintheo4}.

\begin{proof} [Proof of Proposition \ref{t:maintheo4}]
Let $\gamma \in ]N-1,N[$. Let $\O\subset (\R_+^*)^N$ be of full Lebesgue measure and such that any vector in $\O$ is $\gamma$-Diophantine. We assume that $\ell$ belongs to $\O$.
Let $\sigma \in ]0,2|\Gamma|^{-1}]$ and let $y_0$ in $\Zc_\emptyset$ be such that 
\begin{equation}%\label{e:x0}
\Phi(y_0) = \sigma.
\end{equation}
Let $\h \in C_0^\infty(\R^N,[0,1])$ be supported in the unit ball and be equal to $1$ on a small neighborhood of 0. For $y \in \R^N$ and $r > 0$ we set 
\[
\h_r(y) = \sum_{\k \in \Z^N} \h \left( \frac {y-y_0 + 2\k \pi}{r} \right).
\]
Here we have identified $y_0$ with a point in $\R^N$, and then $\h_r$ can be seen as a function on $\T^N$. Let $\k \in \Z^N \setminus \set 0$ and $j \in \NN$ such that $\abs{\k_j} = \max_{1\leq k \leq N} \abs{\k_k}$. Then, for $p \in \N$, we have 
\begin{equation} \label{estim-reste-IPP}
\abs{\int_{\T^N} \h_r(y) e^{-i\k\cdot y} \frac {dy}{(2\pi)^N}} = \frac 1 {\k_j^p} \abs{\int_{\T^N} \partial_{y_j}^p  \h_r(y)  e^{-i\k\cdot y} \frac {dy}{(2\pi)^N}} \leq C r^{N-p} \nr{\k}^{-p},
\end{equation}
where $C$ only depends on $\h$ and $p$. Let $T > 0$ and $r \in (0,1]$. We apply Lemma \ref{l:quantitative-birkhoff} with $a = \h_r$ and $x = \f_\ell^T(0)$, and \eqref{estim-reste-IPP} with $p = 2N$. This gives 
$$\left|\int_{T}^{2T} (\chi_{r}\circ\varphi_\ell^\tau) (0) \, d\tau - r^N T \int_{\IT^N}\chi_{1}(y)\frac{dy}{(2\pi)^N}\right|
 \leq Cr^{-N}.$$
Let $\e > 0$. For $T \geq 1$ we set 
 \[
 r_T = T^{-\frac 1 {2N} + 2\e}.
 \]
Then
$$
\int_{T}^{2T} (\chi_{r_T}\circ\varphi_\ell^\tau)(0) \, d\tau \mathop{\sim}_{T \to +\infty}  r_T^N T\int_{\IT^N}\chi_{1}(y)\frac{dy}{(2\pi)^N}.
$$
In particular, for $T$ large enough the left-hand side is positive, so there exists $\tau \in [T,2T]$ such that $\h_{r_T}(\f_\ell^\tau(0)) \neq 0$, and in particular $\abs{\f_\ell^\tau(0) - y_0} \leq r_T$. Now, as $\ell$ is transverse to $\Zc_\emptyset$, there exists $c > 0$ independant of $T$ large enough and $n_T \in \N^*$ such that $\tau_{n_T} \in \big[\frac T 2, 3T\big]$ and $\abs{\f_\ell^{\tau_{n_T}}(0) - y_0} \leq c r_T$. By \eqref{eq-delta-n} and as $\Phi$ is of class $\mathcal{C}^1$ near $y_0$, we obtain
\[
\d_{n_T}(\a) = \frac {2\a}{\psi_0'(\tau_{n_T})} + \Oc(\tau_{n_T}\inv) = \a\Phi(y_0) + \Oc(r_T) + \Oc(\tau_{n_T}\inv) = \a \sigma + \Oc(\tau_{n_T}^{-\frac 1 {2N} + 2\e}).
\]
This concludes the proof.
\end{proof}

\end{document}